\newtheorem{example}{Example}
\newtheorem{theorem}{Theorem}
\newtheorem{definition}{Definition}
\newtheorem{lemma}{Lemma}
\newtheorem{proposition}{Proposition}
\newtheorem{corollary}{Corollary}
\newtheorem{remark}{Remark}
\newcommand{\KripkeStructureTI}{\K_{\mathit{TI}}\xspace}
\newcommand\ie{i.e.,\xspace} 
\def \ifempty#1{\def\temp{#1} \ifx\temp\empty }
\NewDocumentCommand{\advice}{om}{%
\par {\color{OliveGreen} \textbf{Advice:} #2  
\ifempty{#1} {} \else {\bf In this case: #1} \fi 
} \par
}
\newsavebox\VerbTIPcommandforconsistentformatting
\newcommand\uriname[1]{\texttt{#1}\xspace}
\newsavebox\VerbTIPnobracketsarounddefinitionsWRONG
 \newcommand{\leqp}{{\leq_p}}
\newsavebox\VerbTIPnobracketsarounddefinitionsRIGHT
 \newcommand{\leqp}{\leq_p}
\providecommand\m[1]{\ensuremath{#1}\xspace}
\renewcommand{\m}[1]{\ensuremath{#1}\xspace}
\newcommand{\trval}[1]{\m{\mathbf{#1}}}
	\newcommand{\limplies}{\Rightarrow}
	\newcommand{\lequiv}{\Leftrightarrow}
	\newcommand{\lrule}{\leftarrow}
	\newcommand{\cause}{\stackrel{c}{\lrule}}
	\newcommand{\ltrue}{\trval{t}}
	\newcommand{\lfalse}{\trval{f}}
	\newcommand{\lunkn}{\trval{u}}
	\newcommand{\false}{\m{\bot}}
	\newcommand{\Tr}{\ltrue}
	\newcommand{\Fa}{\lfalse}
	\newcommand{\Un}{\lunkn}
	\newcommand{\voc}{\m{\Sigma}}
	\newcommand{\struct}{\m{I}}
	\newcommand{\I}{\m{\mathcal{I}}}
	\newcommand{\WW}{\m{\mathcal{W}}}
	\newcommand{\f}{\m{\varphi}}
	\NewDocumentCommand\inter{g+g}{%
	  \IfNoValueTF{#1}
	    {\struct}
	    {\m{#1^{#2}}}}
	\renewcommand{\int}{\m{\mathbb{Z}}}
	\newcommand{\leqp}{\m{\leq_p}}
	\newcommand{\geqp}{\m{\geq_p}}
	\newcommand{\leqt}{\m{\leq_t}}
	\DeclareMathOperator\glb{glb}
	\DeclareMathOperator\lub{lub}
	\NewDocumentCommand\subs{g+g}{%
	  \IfNoValueTF{#1}
	    {\m{/}}
	    {\m{#1/ #2}}}
\newcommand{\ouracronym}[3]{%
	\newacronym{#1}{#2}{#3}
	\expandafter\newcommand\csname #1\endcsname{\gls{#1}\xspace}%
}
	\def\ifenv#1{
	\def\@tempa{#1}%
	\def\@ttempa{#1*}%
	\ifx\@tempa\@currenvir
	\expandafter\@firstoftwo
	\else
	\expandafter\@secondoftwo
	\fi
	}
	\newcommand{\ddrule}[4]{\ensuremath{#1 \leftarrow #2 & \{#3\} & #4}}
	\newcommand{\drule}[2]{\ensuremath{#1 & \leftarrow & #2}}
	\newcommand{\darule}[4]{\ensuremath{#1 \leftarrow #2 & \{#3\} & #4}}
	\newcommand{\arule}[2]{\ensuremath{#1 \, &\leftarrow \, #2}}
	\newcommand{\LNDRule}[2]{
	\ifenv{array}
	{\drule{#1}{#2}}
	{ \ifenv{align}
		{\arule{#1}{#2}}
		{\ifenv{align*}
		{\arule{#1}{#2}}
		{ERROR: using LDRule in unsupported environment: \@currenvir}
		}
	}
	}
	\newcommand{\LDRule}[4]{
	\ifenv{array}
	{\ddrule{#1}{#2}{#3}{#4}}
	{ \ifenv{align}
		{\darule{#1}{#2}{#3}{#4}}
		{\ifenv{align*}
		{\darule{#1}{#2}{#3}{#4}}
		{ERROR: using LDRule in unsupported environment: \@currenvir}
		}
	}
	}
	\NewDocumentCommand\LRule{m+g+g+g}{%
		\IfNoValueTF{#2}%
		{#1.&}{%
		\IfNoValueTF{#3}
		{\LNDRule{#1}{#2.}}
		{\LDRule{#1}{#2.}{#3}{#4}}%
		}
	}
	\NewDocumentCommand\CLRule{m+g}{%
	\ifenv{array}
	{\cdrule{#1}{#2}}
	{ \ifenv{align}
		{\carule{#1}{#2}}
		{\ifenv{align*}
			{\carule{#1}{#2}}
			{ERROR: using CLRule in unsupported environment: \@currenvir}
		}
	}
	}
	\NewDocumentCommand\carule{m+g}{%
		\IfNoValueTF{#2}
			{\ensuremath{#1.}}
			{\ensuremath{#1 \, &\cause \, #2}}}
	\NewDocumentCommand\cdrule{m+g}{%
		\IfNoValueTF{#2}
			{\ensuremath{#1.}}
			{\ensuremath{#1 & \cause & #2}}}
	\newcommand{\algrule}[4]{
	\hbox{{#1}:}& 
	\quad #2 ~\longrightarrow~ #3 
	\hbox{~ if } #4\\
	}
	\newcommand{\AlgoRule}[4]{
	\ifenv{array}
	{\algrule{#1}{#2}{#3}{#4}}
		{ERROR: using AlgoRule in unsupported environment: \@currenvir}
	}
	\newcommand{\ignore}[1]{}
	\newcommand{\namedcomment}[3]{%
		\ifthenelse{\boolean{nocomments}}%
		{}
		{
			\ifthenelse{\boolean{commentmargin}}%
				{ {\color{#3} \marginpar{\color{#3}\sc #2}#1}  }
				{  {\color{#3} {\sc #2}: #1}  }
		}%
	}
	\newcommand{\mnamedcomment}[3]{\ifthenelse{\boolean{nocomments}}{}{{\marginpar{\tiny \color{#3}{\sc #2}:#1}}}}
	\newcommand{\todo}[1]{\namedcomment{#1}{TODO}{blue}}
	\newcommand{\bart}[1]{\namedcomment{#1}{bb}{red}}
\font\uwavefont=lasyb10 scaled 700
\def\spelling{\bgroup\markoverwith{\lower3.5\p@\hbox{\uwavefont\textcolor{Red}{\char58}}}\ULon}
\def\grammar{\bgroup\markoverwith{\lower3.5\p@\hbox{\uwavefont\textcolor{LimeGreen}{\char58}}}\ULon}
\def\phrasing{\bgroup\markoverwith{\lower3.5\p@\hbox{\uwavefont\textcolor{RoyalBlue}{\char58}}}\ULon}
\newcommand\remove{\bgroup\markoverwith{\textcolor{red}{\rule[0.5ex]{2pt}{0.4pt}}}\ULon}
\newcommand\setcitation[2]{%
	\csdef{mycommoncitation\text_uppercase:n{#1}}{#2}}
\newcommand\getcitation[1]{%
	\csuse{mycommoncitation\text_uppercase:n{#1}}}
\newcommand\refto[1]{%
      \ifcsname  mycommoncitation\text_uppercase:n{#1}\endcsname%
      \getcitation{#1}%
      \else%
      #1%
      \fi%
      }
\newcommand\mycite[1]{%
      \ifcsname mycommoncitation\text_uppercase:n{#1}\endcsname%
   \cite{\getcitation{#1}}%
  \else%
    \cite{#1}%
  \fi%
}	
\newcommand\mycitet[1]{%
      \ifcsname mycommoncitation\text_uppercase:n{#1}\endcsname%
   \citet{\getcitation{#1}}%
  \else%
    \citet{#1}
  \fi%
}	
\newcommand{\natnrs}{{\mathbb{N}}}
\newcommand{\Agents}{\mc{A}}
\newcommand{\MD}{\mathit{MD}}
\newcommand{\K}{\mc{K}^*}
\newcommand{\mc}[1]{\mathcal{#1}}
\newcommand{\cael}{\ensuremath{\mc{L}}}
\newcommand{\cocael}{\ensuremath{\mc{C}\hspace{-0.25mm}\mc{O}\hspace{-0.4mm}\cael}}
 \newcommand{\A}{\m{\mc{A}}}
\newcommand{\marcos}[1]{\namedcomment{#1}{mc}{OliveGreen}}
\newcommand\world{\m{w}}
\newcommand\otherworld{\m{v}}
\newcommand\agent{\m{A}}
\newcommand\possible[2]{\m{#1^{#2}}}
\newcommand\impossible[2]{\m{\bar{#1}^{#2}}}
\newcommand\ordinal{\m{\mu}}
\newcommand\smallerordinal{\m{\alpha}}
\newcommand\evensmallerordinal{\m{\beta}}
\newcommand\limitordinal{\m{\lambda}}
\newcommand\objworld[1]{\m{#1^{\mathit{obj}}}}
\newcommand\depth{\m{d}}
\newcommand\restr[2]{\m{#1|_{#2}}}
\newcommand\ext{\textit{ext}}
\newcommand\IH{\textnormal{IH}}
\newcommand\evalLim[2]{\m{{#1\!\!\uparrow\!\!#2}}}
\newcommand\evalLimBar[2]{\m{{\bar{#1}\!\!\uparrow\!\!#2}}}
\newcommand\restrictability{restrictability\xspace}
\newcommand\restrictabilityHoldsAt[1][\mu]{\restrictability holds at $#1$\xspace}
\newcommand\biworld[1]{\ensuremath{#1}-biworld}
\newcommand\biworlds[1]{\ensuremath{#1}-biworlds}
\newcommand\Biworld[1]{\ensuremath{#1}-Biworld}
\newcommand\prebiworld[1]{$#1$-prebiworld}
\newcommand\Prebiworld[1]{$#1$-Prebiworld}
\newcommand{\monotonicityCompl}{monotonicity of completedness\xspace}
\newcommand\monotonicityComplHoldsAt[1][\mu]{\monotonicityCompl holds at $#1$\xspace}
\newcommand\completability{completability\xspace}
\newcommand\completabilityHoldsAt[1][\mu]{\completability holds at $#1$\xspace}
\newcommand\complCond{the completedness condition\xspace}
\newcommand\complCondHoldsAt[1][\mu]{\complCond holds at $#1$\xspace}
\newcommand\onlyKnows[1]{\m{O_{\!#1}}}
\newcommand\knows[1]{\m{K_{\!#1}}}
\newcommand\pl[1]{\textit{pl}(#1)}
\renewcommand\implies{\limplies}
\renewcommand\iff{\lequiv}
\newcommand\citet[1]{\citeauthor{#1}~\shortcite{#1}\xspace}
\newcommand\entailsti{\models_{\mathit{PI}}}
 \newcommand{\conf}[1]{}
 \newcommand{\extended}[1]{#1}
\title{
Mathematical Foundations for Joining Only Knowing and Common Knowledge\\ \extended{(Extended Version)}}
\author{%
	Marcos Cramer$^1$\and
	Samuele Pollaci$^2$\and
	Bart Bogaerts$^2$ \\
	\affiliations
	$^1$Institute of Artificial Intelligence, TU Dresden, Germany \\
	$^2$Department of Computer Science, Vrije Universiteit Brussel, Belgium\\
	\emails
	marcos.cramer@tu-dresden.de,
	\{samuele.pollaci, bart.bogaerts\}@vub.be
}
\renewcommand\phi{\varphi}
\begin{document}
	
	\maketitle
	\begin{abstract}
		Common knowledge and only knowing capture two intuitive and natural notions that have proven to be useful in a variety of settings, for example to reason about coordination or agreement between agents, or  to analyse the knowledge of knowledge-based agents. 
		While these two epistemic operators have been extensively studied in isolation, the approaches made to encode their complex interplay failed to capture  some essential properties of only knowing. 
		We propose a novel solution by defining a notion of \biworld{\ordinal} for countable ordinals \ordinal, which approximates not only the worlds that an agent deems possible, but also those deemed impossible. 
    This approach allows us to 
		define a multi-agent epistemic logic with common knowledge and only knowing operators, and a three-valued model semantics for it. 
		Moreover, we show that we only really need biworlds of depth at most $\omega^2+1$. 
		Based on this observation, we define a Kripke semantics on a canonical Kripke structure and show that this semantics coincides with the model semantics.
		Finally, we discuss issues arising when combining negative introspection or truthfulness with only knowing and show how positive introspection can be integrated into our logic. 
	\end{abstract}
	
	\extended{

	\subsection*{Note about Extended Version}
	
	This is the extended version of the paper ``Mathematical Foundations for Joining Only Knowing and Common Knowledge'' accepted for KR 2023. Each of the additional lemmas and proofs is included at the relevant point in the text.
	}
	
	\section{Introduction}
  When developing intelligent agents, it is important that they can reason not just about the world they are placed in, but also about the knowledge or beliefs of other agents in their environment.
  Consider for instance a traffic situation where two cars meet at a crossing. 
  When the driver of the first car observes their light is green, the typical action to be taken would be to keep on driving. 
  The reason for this is two-fold: on the one hand, the driver knows the traffic regulations and the fact that the other car should stop. 
  But on the other hand, the driver also knows (or believes) that the other driver also knows the regulations (and will likely respect them). 
  It is this knowledge about the other driver's knowledge that allows the first driver to conclude that it is safe to continue. 
  
  The formal study of knowledge and how to reason correctly about 
it 
  has a long history in knowledge representation, dating back at least to the 1960s \cite{Hintikka1962-HINKAB}. 
  This study becomes particularly interesting when, as in our 
	example, 
	multiple agents are involved. 
  Next to the standard knowledge operator $K$, we are concerned with two epistemic operators, namely \textbf{common knowledge} and \textbf{only knowing}, and their intricate interplay. 
  We say that a statement $\phi$ is \emph{common knowledge} among a group $G$ of agents (and denote this $C_G\phi$) if each agent in $G$ knows $\phi$, and also knows that every agent in $G$ knows $\phi$, and knows that everyone in $G$ knows that everyone in $G$ knows $\phi$, and so on.
  This operator is useful, for instance, for reasoning about coordination or agreement between agents.
  We say that an agent $A$ \emph{only knows} a statement $\phi$ (and denote this $\onlyKnows{A} \phi$) if the agent knows $\phi$ (denoted $\knows{A}\phi$) and moreover \emph{everything} they know follows from $\phi$ (so whenever $\knows{A}\psi$ holds, it must be the case that $\phi$ entails $\psi$). 
  This operator is useful for instance when we consider that knowledge-based agents do not know anything except for what follows from their knowledge base and we might want to reason about their knowledge as well. 

  There have been many papers studying these operators in isolation \cite{FaginHMV95,book/MeyerH95,logcom/HalpernL01,tark/WaalerS05,corr/BelleL10,ai/BelleL15} and some authors have  even studied the combination of the two \cite{ijcai/AucherB15,ijcai/BelleL15,phd/VanHertum16}.  
  However, we will argue that there are some essential properties of only knowing that none of these approaches captures.
  We will start with the good news: it is easy to develop a Kripke semantics for a logic with these two operators: given a Kripke structure $\mc{K}$
  with set of worlds $W$ and an accessibility relations $R_A$ for every agent $A$, 
the semantics for $C_G$ and $\onlyKnows{A} $ is given by
  \begin{compactitem}
  	\item $\mc{K}, w\models C_G \phi$ if $\mc{K},w'\models \phi$ for all $w'$ reachable from $w$ with edges in $\bigcup_{A\in G} R_A$, and 
  	\item $\mc{K}, w\models \onlyKnows{A}  \phi$ if for every world $w'\in W$, $\mc{K},w'\models \phi$ if and only if $(w,w')\in R_A$.  	
  \end{compactitem}
  Intuitively, the if-part in the definition of the semantics of $\onlyKnows{A}$ states that $A$ knows $\phi$ (in all worlds $A$ deems possible, $\phi$ holds), and the only-if-part ensures that the agent doesn't know anything else (all worlds in which $\phi$ holds are indeed deemed possible by the agent in question).

  If this is so easy, then where's the catch, one might wonder. 
  Well, the problem lies in the choice of the set $W$ of worlds. 
  The question we tackle is: can we construct a set of worlds $W$ that is rich enough such that, for instance,  $\onlyKnows{A} \top $ really means that that $A$ only knows tautologies in the language (\ie that agent $A$ ``knows nothing'')?
  In other words, can we construct a canonical Kripke structure for this logic? 
	A naive first attempt at doing so would be as follows. 
  \begin{definition}[World --- incorrect definition]
  	Given a propositional vocabulary $\voc$, a world $w$ consists of 
  	\begin{compactitem}
  		\item a (classical propositional) interpretation $\objworld{\world}$ over $\voc$ (the \emph{objective interpretation} of $w$), and 
  		\item for each agent $A\in\Agents$, a set of worlds $A^w$.
  	\end{compactitem}
  \end{definition}
  However, the attentive reader might have noticed that this 
circular definition 
breaks the basic rules of set theory (a world is defined to consist of, among others, a set of worlds). 
	We are not the first to observe this issue.
	The most common approach to alleviate it is to approximate the knowledge of agents up to a certain level and defining $k{+}1$-worlds as consisting, among others, of a set $A^w$ of $k$-worlds for each agent $A$. 
	There are two challenges with this 
	approach. 
	The first is that	as soon as we add common knowledge to our language, there is a strong need for these worlds to be \emph{infinitely deep}. 
	One way to achieve this is to consider an infinite \emph{precision-increasing} sequence of $k$-worlds, as is done for instance by 
	\citet{jacm/FaginHV91},  or by \citet{ijcai/BelleL15}.
	However, that in itself does not suffice: 
  we show that to evaluate certain formulas there is a need to have worlds that are even deeper than this. 
	The second is that in such approximations, no matter how deep one goes, there is never enough information to conclude that this is ``all we know'': it might always be that by making the approximation more precise, more knowledge at some later level comes in. 
	
	This brings us to the main contribution of this paper: a solution to the above problem. 
	First we will define a notion of \biworld{\mu}, where $\mu$ can be any countable ordinal (thereby resolving the first challenge), and where the ``bi'' in biworld stands for the fact that we do not just approximate the set of worlds an agent deems possible, but also the set of worlds an agent deems \emph{impossible}. 
	Intuitively, with each \biworld{\mu{+}1} $w$, we will associate a set $A^w$ of \biworlds{\mu} representing the set of \biworlds{\mu} of which the agent deems some extension possible, and a set $\bar A^w$ of \biworlds{\mu} of which the agent deems some extension impossible. 
	This immediately allows us to see when all of agent $A$'s knowledge is captured by such a biworld: this is precisely when $A^w\cap \bar  A^w$ is empty. 
	For limit ordinals, the situation is more complex, and this makes the construction of biworlds highly technical and mathematical.  
	\conf{Due to space limitations, proofs are not included in this paper, but all propositions and lemmas are carefully proven in {\color{red}a technical report accompanying this paper \cite{CKOK_ARXIV}}.} 
	This transfinite construction of our biworlds is given in Section~\ref{sec:definitions}; Section~\ref{sec:properties} then shows several properties of them, essentially showing that they are well-behaved, in a precise sense. 
	
	In Section~\ref{sec:vals} we define our logic of common knowledge and only knowing as a simple multi-agent epistemic logic extended with the operators $\onlyKnows{A} $ and $C_G$ as described above. 
	We show that the biworlds possess enough information so that the formulas in our logic can be evaluated in them. 
	More specifically, we define a three-valued model semantics for our logic: given a formula $\phi$ and a biworld $w$ of any depth, we define what $\phi^w$ is. This can be true, false, or unknown, where the last case represents the fact that the biworld is not sufficiently ``deep'' to evaluate the formula. 
	Moreover, we show that we do not need arbitrarily deep biworlds: if a biworld has depth at least $\omega^2+1$, then every formula will evaluate in it either to true or to false. 
	Inspired by this observation, we are able to define our canonical Kripke structure: 
	the \emph{worlds}, are precisely the $\omega^2{+}1$-biworlds that are \emph{completed}, which is a technical term to denote the fact that it identifies for every other biworld of any other level whether or not the agent believes it is possible, \ie that it characterizes complete knowledge.  
	The accessibility relations can directly be obtained from the definition of the worlds. 
	We then proceed to show that the semantics obtained from this canonical Kripke-structure actually coincides with the valuation we started from. 
	Finally, this allows us to prove several desirable properties the resulting logic satisfies (see Theorem~\ref{thm:properties} for an extensive list), including the following two: 
	\begin{compactitem}
		\item For any $\phi$ and $\psi$, if $\phi\not\models\psi$, then $ \onlyKnows{A} \phi \models \neg \knows{A}\psi$.
		\item For any $\phi$, $\onlyKnows{A} \phi\not\models \bot$.
	\end{compactitem}

The first property states precisely that whenever $\onlyKnows{A} \phi$ holds, all formulas not entailed by $\phi$ are not known (and in fact we also have that all properties entailed by $\phi$ are known by $A$). 
The second property states that for \emph{any} formula $\phi$, there is a world in which $\onlyKnows{A} \phi$ holds. 
In fact, our results are even stronger than this: we show that there is a unique state-of-mind of agent $A$ in which they know precisely $\phi$. 
These two properties of the $\onlyKnows{A} $ operator, while quite simple, are --- to the best of our knowledge --- not satisfied by any other paper combining common knowledge and only knowing. 

While developing our worlds, and our logic, we do not enforce any properties that are often associated to \emph{knowledge}. 
For instance, our worlds do not guarantee that our agents are \emph{truthful} or \emph{introspective}. 
The main focus of the paper is on how to create a 
semantic structure that allows defining a rich enough set of worlds. 
However, once this semantic structure is in place, it is possible to 
use it to define a logic that satisfies such properties as well. 
To illustrate this, we show in Section \ref{sec:TIworlds} how we can hard-code into the logic the fact that agents are positively introspective, which in fact means that it is common knowledge that all agents satisfy this property. 
We also discuss the possibility of adding truthfulness and negative introspection, highlighting a problem that arises from combining them with only knowing. We use this problem to uncover a major mistake in previous work on combining only knowing and common knowledge.

Finally, in \cref{sec:relatedWork} we 
discuss 
related work and explain why previous approaches could not achieve the two properties set out above. We conclude in \cref{sec:conclusion}.

\section{Construction of Biworlds} \label{sec:defs}\label{sec:definitions}
	
In this section, we define the concept of biworlds. Intuitively, a biworld \world consists of \emph{(1)} an objective interpretation, and \emph{(2)} for each agent \agent two sets of biworlds $\possible{\agent}{\world}$ and  $\impossible{\agent}{\world}$, where the former set contains all the biworlds the agent deems possible, and the latter the biworlds the agent deems impossible. 
	As explained before, using this definition brings us into set-theoretic problems. 
	To resolve this issue, we define the notion of \biworld{\ordinal}s,  for all countable ordinals~$\ordinal$. 
	Intuitively, a \biworld{\ordinal} does not describe the complete epistemic state of the different agents, but only  their belief about the world up to a certain depth.  
	For some formulas, having a certain depth will suffice to evaluate the formula. 
	For instance, a $1$-biworld will suffice to determine whether or not an agent knows $p$, but might not suffice to determine whether an agent knows that some other agent knows $p$. 
	Moreover, we will later show that for 
 our logic, the ordinal $\omega^2+1$ suffices, in the sense that every set of formulas is either true or false in every $\omega^2{+}1$-biworld. 
	While we only need the theory of biworlds up to $\omega^2+1$, we will still develop the theory in more generality for all \emph{countable} ordinals. From now on, when we say \emph{ordinal}, we mean \emph{countable ordinal}. \extended{We repeat this assumption only in the proofs in which we require it.}

	If $\ordinal$ is a successor ordinal $\smallerordinal+1$, then a \biworld{\ordinal} $\world$ associates with each agent two sets of $\smallerordinal$-biworlds, where
	$\possible{\agent}{\world}$ represents the set of \biworld{\smallerordinal}s that can be extended (in depth; for some sufficiently large ordinal) to a biworld the agent deems possible, and 
	$\impossible{\agent}{\world}$ represents the set of \biworld{\smallerordinal}s that can be extended to a biworld the agent deems impossible. 
	Clearly, a natural condition will then be that the union of these two sets is the set of all \biworld{\smallerordinal}s. 
	
	If there is any biworld in the intersection of $\possible{\agent}{\world}$ and $\impossible{\agent}{\world}$, this biworld must be extendible both to a biworld the agent deems possible, and to a biworld the agent deems impossible. If for all agents $A$, the intersection of $\possible{\agent}{\world}$ and $\impossible{\agent}{\world}$ is empty, this means that the information in $\world$ fully specifies which biworlds the agents deem possible and which ones impossible. Once this is fully specified, there is only one way to extend $\world$ to a biworld of a greater depth. 
	When there are multiple ways in which a $\ordinal$-biworld $\world$ can be extended to a $\ordinal{+}1$-biworld, then we call $\world$ an \emph{incompleted} biworld. 

    Since a biworld in the intersection of $\possible{\agent}{\world}$ and $\impossible{\agent}{\world}$ must be extendible (for some sufficiently large depth) both to a biworld the agent deems possible and to a biworld the agent deems impossible, it must be incompleted.

We start with the auxiliary notion of a \prebiworld{\ordinal}, which approximates the more complex notion of a \biworld{\ordinal} while leaving out some of the structural conditions that we impose on \biworld{\ordinal}s. We define \prebiworld{\ordinal}s, restrictions of prebiworlds and a precision order on the prebiworlds through 
\textbf{simultaneous recursion},  
but for 
readability, we split  
it  
into Definitions \ref{def:new:prebiworld}, \ref{def:new:restriction} and~\ref{def:precision}.

\begin{definition}[\Prebiworld{\ordinal}]
	\label{def:new:prebiworld}
	Let \ordinal be an ordinal. 
	We define the set of $\ordinal$\emph{-prebiworlds} over a propositional vocabulary $\Sigma$, and a set of agents $\Agents$ by transfinite induction:
	
	\begin{compactitem}
		\item A \prebiworld{0} $\world$ is an interpretation $\I$ over $\Sigma$. We define $\objworld{\world} = \I$ and $\depth(\world)=0$. 
		
		\item \mbox{A \prebiworld{\ordinal{+}1} $\world$ is a triple  $(\I,(\possible{\agent}{\world})_{\agent\in\Agents}, (\impossible{\agent}{\world})_{\agent\in\Agents})$,} where $\I$ is an interpretation over $\Sigma$, and for each $\agent \in \Agents$, $\possible{\agent}{\world}$ and $\impossible{\agent}{\world}$ are sets of \prebiworld{\ordinal}s.
		We define $\depth(\world)= \ordinal+1$ and $\objworld{\world}=\I$. 
		
		\item A \prebiworld{\limitordinal} $w$ for a limit ordinal $\limitordinal$ is an increasing transfinite sequence $(\world_\smallerordinal)_{\smallerordinal<\limitordinal}$ of prebiworlds, i.e.\
		 for each $\smallerordinal<\limitordinal$, $\world_\smallerordinal$ is an \prebiworld{\smallerordinal} and for each  $\evensmallerordinal\leq\smallerordinal$ it must be that $\world_\evensmallerordinal\leqp \world_\smallerordinal$. 
		 We define $\depth(\world)=\limitordinal$ and $\objworld{\world}=\world_0$.
	\end{compactitem}
\end{definition}

For each ordinal $\ordinal$, we denote by $\WW_p^\ordinal$ the set of \prebiworld{\ordinal}s, and we call the integer $d(w)$ the \emph{depth of} $w$. If $\world\in\WW_p^\limitordinal$ with $\limitordinal$ a limit ordinal, then for each $\alpha<\limitordinal$, we denote by $(\world)_\alpha$ the \prebiworld{\smallerordinal} which is the element in position $\smallerordinal$ in the transfinite sequence represented by $\world$.

The following definition captures what it means to restrict a prebiworld of depth $\ordinal$ to a prebiworld of smaller depth $\smallerordinal$.

\begin{definition}[Restriction of a prebiworld]
	\label{def:new:restriction}
	Assume \world is a \prebiworld{\ordinal} and $\smallerordinal\leq\ordinal$. 
	The \emph{restriction} of $\world$ to $\smallerordinal$ is the \prebiworld{\smallerordinal} $\restr{\world}{\smallerordinal}$ defined as follows: 
	\begin{compactitem}
		\item If $\smallerordinal=0$, then $\restr{\world}{\smallerordinal}=\objworld{\world}$. 
		\item If $\smallerordinal$ is a limit ordinal, then $\restr{\world}{\smallerordinal} = (\restr{\world}{\evensmallerordinal})_{\evensmallerordinal<\smallerordinal}$
		\item If $\smallerordinal=\smallerordinal'+1$, we distinguish two cases: 
		\begin{compactitem}
			\item If $\ordinal$ is a limit ordinal, then $\restr{\world}{\smallerordinal} = (\world)_\smallerordinal$. 
			\item If $\ordinal$ is a successor ordinal $\ordinal=\ordinal'+1$, then $\objworld{\restr{\world}{\smallerordinal}}=\objworld{\world}$, 
				$\possible{\agent}{\restr{\world}{\smallerordinal}} = \{\restr{\world'}{\smallerordinal'} \mid \world'\in \possible{\agent}{\world}\}$, and 
				$\impossible{\agent}{\restr{\world}{\smallerordinal}} = \{\restr{\world'}{\smallerordinal'} \mid \world'\in \impossible{\agent}{\world}\}$.
		\end{compactitem}
	\end{compactitem}
\end{definition}

The precision order on the prebiworlds is based on the notion of restriction:

\begin{definition}[Precision order]\label{def:precision}
	If $\world$ is a \prebiworld{\ordinal} and \otherworld an \prebiworld{\smallerordinal} with $\smallerordinal\leq\ordinal$, we say that $\otherworld$ is less precise than \world (and denote this $\otherworld\leqp\world$) if $\restr{\world}{\smallerordinal}=\otherworld$. 
\end{definition}
If $\world\leqp\otherworld$ for some prebiworlds $\world$ and $\otherworld$, we call $\world$  a \emph{restriction of}  $\otherworld$ and $\otherworld$  an \emph{extention of}  $\world$.

Now we define  the set of incompleted prebiworlds and the set of biworlds by simultaneous transfinite recursion. For better readability, we separate this simultaneous definition into Definitions \ref{def:new:incompleted} and~\ref{def:new:biworld} and explain afterwards why it is a successful definition.

\begin{definition}[Incompleted prebiworld] \label{def:new:incompleted}
	A \prebiworld{\ordinal} \world is \emph{incompleted} if there exists two distinct \biworld{\ordinal{+}1}s $\otherworld_1$ and $\otherworld_2$ such that $\otherworld_1,\otherworld_2\geqp \world$ . 
\end{definition}

A prebiworld is called \emph{completed} if it is not incompleted.

\begin{definition}[\Biworld{\ordinal}] \label{def:new:biworld} 
	A \ordinal-\emph{biworld} is a  \prebiworld{\ordinal} \world such that one of the following conditions holds:
	\begin{compactenum}[(a)]
		\item $\ordinal=0$;
		\item \label{case:def:biworld_successor} $\ordinal=\ordinal'+1$ is a successor ordinal,  and for each agent $\agent\in\A$ the following hold: 
		\begin{compactenum}[(1)]
			\item \label{item:union_AandAbar_is_all_biworlds} the union $\possible{\agent}{\world}\cup\impossible{\agent}{\world}$ is the set of all $\ordinal'$-biworlds;
			\item for each $\otherworld\in\possible{\agent}{\world}\cap\impossible{\agent}{\world}$, $\otherworld$ is incompleted;
		\end{compactenum}
		\item $\ordinal$ is a limit\extended{ ordinal}, and $\world_\smallerordinal$ is an \biworld{\smallerordinal} for each $\smallerordinal<\ordinal$.
	\end{compactenum}
\end{definition}

For each ordinal $\ordinal$, the set of \biworld{\ordinal}s is denoted as $\WW^\ordinal$. We often call a \biworld{0} an \emph{objective world}.
 
Since the definition of \emph{incompleted prebiworlds} of depth $\ordinal$ depends on the existence of biworlds of depth $\ordinal+1$, an attentive reader may be worried whether the simultaneous definition of \emph{incompleted prebiworlds} and \emph{biworlds} is really a successful definition. We therefore now explain how this definition is to be understood. 

\begin{figure}[h] 
\vspace{-1mm}
		\centering
		\includegraphics[width=0.9\columnwidth]{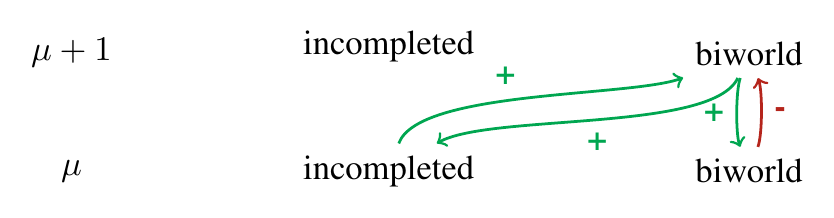}
\vspace{-3mm}
	\caption{Dependencies between the notions of \emph{incompleted prebiworlds} and \emph{biworlds} at levels $\ordinal$ and $\ordinal+1$}
	\label{fig:dependencies}
	\vspace{-4mm}
\end{figure}	

For this Figure~\ref{fig:dependencies} depicts the dependencies between the notions of \emph{incompleted prebiworlds} and \emph{biworlds} at levels $\ordinal$ and $\ordinal+1$.
Here, a green arrow from one notion to another indicates that the second notion positively dependends on the first, \ie will include more objects when the first notion includes more objects. A red arrow, on the other hand, indicates a negative dependency, so that the second notion will include fewer objects when the first notion includes more objects. For example, the green arrow from \emph{biworld} at level $\mu+1$ to \emph{incompleted} at level $\mu$ indicates that determining additional biworlds at level $\mu+1$ can lead to determining a prebiworld $w$ at level $\mu$ to be incompleted, as one of the newly determined biworlds at level $\mu+1$ may 
imply there is 
more than one way of extending $w$ to a $\m{+}1$-biworld. The arrow from \emph{biworld} at level $\mu$ to \emph{biworld} at level $\mu+1$ is red, because given a $\mu{+}1$-prebiworld $w$, determining more biworlds at level $\mu$ can show that the union $\possible{\agent}{\world}\cup\impossible{\agent}{\world}$ is not the set of all $\ordinal$-biworlds, \ie that $w$ is not a biworld.

Now the simultaneous definition of \emph{incompleted prebiworlds} and \emph{biworlds} is a transfinite recursion over $\mu$, where at each level $\mu$ of the induction, the notions of an \emph{incompleted $\mu$-prebiworld} and a \emph{$\mu{+}1$-biworld} are defined simultaneously by choosing the minimal set of incompleted $\mu$-prebiworlds and $\mu{+}1$-biworlds that satisfy Definitions \ref{def:new:incompleted} and~\ref{def:new:biworld}(b). Such a minimal set exists, because the definitions of \emph{incompleted $\mu$-prebiworld} and  \emph{$\mu{+}1$-biworld} only depend positively on each other. Here the negative dependency of \emph{$\mu{+}1$-biworld} on \emph{$\mu$-biworld} is not a problem, because at this stage in the transfinite recursion over $\mu$, the set of $\mu$-biworlds has already been determined: If $\mu$ is a successor ordinal $\mu'+1$, it has been determined in the previous step $\mu'$ of the transfinite recursion. If $\mu$ is a limit ordinal, it has been determined by Definition~\ref{def:new:biworld}(c) and the fact that for every $\alpha < \mu$, the set of $\alpha$-biworlds has already been determined. If $\mu=0$, the set of $\mu$-biworlds just coincides with the set of $\mu$-prebiworlds by Definition~\ref{def:new:biworld}(a).


\begin{example}\label{ex:compl_notpostcompl_omegabiw}
	We consider one of the simplest settings, namely we suppose to have just one agent $\agent$, and one propositional variable $p$ in the vocabulary $\Sigma$. The set $\WW^0$ of \biworld{0}s coincides with the set of all \prebiworld{0}s by \cref{def:new:biworld}, and it is equal to $\WW^0=\{\{p\},\{\emptyset\}\}$.
	The set $\WW^1$ already counts considerably more elements, eighteen to be precise: nine with $\{p\}$ as objective world, and other nine with $\{\emptyset\}$. This follows from the limitations imposed by Item \ref{item:union_AandAbar_is_all_biworlds} in Definition \ref{def:new:biworld} of biworlds\footnote{In the single-agent setting, $|\WW^1|=n3^n$, where $n=|\WW^0|$.}. For the sake of conciseness, we omit the explicit description of all the \biworld{1}s but one:
	\vspace{-4.5mm}
	\begin{align*}
		v_1:=(\{p\}, \{\{p\}\}, \{\{p\}, \{\emptyset\}\}).
	\end{align*}
	\vspace{-4.5mm}\\
	Since $\{p\}$ is the objective world of $v_1$, $p$ is true in $v_1$. Moreover, the sets $\possible{\agent}{v_1}$ and $\impossible{\agent}{v_1}$  provide some information on the beliefs of \agent in $v_1$. Recall that $\possible{\agent}{v_1}$ (resp. $\impossible{\agent}{v_1}$)   is the set of biworlds that have an extension \agent deems possible (resp. impossible). Since $\{p\}$ is the only biworld in $\possible{\agent}{v_1}$, $p$ is true in any biworld \agent deems possible. As we will see later, this means that \agent knows $p$. On the contrary, since $\impossible{\agent}{v_1}$ contains both $\{p\}$ and $\{\emptyset\}$, $p$ is true in some of the biworlds \agent deems impossible, and false in others.
	
	Starting with the objective world $v_0:=\{p\}$ and continuing with the biworld $v_1$, we can inductively build an \prebiworld{\omega} $v:=(v_\alpha)_{\alpha<\omega}$ as follows
	\vspace{-1.5mm}
	\begin{equation*}
		v_\alpha:=\begin{cases}
			\{p\} & \text{if }\alpha=0
			\\ (v_0, \{v_{\alpha'}\}, \WW^{\alpha'}) & \text{if }\alpha=\alpha'+1.
		\end{cases}
		\vspace{-1.5mm}
	\end{equation*}
	Since for all successor ordinals $\alpha=\alpha'+1<\omega$ we have $\possible{\agent}{v_{\alpha}}\cap\impossible{\agent}{v_\alpha}=\{v_{\alpha'}\}$, \extended{in order }to prove that $v$ is an \biworld{\omega}, it suffices to show   that for all $\alpha<\omega$, $v_\alpha$ is incompleted\conf{, which can be done by an inductive proof.}\extended{. We achieve this with a proof by induction: 
		\begin{compactitem}
			\item $\alpha=0$. The \prebiworld{1}s
			\begin{align*}
				u^1_0:=&(\{p\}, \{\{p\}\}, \{\{\emptyset\}\})
				\\u^2_0:=&(\{p\}, \{\{\emptyset\}\}, \{\{p\}\})
			\end{align*}
			are clearly distinct \biworld{1}s extending $v_0$. Hence $v_0$ is incompleted.
			\item $\alpha=\alpha'+1$. By induction hypothesis, $v_{\alpha'}$ is incompleted, \ie it has two distinct \biworld{\alpha} extensions $u^1_{\alpha'}$ and $u^2_{\alpha'}$. It is easy to see that
			\begin{align*}
				u^1_\alpha:=&(\{p\}, \{u^1_{\alpha'}\}, \WW^\alpha\setminus \{u^1_{\alpha'}\})
				\\u^2_\alpha:=&(\{p\}, \{u^2_{\alpha'}\}, \WW^\alpha\setminus \{u^2_{\alpha'}\})
			\end{align*}
			are two distinct \biworld{\alpha{+}1}s extending $v_\alpha$. Hence, $v_a$ is incompleted as desired.
		\end{compactitem}  
		We conclude that $v$ is a \biworld{\omega}.}  Moreover, $v$ 
	\conf{can be shown to be}\extended{is} completed. 
	\extended{Let us explain why this holds. 
	
	Taking into account the second item of Definition~\ref{def:new:restriction} and the incompletedness of each $v_\alpha$, one can easily see that $v$ has exactly two \prebiworld{\omega{+}1}s extending it, namely:
	\vspace{-1mm}
	\begin{align*}
		u^1:=(\{p\}, \{v\}, \WW^\omega\setminus \{v\}), \quad u^2:=(\{p\}, \{v\}, \WW^\omega).
	\end{align*}
	\vspace{-4.5mm}\\
	Since $\possible{\agent}{u^1}\cup\impossible{\agent}{u^1}=\WW^\omega$ and $\possible{\agent}{u^1}\cap\impossible{\agent}{u^1}=\emptyset$, $u^1$ is an \biworld{\omega{+}1}. However, $u^2$ is a biworld only if $v$ is incompleted, by \cref{case:def:biworld_successor} of \cref{def:new:biworld}. Hence, $u^2$ is a biworld if and only if $v$ is incompleted. Since we construct the concepts of incompleted \prebiworld{\omega} and \biworld{\omega{+}1} by simultaneous induction by choosing the minimal sets, $u^2$ is not a biworld and $v$ is not incompleted.}
	
\end{example}

We now dive deeper in certain features of (pre)biworlds with limit ordinal depth.
Note that for successor ordinals $\ordinal$, and  for any \prebiworld{\ordinal}, each 
agent is equipped with two sets of prebiworlds. 
For limit ordinals $\ordinal$, however, this is not the case.  
The following definition aims at retrieving a similar concept for a prebiworld with limit ordinal depth. 

\begin{definition}\label{def:new:arrow_sets}
	Given a limit ordinal $\limitordinal$, a \prebiworld{\limitordinal} $\world$ and an agent $\agent \in \A$, we define the following sets
	\vspace{-1mm}
	\begin{align*}
		\evalLim{\agent}{\world} := \{\otherworld \in \WW_p^\limitordinal \mid \forall \ordinal<\limitordinal: (\otherworld)_{\ordinal} \in \possible{\agent}{(\world)_{\ordinal+1}}\}\\
		\evalLimBar{\agent}{\world} := \{\otherworld \in \WW_p^\limitordinal \mid \forall \ordinal<\limitordinal: (v)_{\ordinal} \in \bar A^{(w)_{\ordinal+1}}\}
	\vspace{-1mm}
	\end{align*} 
\end{definition}

It is clear that if $\world$ in the definition above is a biworld, then $\evalLim{\agent}{\world}$ and $\evalLimBar{\agent}{\world}$ are sets of biworlds. It is important to notice that the sets defined in \cref{def:new:arrow_sets} do not carry the exact same meaning as the successor-ordinal counterpart: if $v\in\evalLim{\agent}{\world}$ for some $\lambda$-(pre)biworld \world, then in \world, for each approximation $v_\alpha$ ($\alpha<\lambda$) of $v$, the agent $\agent$ deems some extension of $v_\alpha$ as possible. Analogously, if $v\in\evalLimBar{\agent}{\world}$, then in \world, for each approximation $v_\alpha$ ($\alpha<\lambda$) of $v$, the agent $\agent$ believes some extension of $v_\alpha$ is impossible.

%
%


\section{Properties of Biworlds}\label{sec:properties}
In this section, we show that the formal definitions stated in Section \ref{sec:definitions} behave well and that they indeed correspond to the intuitive ideas introduced above. First, we present two propositions regarding the notion of restriction, which show that  the induced relation $\leqp$ is  a non-strict partial order. Second, we provide some additional insight into the sets defined in \cref{def:new:arrow_sets}.  Finally, we focus on certain properties concerning biworlds. In particular, we will show the following fundamental facts in \cref{thm:four_properties}:
\bart{I changed compactenum to compactdesc. Saves us a line. But... I'm not super happy with it so feel free to revert if we have a spare lien}
\begin{compactdesc}
	\item[Restrictability:] the restriction of a biworld is a biworld.
	\item[Monotonicity of completedness:] an extension of a completed biworld is a completed biworld. 
	\item[Completability:] all biworlds have a a completed extension.
	\item[Completedness at successor ordinals:] a \biworld{\ordinal{+}1} $w$ is completed if and only if $\possible{\agent}{\world}\cap\impossible{\agent}{\world}=\emptyset$ for all agents $\agent$. 
\end{compactdesc}
Given Definitions \ref{def:new:restriction} and \ref{def:new:incompleted}, the first two properties are sensible to have. Completability will be fundamental in Section \ref{sec:twoValuations} and it is clearly desirable: a complete biworld characterizes complete knowledge, providing a full description of the epistemic state of the agents. 
The last property provides a  simple characterization of what it means to be completed for biworlds of successor ordinal depth. 
While the four properties listed above may seem natural and straightforward, several intermediate results are required to prove them to hold. 


As anticipated, we start by showing that Definition \ref{def:precision} defines a non-strict partial order on the set of prebiworlds.

\begin{proposition} \label{prop:restriction}
	Let $\world$ be a \prebiworld{\ordinal} and let $\evensmallerordinal\leq\smallerordinal\leq\ordinal$. Then $\restr{\world}{\ordinal}=\world$ and $\restr{\restr{\world}{\smallerordinal}}{\evensmallerordinal}=\restr{\world}{\evensmallerordinal}$.
\end{proposition}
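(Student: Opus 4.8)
The plan is to prove both equalities by a single transfinite induction on the depth $\ordinal$ of $\world$, in each case following exactly the case split of Definition~\ref{def:new:restriction}, and within a given level proving $\restr{\world}{\ordinal}=\world$ first so that it is available when proving the composition identity $\restr{\restr{\world}{\smallerordinal}}{\evensmallerordinal}=\restr{\world}{\evensmallerordinal}$. The base case $\ordinal=0$ is immediate: a $0$-prebiworld is its own objective interpretation, so $\restr{\world}{0}=\objworld{\world}=\world$, and the composition identity is vacuous.

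For a successor depth $\ordinal=\ordinal'+1$, Definition~\ref{def:new:restriction} gives that $\restr{\world}{\ordinal}$ has the same objective interpretation as $\world$ and that $\possible{\agent}{\restr{\world}{\ordinal}}=\{\restr{\world'}{\ordinal'}\mid \world'\in\possible{\agent}{\world}\}$, and likewise for $\impossible{\agent}{\cdot}$. Since every element of $\possible{\agent}{\world}\cup\impossible{\agent}{\world}$ is an $\ordinal'$-prebiworld, the induction hypothesis yields $\restr{\world'}{\ordinal'}=\world'$, so $\restr{\world}{\ordinal}=\world$. For the composition identity with $\evensmallerordinal\leq\smallerordinal\leq\ordinal$: if $\smallerordinal=\ordinal$ we are done by the first part; otherwise $\smallerordinal\leq\ordinal'$, and unfolding first $\restr{\world}{\smallerordinal}$ and then $\restr{\restr{\world}{\smallerordinal}}{\evensmallerordinal}$ by Definition~\ref{def:new:restriction} reduces both $\restr{\restr{\world}{\smallerordinal}}{\evensmallerordinal}$ and $\restr{\world}{\evensmallerordinal}$ to the same iterated restrictions of the $\ordinal'$-prebiworlds in $\possible{\agent}{\world}\cup\impossible{\agent}{\world}$, which agree by the induction hypothesis applied at depth $\ordinal'$; the subcase where $\evensmallerordinal$ is a limit ordinal is absorbed by the limit clause of Definition~\ref{def:new:restriction} together with a short nested induction on $\evensmallerordinal$.

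The limit case $\ordinal=\limitordinal$ is where the only real work lies. Writing $\world=(\world_\smallerordinal)_{\smallerordinal<\limitordinal}$, the key auxiliary fact I would establish is $\restr{\world}{\smallerordinal}=\world_\smallerordinal$ for every $\smallerordinal<\limitordinal$: this is immediate from Definition~\ref{def:new:restriction} when $\smallerordinal$ is $0$ or a successor, and when $\smallerordinal$ is a limit ordinal one combines the induction hypothesis $\restr{\world_\smallerordinal}{\smallerordinal}=\world_\smallerordinal$, the limit clause of Definition~\ref{def:new:restriction}, and the fact that $\world$ is a $\leqp$-increasing sequence (built into Definition~\ref{def:new:prebiworld}), so that $\restr{\world_\smallerordinal}{\evensmallerordinal}=\world_\evensmallerordinal$ for all $\evensmallerordinal<\smallerordinal$. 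Granting this, $\restr{\world}{\limitordinal}=(\restr{\world}{\smallerordinal})_{\smallerordinal<\limitordinal}=(\world_\smallerordinal)_{\smallerordinal<\limitordinal}=\world$; and for the composition identity, if $\smallerordinal=\limitordinal$ we are done, while if $\smallerordinal<\limitordinal$ then $\restr{\world}{\smallerordinal}=\world_\smallerordinal$ (or $\objworld{\world}$ if $\smallerordinal=0$), and using the $\leqp$-increasing condition once more, $\restr{\restr{\world}{\smallerordinal}}{\evensmallerordinal}=\restr{\world_\smallerordinal}{\evensmallerordinal}=\world_\evensmallerordinal=\restr{\world}{\evensmallerordinal}$.

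I expect the main obstacle to be bookkeeping rather than depth. Restriction is itself introduced by the simultaneous recursion of Definitions~\ref{def:new:prebiworld}--\ref{def:precision}, so one has to check at every stage that the lower-depth data referenced by the clauses has already been fixed — which it has, since the recursion strictly descends in depth — and the actual content is simply that the $\leqp$-increasing requirement in Definition~\ref{def:new:prebiworld} is exactly what makes the componentwise restrictions at limit stages cohere.
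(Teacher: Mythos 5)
Your proposal is correct and follows essentially the same route as the paper's proof: transfinite induction on $\ordinal$ with nested inductions on $\evensmallerordinal$ (and $\smallerordinal$), unfolding Definition~\ref{def:new:restriction} case by case and using the $\leqp$-increasing condition of limit-depth prebiworlds to get $\restr{\world_\smallerordinal}{\evensmallerordinal}=\world_\evensmallerordinal$ at the limit stage. The only difference is organizational --- you run one simultaneous induction for both identities where the paper runs two consecutive ones (the second reusing the first) --- which changes nothing of substance.
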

\extended{
\begin{proof}
	We first prove $\restr{\world}{\ordinal}=\world$ by induction on $\ordinal$.
	
	\begin{compactenum}
		\item $\ordinal=0$. Then $\restr{\world}{\ordinal}=\objworld{\world}=\world$ by Definitions \ref{def:new:restriction} and \ref{def:new:prebiworld}.
		
		\item $\ordinal=\ordinal'+1$ is a successor ordinal. By induction hypothesis, we have that   $\restr{\world'}{\ordinal'}=\world'$ for each \ordinal'-prebiworld $\world'$. By Definition \ref{def:new:restriction}, we conclude this case.
		\item $\ordinal$ is a limit ordinal. Then $\restr{\world}{\ordinal}=(\restr{\world}{\smallerordinal})_{\smallerordinal<\ordinal}$, and we need to show $\restr{\world}{\smallerordinal}=\world_\smallerordinal$. We proceed by induction on $\smallerordinal$.
		\begin{compactenum}
			\item If $\smallerordinal=0$ or $\smallerordinal$ is a successor ordinal, then we conclude by Definition \ref{def:new:restriction}.
			\item If $\smallerordinal$ is a limit ordinal, then $\restr{\world}{\smallerordinal}=(\restr{\world}{\evensmallerordinal})_{\evensmallerordinal<\smallerordinal}=(\world_\evensmallerordinal)_{\evensmallerordinal<\smallerordinal}=\world_\smallerordinal$, where the equalities hold by Definition \ref{def:new:restriction}, induction hypothesis on $\smallerordinal$, and Definition \ref{def:new:prebiworld}, respectively.
		\end{compactenum}
	\end{compactenum}
	Now we show that $\restr{\restr{\world}{\smallerordinal}}{\evensmallerordinal}=\restr{\world}{\evensmallerordinal}$ by induction on $\ordinal$.
	\begin{compactenum}
		\item $\ordinal=\smallerordinal$. This is a consequence of the already proven first claim of this proposition.
		\item $\ordinal=\ordinal'+1$ is a successor ordinal and $\smallerordinal<\ordinal$. We proceed by induction on $\evensmallerordinal$.
		\begin{compactenum}
			\item $\evensmallerordinal=0$. Then $\restr{\world}{\evensmallerordinal}=\objworld{\world}$, $\restr{\restr{\world}{\smallerordinal}}{\evensmallerordinal}=\objworld{\restr{\world}{\smallerordinal}}$, and it is easy to see that $\objworld{\restr{\world}{\smallerordinal}}=\objworld{\world}$.
			\item $\evensmallerordinal=\evensmallerordinal'+1$ is a successor ordinal. Then $\objworld{\restr{\restr{\world}{\smallerordinal}}{\evensmallerordinal}}=\objworld{\restr{\world}{\smallerordinal}}=\objworld{\world}=\objworld{\restr{\world}{\evensmallerordinal}}$. Moreover, if $\smallerordinal=\smallerordinal'+1$ is a successor ordinal, then for each agent $\agent\in\A$
			
			\begin{align*}
				\possible{\agent}{\restr{\restr{\world}{\smallerordinal}}{\evensmallerordinal}}&=\left\{\restr{\world'}{\evensmallerordinal'}\mid \world'\in\possible{\agent}{\restr{\world}{\smallerordinal}}\right\}
				\\&=\left\{\restr{\restr{\world'}{\smallerordinal'}}{\evensmallerordinal'}\mid \world'\in\possible{\agent}{\world}\right\}
				\\&=\left\{\restr{\world'}{\evensmallerordinal'}\mid \world'\in\possible{\agent}{\world}\right\}=\possible{\agent}{\restr{\world}{\evensmallerordinal}},
			\end{align*}
			
			where the third equality holds by induction hypothesis in $\ordinal$. Analogously, we have $\impossible{\agent}{\restr{\restr{\world}{\smallerordinal}}{\evensmallerordinal}}=\impossible{\agent}{\restr{\world}{\evensmallerordinal}}$. If $\smallerordinal$ is a limit ordinal, then $\restr{\restr{\world}{\smallerordinal}}{\evensmallerordinal}=\restr{\left(\restr{\world}{\gamma}\right)_{\gamma<\smallerordinal}}{\evensmallerordinal}=\restr{\world}{\evensmallerordinal}$.
			
			\item $\evensmallerordinal$ is a limit ordinal. Then $\restr{\restr{\world}{\smallerordinal}}{\evensmallerordinal}=\left(\restr{\restr{\world}{\smallerordinal}}{\gamma}\right)_{\gamma<\evensmallerordinal}=\left(\restr{\world}{\gamma}\right)_{\gamma<\evensmallerordinal}=\restr{\world}{\evensmallerordinal}$, where the second equality holds by the induction hypothesis on $\evensmallerordinal$.
		\end{compactenum}
		
		\item $\ordinal$ is a limit ordinal. We can proceed by nested inductions on $\evensmallerordinal$ and $\smallerordinal$, as we did above. The proof is exactly the same as for $\ordinal$ being a successor ordinal, except for the case where both $\evensmallerordinal$ and $\smallerordinal$ are successor ordinals. In this case we have $\restr{\restr{\world}{\smallerordinal}}{\evensmallerordinal}=\restr{\world_\smallerordinal}{\evensmallerordinal}=\world_\evensmallerordinal=\restr{\world}{\evensmallerordinal}$, where the second equality holds because $\world_\evensmallerordinal\leqp\world_\smallerordinal$.
	\end{compactenum}
\end{proof}
}

\begin{proposition}
	The relation $\leqp$ is a non-strict partial order (a reflexive, antisymmetric and transitive relation). 
	The induced strict order $<_p$ is a well-founded order. 
\end{proposition}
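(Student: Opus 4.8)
The plan is to obtain all four properties almost for free from Proposition~\ref{prop:restriction}, which already establishes the two restriction identities $\restr{\world}{\ordinal}=\world$ and $\restr{\restr{\world}{\smallerordinal}}{\evensmallerordinal}=\restr{\world}{\evensmallerordinal}$ for $\evensmallerordinal\leq\smallerordinal\leq\ordinal$. Reflexivity is immediate: for a \prebiworld{\ordinal} $\world$, the claim $\world\leqp\world$ unfolds by Definition~\ref{def:precision} to $\restr{\world}{\ordinal}=\world$, which is the first identity. For antisymmetry I would note that $\world\leqp\otherworld$ and $\otherworld\leqp\world$ force $\depth(\world)\leq\depth(\otherworld)$ and $\depth(\otherworld)\leq\depth(\world)$, hence $\depth(\world)=\depth(\otherworld)=:\ordinal$; then $\otherworld\leqp\world$ says $\restr{\world}{\ordinal}=\otherworld$, and reflexivity gives $\restr{\world}{\ordinal}=\world$, so $\world=\otherworld$.

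For transitivity, suppose $\world\leqp\otherworld$ and $\otherworld\leqp u$, and write $\evensmallerordinal=\depth(\world)\leq\smallerordinal=\depth(\otherworld)\leq\ordinal=\depth(u)$. The goal is $\restr{u}{\evensmallerordinal}=\world$. Applying the second identity of Proposition~\ref{prop:restriction} to $u$ with these three ordinals gives $\restr{\restr{u}{\smallerordinal}}{\evensmallerordinal}=\restr{u}{\evensmallerordinal}$; since $\restr{u}{\smallerordinal}=\otherworld$ and $\restr{\otherworld}{\evensmallerordinal}=\world$, the left-hand side equals $\world$, so $\restr{u}{\evensmallerordinal}=\world$ and hence $\world\leqp u$.

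For well-foundedness of $<_p$, the key observation is that $\world<_p\otherworld$ implies $\depth(\world)<\depth(\otherworld)$: if instead $\depth(\world)=\depth(\otherworld)$, the same reasoning as in the antisymmetry step yields $\world=\otherworld$, contradicting strictness. Thus $\depth(\cdot)$ is a strictly order-preserving map from the prebiworlds ordered by $<_p$ to the ordinals ordered by $\in$, and the latter is well-founded; so any nonempty set $S$ of prebiworlds contains an element of minimal depth, and such an element is $<_p$-minimal in $S$, since any $<_p$-smaller element would have strictly smaller depth. Hence $<_p$ is well-founded.

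I do not anticipate a genuine obstacle: each step is a short unfolding of Definitions~\ref{def:new:restriction} and~\ref{def:precision} together with Proposition~\ref{prop:restriction}. The only thing requiring mild care is the bookkeeping — keeping the direction of $\leqp$ and the ordering $\evensmallerordinal\leq\smallerordinal\leq\ordinal$ of the depths consistent when invoking the nested-restriction identity, and recording that passing to the strict order drops the depth strictly so that well-foundedness transfers from the ordinals.
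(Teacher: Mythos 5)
Your proof is correct and follows exactly the route the paper intends: the paper states this proposition without a separate proof, presenting it as an immediate consequence of Proposition~\ref{prop:restriction} (``we present two propositions regarding the notion of restriction, which show that the induced relation $\leqp$ is a non-strict partial order''), and your argument simply spells out that derivation. The reflexivity/antisymmetry/transitivity steps and the depth-based well-foundedness argument are all sound.
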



Before proceeding to the properties of biworlds, we focus on the sets of \cref{def:new:arrow_sets}. Analogously to what happens in \cref{case:def:biworld_successor} of \cref{def:new:biworld} for biworlds of successor ordinal depth, the union of $\evalLim{\agent}{\world}$ and $\evalLimBar{\agent}{\world}$ is the whole set of biworlds of that depth.

\begin{proposition}\label{prop:new:limitworld_union}
	Let $\limitordinal$ be a limit ordinal, and $\world$ be a \biworld{\limitordinal}. For all $\agent\in\A$, we have  $\evalLim{\agent}{\world}\cup\evalLimBar{\agent}{\world}=\WW^\limitordinal$. 
\end{proposition}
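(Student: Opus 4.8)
The plan is to prove the two inclusions separately, the first being trivial. For $\evalLim{\agent}{\world}\cup\evalLimBar{\agent}{\world}\subseteq\WW^\limitordinal$, note (as already observed right after Definition~\ref{def:new:arrow_sets}) that for every $\ordinal<\limitordinal$ the prebiworld $(\world)_{\ordinal+1}$ is a successor-depth biworld, so by Definition~\ref{def:new:biworld}(\ref{case:def:biworld_successor}) $\possible{\agent}{(\world)_{\ordinal+1}}\cup\impossible{\agent}{(\world)_{\ordinal+1}}$ is the set of all $\ordinal$-biworlds; hence any $\otherworld$ in $\evalLim{\agent}{\world}$ or $\evalLimBar{\agent}{\world}$ has $(\otherworld)_\ordinal\in\WW^\ordinal$ for all $\ordinal<\limitordinal$ and is therefore a $\limitordinal$-biworld by Definition~\ref{def:new:biworld}(c). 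For the reverse inclusion $\WW^\limitordinal\subseteq\evalLim{\agent}{\world}\cup\evalLimBar{\agent}{\world}$, I would fix a $\limitordinal$-biworld $\otherworld$ and argue by contradiction: if $\otherworld$ lies in neither set, then unwinding Definition~\ref{def:new:arrow_sets} there are $\smallerordinal,\evensmallerordinal<\limitordinal$ with $(\otherworld)_{\smallerordinal}\notin\possible{\agent}{(\world)_{\smallerordinal+1}}$ and $(\otherworld)_{\evensmallerordinal}\notin\impossible{\agent}{(\world)_{\evensmallerordinal+1}}$. Using again that $\possible{\agent}{\cdot}\cup\impossible{\agent}{\cdot}$ exhausts all biworlds at each successor level and that $(\otherworld)_{\smallerordinal},(\otherworld)_{\evensmallerordinal}$ are biworlds (Definition~\ref{def:new:biworld}(c)), this forces $(\otherworld)_{\smallerordinal}\in\impossible{\agent}{(\world)_{\smallerordinal+1}}$ and $(\otherworld)_{\evensmallerordinal}\in\possible{\agent}{(\world)_{\evensmallerordinal+1}}$.

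The key auxiliary fact I would isolate is a restriction-compatibility statement along the sequence defining $\world$: for $\gamma\le\delta<\limitordinal$, one has $\possible{\agent}{(\world)_{\gamma+1}}=\{\restr{u}{\gamma}\mid u\in\possible{\agent}{(\world)_{\delta+1}}\}$ and $\impossible{\agent}{(\world)_{\gamma+1}}=\{\restr{u}{\gamma}\mid u\in\impossible{\agent}{(\world)_{\delta+1}}\}$. This follows because the sequence defining a limit prebiworld is $\leqp$-increasing (Definition~\ref{def:new:prebiworld}), so $\restr{(\world)_{\delta+1}}{\gamma+1}=(\world)_{\gamma+1}$ (Definition~\ref{def:precision}), after which the successor/successor clause of Definition~\ref{def:new:restriction} applied to $(\world)_{\delta+1}$ yields exactly the displayed descriptions. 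In particular $u\in\possible{\agent}{(\world)_{\delta+1}}$ implies $\restr{u}{\gamma}\in\possible{\agent}{(\world)_{\gamma+1}}$, and likewise for $\impossible{\agent}{\cdot}$. I would also record the trivial companion fact that $\restr{(\otherworld)_{\delta}}{\gamma}=(\otherworld)_{\gamma}$ for $\gamma\le\delta$, again because the sequence defining $\otherworld$ is $\leqp$-increasing.

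To close, I would do a case split on the order of $\smallerordinal$ and $\evensmallerordinal$. If $\smallerordinal\le\evensmallerordinal$, then from $(\otherworld)_{\evensmallerordinal}\in\possible{\agent}{(\world)_{\evensmallerordinal+1}}$ and the auxiliary fact with $\gamma=\smallerordinal$, $\delta=\evensmallerordinal$, I get $(\otherworld)_{\smallerordinal}=\restr{(\otherworld)_{\evensmallerordinal}}{\smallerordinal}\in\possible{\agent}{(\world)_{\smallerordinal+1}}$, contradicting the choice of $\smallerordinal$. If $\evensmallerordinal\le\smallerordinal$, then symmetrically from $(\otherworld)_{\smallerordinal}\in\impossible{\agent}{(\world)_{\smallerordinal+1}}$ I get $(\otherworld)_{\evensmallerordinal}\in\impossible{\agent}{(\world)_{\evensmallerordinal+1}}$, contradicting the choice of $\evensmallerordinal$. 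Either way we reach a contradiction, so $\otherworld\in\evalLim{\agent}{\world}\cup\evalLimBar{\agent}{\world}$. I expect the main obstacle to be the restriction-compatibility fact: the substantive point is recognizing that the possible/impossible sets at consecutive levels of $\world$ are tied together by the restriction operation, so that—combined with the fact that at every level these two sets jointly cover all biworlds—the "possible versus impossible" verdict for a fixed $\otherworld$ cannot be made independently at each level but is forced to be uniform across all levels below $\limitordinal$.
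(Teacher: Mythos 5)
Your proof is correct and uses essentially the same ingredients as the paper's: the exhaustiveness of $\possible{\agent}{\cdot}\cup\impossible{\agent}{\cdot}$ at each successor level together with the restriction-compatibility of these sets along the $\leqp$-increasing sequence defining $\world$. The paper organizes this directly (take $\world'\notin\evalLimBar{\agent}{\world}$, locate a level where $(\world')_{\ordinal'}\in\possible{\agent}{(\world)_{\ordinal'+1}}$, and propagate membership down and non-membership up), whereas you argue by contradiction with a case split on two witnessing ordinals, but this is only a cosmetic reorganization of the same argument.
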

\extended{
	\begin{proof}
		Let $\agent \in\A$. Clearly, $\evalLim{\agent}{\world}\cup\evalLimBar{\agent}{\world}\subseteq\WW^\limitordinal$. To show the other inclusion, let $\world'\in\WW^\limitordinal\setminus \bar\agent\uparrow\world$. Then there exists $\ordinal'<\limitordinal$ such that $(\world')_{\ordinal'}\notin\impossible{\agent}{(\world)_{\ordinal'+1}}$. Since \world and $\world'$ are biworlds, we must have $(\world')_{\ordinal'}\in\possible{\agent}{(\world)_{\ordinal'+1}}$.  By the precision order of limit biworlds, for all $\ordinal<\ordinal'$, we get $(\world')_{\ordinal}\in\possible{\agent}{(\world)_{\ordinal+1}}$, and for all $\ordinal>\ordinal'$ we get $(\world')_{\ordinal}\notin\impossible{\agent}{(\world)_{\ordinal+1}}$. In particular, this implies $(\world')_{\ordinal}\in\possible{\agent}{(\world)_{\ordinal+1}}$ for all $\ordinal>\ordinal'$. Hence, $\world'\in\evalLim{\agent}{\world}$ as desired.
	\end{proof}
}

It is interesting to notice that the second part of  \cref{case:def:biworld_successor} of \cref{def:new:biworld} does not hold in general for the sets $\evalLim{\agent}{\world}$ and $\evalLimBar{\agent}{\world}$, \ie their intersection might contain completed biworlds (see \cref{ex:intersection_with_completed}). Nevertheless, this intersection tells us something about the completedness of $\world$, as stated in the following proposition.

\begin{proposition}\label{prop:empty_uparrow_then_completed}\label{prop:empty_then_completed_limit}
	Let $\limitordinal$ be a limit ordinal and $\world$ a \biworld{\limitordinal}. If $\evalLim{\agent}{\world}\cap\evalLimBar{\agent}{\world}=\emptyset$ for all $\agent\in\A$, then \world is completed.
\end{proposition}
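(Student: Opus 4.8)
The plan is to show directly that $\world$ has \emph{at most one} extension to a \biworld{\limitordinal{+}1}; since being \emph{completed} means not being incompleted, and by \cref{def:new:incompleted} incompleted means having two distinct \biworld{\limitordinal{+}1} extensions, this suffices. So I would fix an arbitrary \biworld{\limitordinal{+}1} $\otherworld$ with $\otherworld\geqp\world$ and prove that $\otherworld$ is uniquely determined by $\world$ --- concretely, that $\otherworld=\big(\objworld{\world},(\evalLim{\agent}{\world})_{\agent\in\A},(\evalLimBar{\agent}{\world})_{\agent\in\A}\big)$.

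The first step is to unwind $\otherworld\geqp\world$. By \cref{def:precision} it means $\restr{\otherworld}{\limitordinal}=\world$, and since $\limitordinal$ is a limit ordinal the limit clause of \cref{def:new:restriction} gives $(\world)_{\ordinal}=\restr{\otherworld}{\ordinal}$ for all $\ordinal<\limitordinal$; in particular $\objworld{\otherworld}=\objworld{\world}$. For a successor $\ordinal{+}1<\limitordinal$, applying the successor-ordinal clause of \cref{def:new:restriction} to $\otherworld$ (which has depth $\limitordinal{+}1$) yields, for every agent $\agent$, $\possible{\agent}{(\world)_{\ordinal+1}}=\{\restr{u}{\ordinal}\mid u\in\possible{\agent}{\otherworld}\}$ and likewise $\impossible{\agent}{(\world)_{\ordinal+1}}=\{\restr{u}{\ordinal}\mid u\in\impossible{\agent}{\otherworld}\}$. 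From this I would read off the forward inclusions: if $u\in\possible{\agent}{\otherworld}$, then for every $\ordinal<\limitordinal$ we have $(u)_{\ordinal}=\restr{u}{\ordinal}\in\possible{\agent}{(\world)_{\ordinal+1}}$ (using the straightforward identity $\restr{u}{\ordinal}=(u)_{\ordinal}$ valid for biworlds $u$, which follows from \cref{def:new:restriction} together with \cref{prop:restriction}), so $u\in\evalLim{\agent}{\world}$ by \cref{def:new:arrow_sets}; hence $\possible{\agent}{\otherworld}\subseteq\evalLim{\agent}{\world}$, and symmetrically $\impossible{\agent}{\otherworld}\subseteq\evalLimBar{\agent}{\world}$.

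The second step is a partition argument. Because $\otherworld$ is a \biworld{\limitordinal{+}1}, \cref{def:new:biworld} gives $\possible{\agent}{\otherworld}\cup\impossible{\agent}{\otherworld}=\WW^{\limitordinal}$; by \cref{prop:new:limitworld_union} we have $\evalLim{\agent}{\world}\cup\evalLimBar{\agent}{\world}=\WW^{\limitordinal}$; and by hypothesis $\evalLim{\agent}{\world}\cap\evalLimBar{\agent}{\world}=\emptyset$. Now take $u\in\evalLim{\agent}{\world}$: since $u\in\WW^{\limitordinal}=\possible{\agent}{\otherworld}\cup\impossible{\agent}{\otherworld}$, and $u\in\impossible{\agent}{\otherworld}\subseteq\evalLimBar{\agent}{\world}$ would contradict disjointness, we get $u\in\possible{\agent}{\otherworld}$. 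With the reverse inclusion this gives $\possible{\agent}{\otherworld}=\evalLim{\agent}{\world}$, and symmetrically $\impossible{\agent}{\otherworld}=\evalLimBar{\agent}{\world}$, for every agent $\agent$. Together with $\objworld{\otherworld}=\objworld{\world}$ this pins down $\otherworld$ as claimed, so $\world$ has at most one \biworld{\limitordinal{+}1} extension and is therefore completed.

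The only place requiring care is the first step: correctly reading \cref{def:new:restriction} in the ``mixed'' case where the ambient depth $\limitordinal{+}1$ is a successor ordinal while the target depths $\ordinal<\limitordinal$ range over all ordinals (limit ones included), and checking that the resulting membership condition matches the definition of $\evalLim{\agent}{\world}$ verbatim. After that, the argument is routine set-theoretic bookkeeping.
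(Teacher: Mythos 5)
Your proof is correct and follows essentially the same route as the paper's: fix an arbitrary \biworld{\limitordinal{+}1} extension $\otherworld$ of $\world$, derive $\possible{\agent}{\otherworld}\subseteq\evalLim{\agent}{\world}$ and $\impossible{\agent}{\otherworld}\subseteq\evalLimBar{\agent}{\world}$ from the definition of restriction, and then use $\possible{\agent}{\otherworld}\cup\impossible{\agent}{\otherworld}=\WW^\limitordinal$ together with \cref{prop:new:limitworld_union} and the disjointness hypothesis to force equality, which pins down $\otherworld$ uniquely. Your write-up is in fact somewhat more careful than the paper's about unwinding the restriction clauses.
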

\extended{
	\begin{proof}
		Suppose $w' \geqp w$ is a $\lambda+1$-biworld. Then $\objworld{w} = (w)_0$. Furthermore,for each $\smallerordinal<\limitordinal$ and for each $\agent\in\A$, we have $\possible{\agent}{\world_{\smallerordinal+1}}=\possible{\agent}{\restr{\world'}{\smallerordinal+1}}$ and  $\impossible{\agent}{\world_{\smallerordinal+1}}=\impossible{\agent}{\restr{\world'}{\smallerordinal+1}}$. Hence, for each $\agent\in\A$, $\possible{A}{w'} \subseteq \evalLim{\agent}{\world}$ and $\impossible{A}{w'} \subseteq \evalLimBar{\agent}{\world}$. But since $\evalLim{\agent}{\world}\cap\evalLimBar{\agent}{\world}=\emptyset$ and $\possible{\agent}{\world'}\cup\impossible{\agent}{\world'}=\WW^\limitordinal$, the only way this can hold is when $\possible{A}{w'} = \evalLim{\agent}{\world}$ and $\impossible{A}{w'} = \evalLimBar{\agent}{\world}$. 
		This fully determines $w'$, so there cannot be two distinct $\lambda+1$-biworlds $w',w'' \geqp w$.
	\end{proof}
}

\begin{example}\label{ex:intersection_with_completed}
	Consider the completed \biworld{\omega} $v$ defined in \cref{ex:compl_notpostcompl_omegabiw}. It is easy to see that $\evalLim{\agent}{v}=\{v\}$ and $\evalLimBar{\agent}{v}=\WW^\omega$. Hence, we get $\evalLim{\agent}{v}\cup \evalLimBar{\agent}{v}=\WW^\omega$ and $\evalLim{\agent}{v}\cap\evalLimBar{\agent}{v}=\{v\}$, which agrees with \cref{prop:new:limitworld_union} and shows that the converse of \cref{prop:empty_then_completed_limit} does not hold. Moreover, notice that $\evalLim{\agent}{v}\cap\evalLimBar{\agent}{v}$ contains a completed biworld.
\end{example}

\extended{
We present one last result regarding the sets defined in Definition \ref{def:new:arrow_sets}, that turns out to be fundamental to prove the lemmas leading to \cref{thm:four_properties}. 

\begin{proposition}\label{prop:ext_to_limit}
	Let $\limitordinal$ be a limit ordinal, $w$ be a \biworld{\limitordinal}, $\agent\in\A$ be an agent, and $\ordinal<\limitordinal$ be an ordinal. If $\otherworld\in\possible{\agent}{(w)_{\ordinal+1}}$, then there exists $\otherworld'\in\evalLim{\agent}{\world}$ such that $\otherworld'\geqp\otherworld$.  If $\otherworld\in\impossible{\agent}{(w)_{\ordinal+1}}$, then there exists $\otherworld'\in\evalLimBar{\agent}{\world}$ such that $\otherworld'\geqp\otherworld$.
\end{proposition}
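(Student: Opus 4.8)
The proposition has two symmetric halves — one about $\evalLim{\agent}{\world}$, one about $\evalLimBar{\agent}{\world}$ — proved by the same argument, so I describe the first. Unwinding \cref{def:new:arrow_sets}, what must be produced from a given $\otherworld\in\possible{\agent}{(\world)_{\ordinal+1}}$ is a \prebiworld{\limitordinal} $\otherworld'$ such that $\restr{\otherworld'}{\smallerordinal}\in\possible{\agent}{(\world)_{\smallerordinal+1}}$ for every $\smallerordinal<\limitordinal$ and such that $\restr{\otherworld'}{\ordinal}=\otherworld$; then $\otherworld'\in\evalLim{\agent}{\world}$ and $\otherworld'\geqp\otherworld$, as required.

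The first ingredient is a restriction-compatibility fact: for all ordinals $\gamma\le\gamma'<\limitordinal$ the restriction map $z\mapsto\restr{z}{\gamma}$ sends $\possible{\agent}{(\world)_{\gamma'+1}}$ \emph{onto} $\possible{\agent}{(\world)_{\gamma+1}}$ (and analogously for the sets $\impossible{\agent}{(\world)_{\gamma+1}}$). Indeed $(\world)_{\gamma+1}=\restr{(\world)_{\gamma'+1}}{\gamma+1}$ by \cref{prop:restriction}, and since $(\world)_{\gamma'+1}$ has successor depth $\gamma'+1$, \cref{def:new:restriction}, applied to the successor ordinal $\gamma+1$, evaluates $\possible{\agent}{\restr{(\world)_{\gamma'+1}}{\gamma+1}}=\{\restr{z}{\gamma}\mid z\in\possible{\agent}{(\world)_{\gamma'+1}}\}$.

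Since $\limitordinal$ is a countable limit ordinal, fix a strictly increasing sequence $\ordinal=\smallerordinal_0<\smallerordinal_1<\dots$ of ordinals below $\limitordinal$ with supremum $\limitordinal$. Using the surjectivity just noted, construct by recursion on $n$ a sequence $(z_n)_{n<\omega}$ with $z_0=\otherworld$, $z_n\in\possible{\agent}{(\world)_{\smallerordinal_n+1}}$, and $\restr{z_{n+1}}{\smallerordinal_n}=z_n$: at step $n$, since $z_n\in\possible{\agent}{(\world)_{\smallerordinal_n+1}}=\{\restr{z}{\smallerordinal_n}\mid z\in\possible{\agent}{(\world)_{\smallerordinal_{n+1}+1}}\}$, choose a witness $z_{n+1}$. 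By \cref{prop:restriction} the $z_n$ are mutually compatible, so for every $\smallerordinal<\limitordinal$ the value $\restr{z_n}{\smallerordinal}$ is independent of the choice of $n$ with $\smallerordinal\le\smallerordinal_n$; denote it $(\otherworld')_\smallerordinal$ and set $\otherworld':=((\otherworld')_\smallerordinal)_{\smallerordinal<\limitordinal}$. Again by \cref{prop:restriction} this sequence is increasing, hence a \prebiworld{\limitordinal}, with $\restr{\otherworld'}{\smallerordinal}=(\otherworld')_\smallerordinal$ for each $\smallerordinal<\limitordinal$; picking $n$ with $\smallerordinal\le\smallerordinal_n$ and applying restriction-compatibility at depth $\smallerordinal_n+1$ gives $\restr{\otherworld'}{\smallerordinal}=\restr{z_n}{\smallerordinal}\in\possible{\agent}{(\world)_{\smallerordinal+1}}$, while $\restr{\otherworld'}{\ordinal}=(\otherworld')_\ordinal=\restr{z_0}{\ordinal}=\otherworld$. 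Thus $\otherworld'\in\evalLim{\agent}{\world}$ with $\otherworld'\geqp\otherworld$. Replacing $\possible{\agent}{}$ by $\impossible{\agent}{}$ and $\evalLim{\agent}{\world}$ by $\evalLimBar{\agent}{\world}$ throughout yields the second half.

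The point requiring care is that a naive transfinite recursion on $\smallerordinal<\limitordinal$ building $(\otherworld')_\smallerordinal$ level by level does \emph{not} work: at a limit level $\evensmallerordinal$ the freely assembled $((\otherworld')_\smallerordinal)_{\smallerordinal<\evensmallerordinal}$ lies in $\evalLim{\agent}{(\restr{\world}{\evensmallerordinal})}$ but need not lie in $\possible{\agent}{(\world)_{\evensmallerordinal+1}}$, since the former can be strictly larger than the latter — this is precisely the gap exhibited by \cref{ex:intersection_with_completed}, where a completed biworld belongs to both ``$\uparrow$''-sets while it can belong to only one of $\possible{\agent}{\cdot}$, $\impossible{\agent}{\cdot}$. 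The remedy is the inverse-limit style construction above, which threads a single coherent sequence through the sets $\possible{\agent}{(\world)_{\smallerordinal_n+1}}$ along an $\omega$-chain cofinal in $\limitordinal$; this is possible exactly because restriction between consecutive successor levels is surjective and $\limitordinal$, being countable, is the supremum of such an $\omega$-chain.
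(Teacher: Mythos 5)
Your proof is correct and follows essentially the same route as the paper's: both exploit that restriction maps $\possible{\agent}{(\world)_{\gamma'+1}}$ onto $\possible{\agent}{(\world)_{\gamma+1}}$, pick a countable cofinal $\omega$-chain in $\limitordinal$, thread a coherent sequence of extensions along it, and assemble the resulting $\limitordinal$-prebiworld (the paper packages the step-by-step choices into choice functions $f_A^\gamma$, but the construction is the same). Your closing remark about why a naive level-by-level transfinite recursion fails at limit stages is a nice addition not made explicit in the paper.
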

	\begin{proof}
		Recall that we only consider countable ordinals. We prove the part of the statement with $\otherworld\in\possible{\agent}{(w)_{\ordinal+1}}$, and the second part for $\otherworld\in\impossible{\agent}{(w)_{\ordinal+1}}$ is analogous.
		By the precision order of limit biworlds, for each ordered pair $(\beta, \gamma)$ of ordinals with $\evensmallerordinal<\gamma<\limitordinal$ and for each $u\in\possible{\agent}{\world_{\evensmallerordinal+1}}$, there exists $u'\in\possible{\agent}{\world_{\gamma+1}}$ such that $u'\geqp u$. 
		So by the axiom of choice, there exists a family of functions $f_A^\gamma\colon W_A^\gamma\to \possible{\agent}{\world_{\gamma+1}}$ for $\gamma<\limitordinal$, where $W_A^\gamma:=\bigcup_{\beta<\gamma}\possible{\agent}{\world_{\beta+1}}$, such that for each $u\in W_A^\gamma$, we have $f_A^\gamma(u)\geqp u$.
		Since \limitordinal is a countable limit ordinal, 
		 we can choose a strictly increasing sequence $(\rho_n)_{n<\omega}$ of ordinals which has limit \limitordinal. W.l.o.g.\ we assume $\rho_0>\ordinal$.
		Then, we define $\tilde{v}\in\WW^\smallerordinal$ as follows: 
		\begin{align}\label{going-up_construction}
			\tilde{v}_\evensmallerordinal:=
			\begin{cases}
				\restr{v}{\evensmallerordinal} & \text{ if }\evensmallerordinal\leq\ordinal
				\\  \restr{f_A^{\rho_0}(v)}{\evensmallerordinal} & \text{ if }\ordinal<\evensmallerordinal\leq \rho_0
				\\ \restr{f_A^{\rho_{n+1}}(\tilde{v}_{\rho_n})}{\evensmallerordinal} & \text{ if }\rho_n<\evensmallerordinal\leq \rho_{n+1}
			\end{cases}.
		\end{align}
		By the definition of the functions $f_A^{\rho_i}$ and of restriction of biworlds, it is clear that $\tilde{v}\in\evalLim{\agent}{\world}$ and $\tilde{v}\geqp v$.
	\end{proof}
}
\extended{
The remaining definitions capture the four properties listed at the beginning of this section, but only at limited depth. They are used to inductively prove \cref{lem:well-behaved}. 

The first property is \restrictability, \ie whenever we take the restriction of a biworld, the result is also a biworld. A priori, it is only known that this is a prebiworld. 

\begin{definition}\label{def:restrictability_holds}
	Let $\mu$ be an ordinal. We say that \emph{\restrictabilityHoldsAt[\mu]} ($\mathrm{R}_\ordinal$) if for every \biworld{\ordinal} $\world$, and every $\alpha \leq \mu$, $\restr{\world}{\alpha}$ is again a biworld.
\end{definition}

The second property is \monotonicityCompl, which states that any extension of a completed biworld is itself completed. Intuitively, we expect this to hold: if one biworld has complete knowledge, refining it can only result in structures that still have complete knowledge. 

\begin{definition}\label{def:monotonicity_compl_holds}
	Let $\mu$ be an ordinal. We say that \emph{\monotonicityComplHoldsAt[\mu]} ($\mathrm{MoC}_\ordinal$) if for each \biworld{\ordinal} $\world$ and each ordinal $\alpha \leq \mu$, whenever $\world|_\alpha$ is completed,  so is $\world$.  
\end{definition}

The third useful property is \completability. It states that any biworld can always be extended to a completed one. This is in line with or view of incompleted biworlds as approixmations of some completed biworlds. 
\begin{definition}\label{def:completability_holds}
	Let $\mu$ be an ordinal. We say that \emph{\completabilityHoldsAt[\mu]} ($\mathrm{C}_\ordinal$) if for every $\alpha<\mu$, and every \biworld{\smallerordinal} \world, there exists a completed \biworld{\ordinal} $\world'$ with $\world'\geqp\world$. 
\end{definition}

The fourth and last property is a characterization of completedness for a biworld whose depth is a successor ordinal. 

\begin{definition}\label{def:compl_condition_holds}
	Let $\mu$ be a successor ordinal. We say that \emph{\complCondHoldsAt[\mu]} ($\mathrm{CC}_\ordinal$) if for each \biworld{\ordinal} $\world$, $\world$ is completed if and only if for each agent $A$, $A^w\cap \bar{A}^w = \emptyset$. 
\end{definition}


\begin{definition}
	We say that biworlds are \emph{well-behaved until $\mu$} if restrictability, completability and monotonicity of completedness hold at all ordinals smaller than $\mu$ and the completedness condition holds at all successor ordinals smaller than $\pl{\mu}$, where $\pl{\mu}$ denotes the largest limit ordinal smaller or equal to $ \mu$.
\end{definition}


In order to prove that every biworld enjoys the four properties listed at the beginning of this section, we need to prove \cref{lem:well-behaved} below.}
\conf{The following theorem formally states the fundamental properties of biworlds presented at the beginning of this section.}
The proof of it requires several intermediate technical lemmas\conf{, which we omit due to space limitations}\extended{, whose interplay is sketched in Figure~\ref{fig:relations_lemmas}}.
\ignore{Nevertheless, we provide a sketch of the proof.}

\extended{
\begin{figure}[t]
	\centering
	\vspace{-3mm}
	\includegraphics[width=0.8\columnwidth]{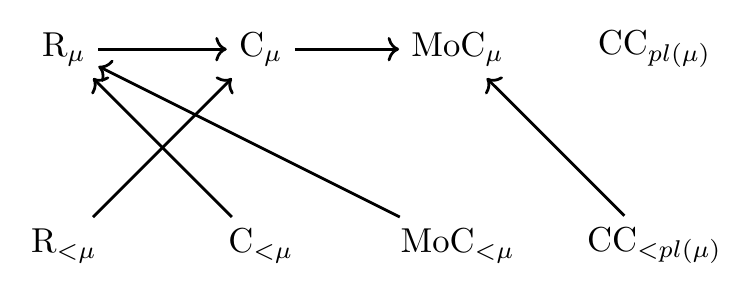}
	\vspace{-4mm}
	\caption{The figure depicts the relations \conf{between the concepts introduced in the sketch of the proof of \cref{thm:four_properties}, and proved in the omitted lemmas}\extended{expressed in the Lemmas \ref{lem:restrict}, \ref{lem:completability}, \ref{lem:monotonicity}, and \ref{lem:cc}}. By $\mathrm{R}_{<\mu}$, we denote that $\mathrm{R}_\alpha$ holds for all $\alpha<\ordinal$; analogously for $\mathrm{C}_{<\mu}$, and $\mathrm{MoC}_{<\mu}$. By $\mathrm{CC}_{<pl(\mu)}$ we denote that $\mathit{CC}_\alpha$ holds for all successor ordinals $\alpha<pl(\mu)$. A concept is true if the bodies of all the arrows pointing at it are true. For example: if $\mathrm{C}_{<\mu}$ and $\mathrm{MoC}_{<\mu}$, then $\mathrm{R}_{\mu}$. For the limit ordinal case, we need an additional implication not represented in this figure for the sake of simplicity, namely $\mathrm{C}_{\mu+2} \implies \mathrm{CC}_{\mu+1}$ (Lemma~\ref{lem:cc}).}
	\label{fig:relations_lemmas}
	\vspace{-4mm}
\end{figure}
}

\extended{
\begin{lemma}\label{lem:empty_then_completed}
	Let $\world$ be a \biworld{\ordinal{+}1}. If $\possible{\agent}{\world}\cap\impossible{\agent}{\world}=\emptyset$ for all $\agent\in\A$, then \world is completed.
\end{lemma}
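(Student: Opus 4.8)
The plan is to exploit the emptiness hypothesis to show that \emph{every} \biworld{\ordinal{+}2} extending $\world$ is one and the same; consequently $\world$ has at most one \biworld{\ordinal{+}2} extension, so it is not incompleted in the sense of Definition~\ref{def:new:incompleted}, \ie it is completed.

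First I would unpack the data. Write $\world = (\objworld{\world}, (\possible{\agent}{\world})_{\agent\in\A}, (\impossible{\agent}{\world})_{\agent\in\A})$, and let $\world'$ be an arbitrary \biworld{\ordinal{+}2} with $\world'\geqp\world$. Since $\ordinal{+}2=(\ordinal{+}1){+}1$ is a successor ordinal, $\world'$ is a triple $(\objworld{\world'}, (\possible{\agent}{\world'})_{\agent\in\A}, (\impossible{\agent}{\world'})_{\agent\in\A})$ whose components $\possible{\agent}{\world'}$ and $\impossible{\agent}{\world'}$ are sets of \biworld{\ordinal{+}1}s. Instantiating the restriction clause of Definition~\ref{def:new:restriction} in the subcase where both the source depth $\ordinal{+}2$ and the target depth $\ordinal{+}1$ are successor ordinals, and using that $\world'\geqp\world$ means $\restr{\world'}{\ordinal+1}=\world$ (Definition~\ref{def:precision}), I obtain for each agent $\agent$ that $\objworld{\world'}=\objworld{\world}$, that $\{\restr{\otherworld}{\ordinal}\mid\otherworld\in\possible{\agent}{\world'}\}=\possible{\agent}{\world}$, and that $\{\restr{\otherworld}{\ordinal}\mid\otherworld\in\impossible{\agent}{\world'}\}=\impossible{\agent}{\world}$. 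Finally, since $\world'$ is a \biworld{\ordinal{+}2}, Item~\ref{item:union_AandAbar_is_all_biworlds} in Definition~\ref{def:new:biworld} gives $\possible{\agent}{\world'}\cup\impossible{\agent}{\world'}=\WW^{\ordinal+1}$ for every agent $\agent$.

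The core step is to conclude, from these identities together with the hypothesis $\possible{\agent}{\world}\cap\impossible{\agent}{\world}=\emptyset$, that $\possible{\agent}{\world'}$ and $\impossible{\agent}{\world'}$ are forced. Concretely, I claim $\possible{\agent}{\world'}=\{u\in\WW^{\ordinal+1}\mid\restr{u}{\ordinal}\in\possible{\agent}{\world}\}$. The inclusion ``$\subseteq$'' is immediate from $\{\restr{\otherworld}{\ordinal}\mid\otherworld\in\possible{\agent}{\world'}\}=\possible{\agent}{\world}$. For ``$\supseteq$'', let $u\in\WW^{\ordinal+1}$ with $\restr{u}{\ordinal}\in\possible{\agent}{\world}$; by $\possible{\agent}{\world'}\cup\impossible{\agent}{\world'}=\WW^{\ordinal+1}$ we have $u\in\possible{\agent}{\world'}$ or $u\in\impossible{\agent}{\world'}$, and the latter would imply $\restr{u}{\ordinal}\in\impossible{\agent}{\world}$, contradicting $\possible{\agent}{\world}\cap\impossible{\agent}{\world}=\emptyset$; hence $u\in\possible{\agent}{\world'}$. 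The symmetric argument gives $\impossible{\agent}{\world'}=\{u\in\WW^{\ordinal+1}\mid\restr{u}{\ordinal}\in\impossible{\agent}{\world}\}$. Thus $\objworld{\world'}$, $\possible{\agent}{\world'}$ and $\impossible{\agent}{\world'}$ are all determined by the data of $\world$, so any two \biworld{\ordinal{+}2}s extending $\world$ coincide; in particular there are no two \emph{distinct} \biworld{\ordinal{+}2}s $\geqp\world$, which is exactly the statement that $\world$ is completed.

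The argument has no genuine obstacle; its content is merely that the emptiness hypothesis closes the ``incompleted biworld in the intersection'' loophole in \cref{case:def:biworld_successor} of Definition~\ref{def:new:biworld}, after which an extension of $\world$ is rigidly determined. The only points requiring care are clerical: applying the successor/successor case of the restriction definition correctly, so that members of $\possible{\agent}{\world'}$ get restricted down to depth $\ordinal$ rather than $\ordinal{+}1$; and noticing that one needs no prior restrictability result here, since the restrictions $\restr{u}{\ordinal}$ used above automatically lie in $\WW^{\ordinal}$ because $u$ ranges over $\possible{\agent}{\world'}\cup\impossible{\agent}{\world'}$ and the displayed identities equate the $\ordinal$-restrictions of these sets with subsets of $\WW^{\ordinal}$.
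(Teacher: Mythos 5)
Your proof is correct and is essentially the paper's argument: both hinge on the observation that the union condition at level $\mu+2$, combined with the restriction identities $\{\restr{u}{\mu}\mid u\in\possible{\agent}{\world'}\}=\possible{\agent}{\world}$ and its barred analogue, would push any ambiguity in a $\mu{+}2$-extension down to an element of $\possible{\agent}{\world}\cap\impossible{\agent}{\world}$. The paper runs this as a contradiction starting from two distinct extensions, whereas you directly exhibit the unique forced extension, but the mathematical content is the same.
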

\begin{proof}
	By contradiction, suppose there exist \biworld{\ordinal{+}2}s $\otherworld_1$ and $\otherworld_2$, such that $\otherworld_1\neq\otherworld_2$ and $\otherworld_1,\otherworld_2\geqp\world$. Then, there exists $\agent\in\A$ such that either $\possible{\agent}{\otherworld_1}\neq\possible{\agent}{\otherworld_2}$ or $\impossible{\agent}{\otherworld_1}\neq\impossible{\agent}{\otherworld_2}$. W.l.o.g., we can assume that $\possible{\agent}{\otherworld_1}\neq\possible{\agent}{\otherworld_2}$ and that there  exists $u\in\possible{\agent}{\otherworld_1}$ such that $u\notin\possible{\agent}{\otherworld_2}$. Since $\otherworld_2$ and $u$ are biworlds, $u$ must be in $\impossible{\agent}{\otherworld_2}$. Then, by definition of restriction, we have that $\restr{u}{\ordinal}\in \possible{\agent}{\restr{\otherworld_1}{\ordinal+1}}\cap\impossible{\agent}{\restr{\otherworld_2}{\ordinal+1}}=\possible{\agent}{\world}\cap\impossible{\agent}{\world}=\emptyset$, a contradiction.
\end{proof}
}

\extended{
\begin{lemma}
	\label{lem:compl}
	Suppose \monotonicityCompl holds at all ordinals up to and including $\mu$. Suppose $\alpha \leq \mu$ and $w$ is a completed \biworld{\alpha}. Then there is at most one \biworld{\ordinal} $w' \geqp w$. Moreover, if there is a \biworld{\ordinal} $w' \geqp w$, $w'$ is completed.
\end{lemma}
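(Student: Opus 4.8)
The plan is to treat the two assertions separately; the ``moreover'' part is an immediate consequence of the hypothesis, and the uniqueness part needs a short ``first point of disagreement'' argument. First I would dispatch the ``moreover'' part: suppose $w' \geqp w$ is a \biworld{\ordinal}. By Definition~\ref{def:precision} this means precisely that $\restr{w'}{\alpha} = w$. Since $w$ is completed and $\alpha \le \mu$, the assumed \monotonicityCompl at $\mu$ (Definition~\ref{def:monotonicity_compl_holds}) applies verbatim and yields that $w'$ is completed.

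For ``at most one'' I would argue by contradiction: assume $w'_1 \ne w'_2$ are \biworld{\ordinal}s with $w'_1, w'_2 \geqp w$, and locate the first level at which they diverge. Using Proposition~\ref{prop:restriction} together with the limit clause of Definition~\ref{def:new:restriction}, the set $S := \{\beta \le \mu \mid \restr{w'_1}{\beta} = \restr{w'_2}{\beta}\}$ is an initial segment of the ordinals $\le \mu$: it is downward closed (restricting a common restriction further keeps it common), and if $\beta \in S$ for all $\beta < \lambda$ with $\lambda$ a limit ordinal, then $\restr{w'_1}{\lambda}$ and $\restr{w'_2}{\lambda}$ are literally the same transfinite sequence, so $\lambda \in S$. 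Since $\alpha \in S$ and $\mu \notin S$ (otherwise $w'_1 = \restr{w'_1}{\mu} = \restr{w'_2}{\mu} = w'_2$ by Proposition~\ref{prop:restriction}), we obtain $S = [0,\delta]$ for some $\delta$ with $\alpha \le \delta < \mu$; write $u := \restr{w'_1}{\delta} = \restr{w'_2}{\delta}$, and note $\restr{w'_1}{\delta+1} \ne \restr{w'_2}{\delta+1}$.

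Now $\restr{w'_1}{\delta+1}$ and $\restr{w'_2}{\delta+1}$ are two distinct biworlds both extending $u$ --- here I use that a restriction of a biworld is again a biworld, which is immediate from Definition~\ref{def:new:biworld}(c) at limit depth and, at successor depth, is supplied by restrictability at levels below $\mu$ --- so by Definition~\ref{def:new:incompleted} the prebiworld $u$ is incompleted. On the other hand, $u$ is a \biworld{\delta} with $\restr{u}{\alpha} = \restr{w'_1}{\alpha} = w$ completed, so \monotonicityCompl at $\delta$ (which holds since $\delta \le \mu$) forces $u$ to be completed --- a contradiction. Hence $w'_1 = w'_2$, and combined with the first part, if this unique $w'$ exists it is completed.

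I expect the only real obstacle to be the bookkeeping in this last step rather than any deep idea: one must check that the ``first point of disagreement'' $\delta+1$ genuinely exists (which is exactly what Proposition~\ref{prop:restriction} and the limit clause of the restriction definition guarantee, by making $S$ a closed initial segment) and that $u$ and the two disagreeing restrictions are bona fide biworlds, so that Definitions~\ref{def:new:incompleted} and~\ref{def:monotonicity_compl_holds} can be invoked --- precisely the point where restrictability below $\mu$, established jointly with \monotonicityCompl in the transfinite induction behind \cref{lem:well-behaved}, is needed.
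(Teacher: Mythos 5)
Your proof is correct and is essentially the paper's proof: the paper argues by transfinite induction on $\mu$ whose successor and limit cases, once unrolled, are exactly your ``least level of disagreement'' argument, ending in the same clash between Definition~\ref{def:new:incompleted} (the common restriction $u$ at level $\delta$ has two distinct $\delta{+}1$-biworld extensions, hence is incompleted) and monotonicity of completedness (which forces $u$ to be completed since $\restr{u}{\alpha}=w$ is). The reliance on restrictability that you flag --- needed to know that $u$ and the two disagreeing restrictions $\restr{w'_1}{\delta+1}$, $\restr{w'_2}{\delta+1}$ are genuine biworlds, and in fact needed at $\mu$ itself rather than only below $\mu$ --- is not licensed by the lemma's stated hypotheses, but the paper's own proof makes the same silent assumption when it asserts that $w'|_{\mu'}$ and $w''|_{\mu'}$ are biworlds, so this is a shared imprecision (harmless in the context where the lemma is invoked) rather than a gap in your argument.
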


\begin{proof}
	Fix $\alpha$. We prove this lemma by induction over $\mu$.
	\begin{compactitem}
		\item $\mu = \alpha$. Trivial.
		\item $\mu = \mu'+1$ is a successor ordinal greater than $\alpha$. The induction hypothesis is that there is at  most one \biworld{\ordinal'} $v \geqp w$, and any such $v$ is completed. Suppose for a contradiction that there are two distinct \biworld{\ordinal}s $w',w'' \geqp w$. Then $w'|_{\mu'}$ and $w''|_{\mu'}$ are \biworld{\ordinal}s, so by the induction hypothesis $w'|_{\mu'} = w''|_{\mu'}$ and $w'|_{\mu'}$ is complete, which contradicts Definition~\ref{def:new:incompleted}. So there is at most one \biworld{\mu} $w' \geqp w$. Now assume that there is some \biworld{\mu} $w' \geqp w$. Since $w$ is completed and $w'|_\alpha = w$, the monotonicity of completedness for $\mu$ implies that $w$ is completed.
		\item $\mu$ is a limit ordinal. By induction hypothesis, for every $\beta<\mu$ with $\beta \geq \alpha$, there is at most one \biworld{\beta} $w_\beta \geqp w$, and any such $w_\beta$ is completed. Suppose for a contradiction that there are two distinct \biworld{\mu} s$w', w'' \geqp w$. Then for some ordinal $\beta$ with $\alpha < \beta < \mu$, $w''|_\beta \neq w'|_\beta$, contradicting the induction hypothesis (since $w''|_\beta, w'|_\beta \geqp w$). So there is at most one $\mu$-biworld $\geqp w$. Assume that $w' \geqp w$ is such a $\mu$-biworld. Since $w'|_\alpha = w$ and $w$ is completed, the monotonicity of completedness for $\mu$ implies that $w$ is completed.\qedhere
	\end{compactitem}
\end{proof}
}

\extended{
\begin{lemma}
	\label{lem:restrict}
	If monotonicity of completedness and completability hold at all ordinals strictly smaller than $\mu$, then restrictability holds at $\mu$. 
\end{lemma}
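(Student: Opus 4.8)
The plan is to prove the implication by transfinite induction on $\mu$, so that in the course of the argument the induction hypothesis supplies restrictability $\mathrm{R}_\nu$ for every $\nu<\mu$ (this is legitimate: whenever $\nu<\mu$, the assumed $\mathrm{MoC}_{<\mu}$ and $\mathrm{C}_{<\mu}$ in particular give $\mathrm{MoC}_{<\nu}$ and $\mathrm{C}_{<\nu}$). The cases $\mu=0$ and $\mu$ a limit ordinal are immediate. Indeed, if $\mu$ is a limit ordinal, then a $\mu$-biworld $w$ is by Definition~\ref{def:new:biworld}(c) an increasing sequence all of whose entries $(w)_\alpha$ ($\alpha<\mu$) are $\alpha$-biworlds, and one reads off from Definition~\ref{def:new:restriction} together with Proposition~\ref{prop:restriction} that $\restr{w}{\alpha}=(w)_\alpha$ for every $\alpha<\mu$ and $\restr{w}{\mu}=w$; hence every restriction of $w$ is already a biworld. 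So the whole content lies in the successor case $\mu=\mu'+1$; note that here the hypotheses make $\mathrm{C}_{\mu'}$, $\mathrm{MoC}_{\delta}$ for all $\delta\le\mu'$, and (via the outer induction) $\mathrm{R}_{\mu'}$ all available.

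Fix a $\mu$-biworld $w$ with $\mu=\mu'+1$. I would prove, by a second (nested) transfinite induction on $\alpha\le\mu'$, that $\restr{w}{\alpha}$ is an $\alpha$-biworld; the remaining case $\alpha=\mu$ is trivial. For $\alpha=0$ this is clear, and for $\alpha$ a limit ordinal, $\restr{w}{\alpha}=(\restr{w}{\beta})_{\beta<\alpha}$ is an increasing sequence (by Proposition~\ref{prop:restriction}) whose entries are biworlds (by the nested induction hypothesis), hence an $\alpha$-biworld by Definition~\ref{def:new:biworld}(c). The interesting case is $\alpha=\alpha'+1$. By Definition~\ref{def:new:restriction}, $\restr{w}{\alpha}$ is the prebiworld with objective interpretation $\objworld{w}$ and with $\possible{\agent}{\restr{w}{\alpha}}=\{\restr{v}{\alpha'}\mid v\in\possible{\agent}{w}\}$ and $\impossible{\agent}{\restr{w}{\alpha}}=\{\restr{v}{\alpha'}\mid v\in\impossible{\agent}{w}\}$, so it remains to verify conditions~(1) and~(2) of Definition~\ref{def:new:biworld}(b) for it, for each agent $\agent$.

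For condition~(1), observe that $\possible{\agent}{w}\cup\impossible{\agent}{w}=\WW^{\mu'}$ because $w$ is a biworld, so $\possible{\agent}{\restr{w}{\alpha}}\cup\impossible{\agent}{\restr{w}{\alpha}}=\{\restr{v}{\alpha'}\mid v\in\WW^{\mu'}\}$. This set is contained in $\WW^{\alpha'}$ since $\mathrm{R}_{\mu'}$ (from the outer induction) makes $\restr{v}{\alpha'}$ an $\alpha'$-biworld for each $\mu'$-biworld $v$; and it contains all of $\WW^{\alpha'}$ by completability, because $\mathrm{C}_{\mu'}$ guarantees that every $\alpha'$-biworld has a (completed) $\mu'$-biworld extension. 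For condition~(2), let $u\in\possible{\agent}{\restr{w}{\alpha}}\cap\impossible{\agent}{\restr{w}{\alpha}}$ and choose $v_1\in\possible{\agent}{w}$, $v_2\in\impossible{\agent}{w}$ with $\restr{v_1}{\alpha'}=\restr{v_2}{\alpha'}=u$; I must show $u$ is incompleted. If $v_1=v_2$, then $v_1\in\possible{\agent}{w}\cap\impossible{\agent}{w}$, which is incompleted because $w$ is a biworld; set $u':=v_1$ and $\delta:=\mu'$. If $v_1\neq v_2$, let $\gamma$ be the least ordinal with $\restr{v_1}{\gamma}\neq\restr{v_2}{\gamma}$; then $\gamma\le\mu'$, and minimality forces $\gamma$ to be a successor $\gamma=\gamma'+1$ with $\alpha'\le\gamma'<\mu'$ (were $\gamma$ a limit, the entries below $\gamma$ would agree and hence so would $\restr{v_1}{\gamma}$ and $\restr{v_2}{\gamma}$). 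Then $u':=\restr{v_1}{\gamma'}=\restr{v_2}{\gamma'}$ is a $\gamma'$-biworld (by $\mathrm{R}_{\mu'}$) admitting the two distinct $(\gamma'+1)$-biworld extensions $\restr{v_1}{\gamma}$ and $\restr{v_2}{\gamma}$, so $u'$ is incompleted; set $\delta:=\gamma'$. In both cases $u'$ is an incompleted $\delta$-biworld with $\alpha'\le\delta\le\mu'$ and $\restr{u'}{\alpha'}=u$ (Proposition~\ref{prop:restriction}), so applying $\mathrm{MoC}_{\delta}$ in contrapositive form yields that $u$ is incompleted, which completes condition~(2), and with it the nested induction and the successor case.

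I expect the crux to be condition~(2): one must extract incompletedness of the ``shallow'' restriction $u$ from incompletedness detected at a higher level, and the case split on whether the witnesses $v_1,v_2$ coincide — together with pinpointing the branching ordinal and arguing it is a successor — is the fiddly part. Conceptually, the proof is organised around the fact that completability supplies the ``surjectivity'' half of condition~(1) while monotonicity of completedness, used contrapositively, transports incompletedness downward for condition~(2); restrictability at the predecessor level, which the transfinite induction on $\mu$ hands us, is what makes every intermediate restriction a genuine biworld, so that these arguments can even be stated.
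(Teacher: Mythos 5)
Your proof is correct and follows the same overall architecture as the paper's: an outer transfinite induction on $\mu$ supplying $\mathrm{R}_{\mu'}$, a nested induction on $\alpha$, condition~(1) of Definition~\ref{def:new:biworld}(b) obtained from $\mathrm{R}_{\mu'}$ (for $\subseteq$) and $\mathrm{C}_{\mu'}$ (for $\supseteq$), and condition~(2) obtained from monotonicity of completedness. The one place you diverge is condition~(2): the paper argues by contradiction via an auxiliary result (Lemma~\ref{lem:compl}), which says that a completed biworld has at most one extension to any higher level at which $\mathrm{MoC}$ holds, and that any such extension is completed; from this, a completed $v\in\possible{\agent}{\restr{w}{\alpha}}\cap\impossible{\agent}{\restr{w}{\alpha}}$ would force its two witnesses in $\possible{\agent}{w}$ and $\impossible{\agent}{w}$ to coincide and be completed, contradicting $w$ being a biworld. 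You instead inline exactly this content: your case split on whether $v_1=v_2$, together with locating the least branching ordinal (necessarily a successor) to exhibit two distinct extensions of a common restriction, and then transporting incompletedness downward via the contrapositive of $\mathrm{MoC}_\delta$, is a direct re-derivation of the needed instance of Lemma~\ref{lem:compl}. Both arguments rest on the same hypotheses; yours is self-contained, the paper's factors the step out into a reusable lemma. No gap either way.
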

\begin{proof}
	We prove this by induction over $\mu$. The induction hypothesis $\IH_0$ is that restrictability holds at every ordinal $\beta < \mu$.
	\begin{compactitem}
		\item $\mu = 0$. Trivial.
		\item $\mu=\mu'+1$ is a successor ordinal. Let $w$ be a $\mu$-biworld. We need to show that for every $\alpha \leq \mu$, $w|_\alpha$ is a biworld. We prove this by induction over $\alpha$. The induction hypothesis $\IH_1$ is that for any $\alpha' < \alpha$, $w|_{\alpha'}$ is a biworld.
		\begin{compactitem}
			\item $\alpha = 0$. $w|_0$ is a biworld by Definition~\ref{def:new:biworld}. 
			\item $\alpha=\alpha'+1$ is a successor ordinal. First we show that for each agent $\agent\in\A$, $\possible{\agent}{\restr{\world}{\alpha}}\cap \impossible{\agent}{\restr{\world}{\alpha}}$ does not contain completed biworlds. For this, suppose $v \in \possible{\agent}{w|_\alpha} \cap \impossible{\agent}{w|_\alpha}$. By Definition~\ref{def:new:restriction}, there are worlds $w' \in \possible{\agent}{w}$ and $w'' \in \impossible{\agent}{w}$ such that $w'|_\alpha = w''|_\alpha = v$. Suppose for a contradiction that $v$ is completed. Since monotonicity of completedness holds for all ordinals $\leq \mu'$, Lemma~\ref{lem:compl} implies that $w' = w''$ and $w'$ is completed. But $w'=\world'' \in \possible{\agent}{w}\cap \impossible{\agent}{w}=\emptyset$ and $\world'$ being completed contradict the fact that $\world$ is a biworld.\\
			What remains to be shown is that $\possible{\agent}{w|_\alpha} \cup \impossible{\agent}{w|_\alpha} = \WW^{\alpha'}$.\\
			$\subseteq$: Assume $v \in \possible{\agent}{w|_\alpha} \cup \impossible{\agent}{w|_\alpha}$. Without loss of generality, assume $v \in \possible{\agent}{w|_\alpha}$. Since $\mu$ is a successor ordinal, $\possible{\agent}{w|_\alpha} = \{w'|_{\alpha'} \mid w' \in \possible{\agent}{w} \}$. So there is some $w' \in \possible{\agent}{w}$ such that $v = w'|_{\alpha'}$. By Definition~\ref{def:new:biworld}, the fact that $w' \in \possible{\agent}{w}$ implies that $w'$ is a biworld. By induction hypothesis $\IH_0$, restrictability holds at $\mu'$. Since $w'$ is a $\mu'$-biworld and $v = w'|_{\alpha'}$, it follows that $v$ is a biworld, as required.\\
			$\supseteq$: Assume $v \in \WW^{\alpha'}$, \ie $v$ is an $\alpha'$-biworld. Since completability holds at $\mu'$, there is a $\mu'$-biworld $v' \geq v$. Since $w$ is a $\mu$-biworld, $v' \in \possible{\agent}{w} \cup \impossible{\agent}{w}$. Without loss of generality, assume $v' \in \possible{\agent}{w}$. Then $v = V'|_\alpha \in \possible{\agent}{w|_\alpha}$, as required.
			\item $\alpha$ is a limit ordinal. By induction hypothesis $\IH_1$, we have that for any $\alpha' < \alpha$, $w|_{\alpha'}$ is a biworld. By Definition~\ref{def:new:restriction}, $w|_\alpha = (w|_{\alpha'})_{\alpha'<\alpha}$. So by Definition~\ref{def:new:biworld}, $w|_\alpha$ is a biworld, as required.
		\end{compactitem}
		\item $\mu$ is a limit ordinal. This case directly follows from Definition~\ref{def:new:biworld}. \qedhere
	\end{compactitem}
\end{proof}
}

\extended{
\begin{lemma}\label{lem:monotonicity}
	If completability holds at $\mu$ and the completedness condition holds at all successor ordinals smaller than $\pl{\mu}$, then monotonicity of completedness holds at $\mu$. 
\end{lemma}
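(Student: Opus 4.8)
The plan is to prove the statement by transfinite induction on $\mu$, carried out as a component of the combined induction of \cref{lem:well-behaved} that establishes well-behavedness of biworlds. Concretely, when treating level $\mu$ I would already have available: restrictability and completability at all ordinals $\le\mu$ (restrictability and completability at $\mu$ itself having just been secured via \cref{lem:restrict} and \cref{lem:completability}), monotonicity of completedness at every ordinal $<\mu$, and --- by the hypothesis of the lemma --- the completedness condition at every successor ordinal below $\pl{\mu}$. I would then fix a \biworld{\ordinal} $\world$ and an ordinal $\alpha\le\mu$ with $\restr{\world}{\alpha}$ completed, and show that $\world$ is completed; the case $\alpha=\mu$ (which also handles $\mu=0$) is immediate, so I would assume $\alpha<\mu$ and split on whether $\mu$ is a successor or a limit ordinal.

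For the successor case $\mu=\mu'+1$ (so $\alpha\le\mu'$), I would first move down to $\restr{\world}{\mu'}$: by restrictability it is a \biworld{\mu'}, and by \cref{prop:restriction} restricting it further to $\alpha$ gives $\restr{\world}{\alpha}$, which is completed, so monotonicity of completedness at $\mu'$ makes $\restr{\world}{\mu'}$ completed. Now completability at $\mu$ supplies a completed \biworld{\ordinal} $\otherworld$ with $\otherworld\geqp\restr{\world}{\mu'}$; but $\restr{\world}{\mu'}$ being completed means, by \cref{def:new:incompleted}, that at most one \biworld{\ordinal} extends it, and since both $\world$ and $\otherworld$ do, $\world=\otherworld$ is completed. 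The crucial feature here is that completability is precisely what lets me reach level $\mu$ directly, without appealing to monotonicity of completedness at $\mu$ (which is what is being proved).

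For the limit case, $\pl{\mu}=\mu$ and $\alpha+1<\mu$, and I would invoke \cref{prop:empty_then_completed_limit}: it suffices to show $\evalLim{\agent}{\world}\cap\evalLimBar{\agent}{\world}=\emptyset$ for every agent $\agent$. Supposing $u$ lies in this intersection, \cref{def:new:arrow_sets} gives $\restr{u}{\alpha}\in\possible{\agent}{\restr{\world}{\alpha+1}}\cap\impossible{\agent}{\restr{\world}{\alpha+1}}$, where $\restr{\world}{\alpha+1}$ is a \biworld{\alpha+1} (it is a term of the sequence $\world$) whose restriction to $\alpha$ is $\restr{\world}{\alpha}$, hence completed; so monotonicity of completedness at $\alpha+1$ (legitimate since $\alpha+1<\mu$) makes $\restr{\world}{\alpha+1}$ completed, and then --- $\alpha+1$ being a successor ordinal below $\pl{\mu}$ --- the completedness condition at $\alpha+1$ forces $\possible{\agent}{\restr{\world}{\alpha+1}}\cap\impossible{\agent}{\restr{\world}{\alpha+1}}=\emptyset$, contradicting the membership of $\restr{u}{\alpha}$. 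Hence the intersection is empty for every $\agent$, and $\world$ is completed.

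The main obstacle, I expect, is the bookkeeping needed to see that this is a genuinely non-circular argument: completability at $\mu$ is exactly what drives the successor case, whereas the limit case cannot be settled directly at level $\mu$ --- the converse of \cref{prop:empty_then_completed_limit} is false (see \cref{ex:intersection_with_completed}) --- and must be rerouted through the successor sublevel $\alpha+1$, where the completedness condition is applicable; one must also check that restrictability at $\mu$ and monotonicity of completedness at the strictly smaller ordinals $\mu'$ and $\alpha+1$ are indeed available at this stage of the combined induction. The rest is routine: unwinding \cref{def:new:restriction}, \cref{prop:restriction} and \cref{def:new:arrow_sets} to justify identities such as $\restr{\restr{\world}{\alpha+1}}{\alpha}=\restr{\world}{\alpha}$ and the translation of membership in $\evalLim{\agent}{\world}$ into the pointwise conditions above, and checking that the degenerate cases (e.g.\ $\alpha=0$) are covered by the same argument.
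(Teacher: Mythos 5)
Your proposal is correct and follows essentially the same route as the paper's proof: in the successor case, push completedness down to $\restr{\world}{\mu'}$ by the induction hypothesis and then use completability at $\mu$ to produce the unique completed extension, which must equal $\world$; in the limit case, use the induction hypothesis and the completedness condition at $\alpha+1$ to show $\evalLim{\agent}{\world}\cap\evalLimBar{\agent}{\world}=\emptyset$ and conclude via \cref{prop:empty_then_completed_limit}. Your framing of the induction as part of the combined induction of \cref{lem:well-behaved}, and your explicit appeals to restrictability and \cref{prop:restriction}, only make explicit what the paper leaves implicit.
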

\begin{proof}
	Fix $\alpha$. We need to prove that for each $\mu \geq \alpha$ and each $\mu$-biworld $\world$ such that $\world|_\alpha$ is completed, $\world$ is also completed. We prove this lemma by induction over $\mu$.
	\begin{compactitem}
		\item $\mu = \alpha$. Trivial.
		\item $\mu = \mu'+1$ is a successor ordinal. Let $w$ be a \biworld{\mu} such that $w|_\alpha$ is completed. By induction hypothesis, $w|_{\mu'}$ is completed. Since completability holds at $\mu$, there is a completed \biworld{\mu} $w' \geqp w|_{\mu'}$. Since $w,w' \geqp w|_{\mu'}$ and $w|_{\mu'}$ is completed, $w=w'$. So $w$ is completed, as required.
		\item $\mu$ is a limit ordinal. Let $w$ be a $\mu$-biworld such that $w|_\alpha$ is completed. By induction hypothesis, $w|_{\alpha+1}$ is completed. Since the completedness condition holds at $\alpha+1$, $\possible{\agent}{w|_{\alpha+1}} \cap \possible{\agent}{w|_{\alpha+1}} = \emptyset$. Now we can show by a proof by contradiction that $A \uparrow w \cap \bar A \uparrow w = \emptyset$: If $v \in A \uparrow w \cap \bar A \uparrow w$, then $v|_\alpha \in \possible{\agent}{w|_{\alpha+1}} \cap \possible{\agent}{w|_{\alpha+1}}$, a contradiction. Thus by \cref{prop:empty_then_completed_limit}, $w$ is completed, as required.
	\end{compactitem}
\end{proof}
}

\extended{
\begin{lemma}\label{lem:completability}
	Let  \ordinal be an ordinal. If restrictability holds at each $\smallerordinal\leq\ordinal$, then  \completabilityHoldsAt. 
\end{lemma}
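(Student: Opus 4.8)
The plan is a transfinite induction on $\ordinal$. Since restrictability holds at every ordinal $\leq\ordinal$, it holds at every ordinal $\leq\evensmallerordinal$ for each $\evensmallerordinal<\ordinal$ as well, so the induction hypothesis already supplies completability at every ordinal below $\ordinal$. The base case $\ordinal=0$ is vacuous. In every other case the recurring subtask is: given a biworld $v$ of depth $\smallerordinal<\ordinal$, produce a \emph{completed} \biworld{\ordinal} extending $v$; when $\smallerordinal$ is strictly below the ordinal lying one level under $\ordinal$, I first lift $v$ to that level using the completability of smaller ordinals from the induction hypothesis, so the real work is always a single-level extension. The workhorse throughout is \cref{lem:empty_then_completed} (together with \cref{prop:empty_then_completed_limit} at limit depth): if all agent-intersections of a biworld are empty, it is completed; hence whenever the target depth is a successor ordinal it suffices to carry out the one-level extension with empty agent-intersections.

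For a one-level extension to a \emph{successor} target depth, let $v$ be a \biworld{\ordinal'} and suppose a completed \biworld{\ordinal'{+}1} extending it is wanted. If $\ordinal'=0$, I take $A^{w'}=\WW^{0}$ and $\bar A^{w'}=\emptyset$ for every agent $A$, which visibly works. If $\ordinal'=\ordinal''{+}1$ is itself a successor, I use restrictability at $\ordinal'$ to map each \biworld{\ordinal'} $u$ to the \biworld{\ordinal''} $\restr{u}{\ordinal''}\in A^{v}\cup\bar A^{v}$, and assign $u$ to $A^{w'}$ or to $\bar A^{w'}$ according to which of $A^{v}$, $\bar A^{v}$ this restriction lies in; for the restrictions lying in $A^{v}\cap\bar A^{v}$ --- which are \emph{incompleted} by \cref{def:new:incompleted}, hence have at least two distinct \biworlds{\ordinal'} extending them --- I route those extensions to both sides. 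Completability at $\ordinal'$ guarantees each \biworld{\ordinal''} in $A^{v}$ or $\bar A^{v}$ has at least one \biworld{\ordinal'} extension, so the restriction-images of $A^{w'}$ and $\bar A^{w'}$ are exactly $A^{v}$ and $\bar A^{v}$; thus $\restr{w'}{\ordinal'}=v$, the agent-intersections are empty, and $w'$ is completed by \cref{lem:empty_then_completed}.

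For completability at a \emph{limit} ordinal $\limitordinal$, given a \biworld{\smallerordinal} $w$ with $\smallerordinal<\limitordinal$, I build an increasing chain $w=w_{\smallerordinal}\leqp w_{\smallerordinal+1}\leqp\cdots$ of biworlds, taking a one-level extension at each successor step (by completability of smaller ordinals) and a union at each limit step (a biworld by \cref{def:new:biworld}(c)); moreover I apply the successor-depth construction above at a level $\evensmallerordinal_{0}{+}1$ below $\limitordinal$ whose predecessor $\evensmallerordinal_{0}$ is itself a successor ordinal, so that $A^{w_{\evensmallerordinal_{0}+1}}\cap\bar A^{w_{\evensmallerordinal_{0}+1}}=\emptyset$ for all agents. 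Then $w_{\limitordinal}:=(w_{\evensmallerordinal})_{\evensmallerordinal<\limitordinal}$ is a \biworld{\limitordinal} extending $w$, and since any element of $\evalLim{A}{w_{\limitordinal}}\cap\evalLimBar{A}{w_{\limitordinal}}$ would have its $\evensmallerordinal_{0}$-restriction in $A^{w_{\evensmallerordinal_{0}+1}}\cap\bar A^{w_{\evensmallerordinal_{0}+1}}=\emptyset$, these intersections are empty and $w_{\limitordinal}$ is completed by \cref{prop:empty_then_completed_limit}.

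The genuinely hard case --- and the main obstacle --- is completability at $\limitordinal{+}1$ when the biworld $v$ to be extended already has \emph{limit} depth $\limitordinal$ (for $\smallerordinal<\limitordinal$ I first lift to a \biworld{\limitordinal} via the limit case just settled, then apply this). By \cref{def:new:restriction}, a \biworld{\limitordinal{+}1} extending $v$ is pinned down by the requirement that, for every $\evensmallerordinal<\limitordinal$, the image of $A^{w'}$ (resp.\ $\bar A^{w'}$) under restriction to $\evensmallerordinal$ equal exactly $A^{(v)_{\evensmallerordinal+1}}$ (resp.\ $\bar A^{(v)_{\evensmallerordinal+1}}$). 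By \cref{prop:new:limitworld_union}, $\evalLim{A}{v}\cup\evalLimBar{A}{v}=\WW^{\limitordinal}$; the elements outside the overlap $\evalLim{A}{v}\cap\evalLimBar{A}{v}$ are forced onto one side, and by \cref{prop:ext_to_limit} every biworld that must appear at level $\evensmallerordinal$ is the restriction of some element on the correct side. The delicate point --- and the real crux of the lemma --- is to split the overlap so that all these coverage requirements hold \emph{simultaneously} for every $\evensmallerordinal$ and every agent; I would handle this with the axiom of choice, exploiting that each \biworld{\evensmallerordinal} in $A^{(v)_{\evensmallerordinal+1}}\cap\bar A^{(v)_{\evensmallerordinal+1}}$ is incompleted and hence, by completability at $\limitordinal$, has at least two distinct \biworlds{\limitordinal} extending it, so that two conflicting obligations on a single overlap element can always be routed to a second one. (One cannot simply set $A^{w'}=\evalLim{A}{v}$ and $\bar A^{w'}=\evalLimBar{A}{v}$: by \cref{ex:intersection_with_completed} their overlap can contain completed biworlds.) Once the partition is fixed the agent-intersections are empty, so $w'$ is a completed \biworld{\limitordinal{+}1} by \cref{lem:empty_then_completed}, and transitivity of $\leqp$ finishes the proof.
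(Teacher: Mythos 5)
Your overall architecture is essentially the paper's: a one-level-extension construction is the workhorse, completed extensions are certified via \cref{lem:empty_then_completed} and \cref{prop:empty_then_completed_limit} by arranging empty agent-intersections, and the successor-source case (splitting the extensions of the incompleted biworlds in $\possible{\agent}{v}\cap\impossible{\agent}{v}$ across the two sides, using completability below for coverage) is exactly the paper's construction with the choice set $S_A$. Your treatment of completability at a limit ordinal via an increasing chain with one distinguished level $\evensmallerordinal_0+1$ having empty intersections is a mild and correct reorganization of the paper's cofinal-sequence argument.

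However, there is a genuine gap exactly where you locate the crux: the one-level extension of a limit-depth biworld $v$. Saying that an overlap element's conflicting obligations ``can always be routed to a second extension'' is not a construction. Each $u\in\evalLim{\agent}{v}\cap\evalLimBar{\agent}{v}$ carries infinitely many coverage obligations (one for each $\evensmallerordinal<\limitordinal$ on each side), and the replacement witness for $(u)_\evensmallerordinal$ must itself lie in $\evalLim{\agent}{v}$ (resp.\ $\evalLimBar{\agent}{v}$) --- completability at $\limitordinal$ alone only gives you \emph{some} $\limitordinal$-extension of a second $(\evensmallerordinal{+}1)$-extension, not one on the correct side; and that replacement may itself be an overlap element you have routed away, so the rerouting can cascade without terminating. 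The paper resolves this with a concrete global partition: it fixes, for each overlap element $u$ and each $\evensmallerordinal$, a completed $(\evensmallerordinal{+}1)$-extension $u^\ast_{\evensmallerordinal+1}$ of $(u)_\evensmallerordinal$, puts into $\possible{\agent}{w'}$ the set $S_0^A\cup S_1^A$ where $S_1^A$ consists of those overlap elements whose $u^\ast_{\evensmallerordinal+1}$ lies \emph{eventually} in $\impossible{\agent}{v_{\evensmallerordinal+2}}$, and then verifies coverage by two separate ``going-up'' constructions along a countable cofinal sequence $(\rho_n)_{n<\omega}$ (this is where \cref{prop:ext_to_limit} and the standing countability assumption are indispensable --- your proposal never invokes countability). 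None of this eventual-behaviour bookkeeping is recoverable from your sketch, so the limit-depth case, which is the substantive content of the lemma, remains unproved as written.
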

\begin{proof}
	First, we show that for any $\smallerordinal<\ordinal$ and for each \smallerordinal-biworld \world there exists a completed \biworld{\smallerordinal{+}1} $\world'\geqp\world$. 
	
	We proceed by induction on \smallerordinal.
	\begin{compactenum}
		\item $\smallerordinal=0$. We take $\world':=\left(\world, (\WW^0)_{\agent\in\A}, (\emptyset)_{\agent\in\A} \right)$. Clearly, $\world'$ is a \biworld{1} extending \world. By Lemma \ref{lem:empty_then_completed}, $\world'$ is a completed \biworld{1}.
		
		\item $\smallerordinal=\smallerordinal'+1$ is a successor ordinal.  Fix $\agent\in\A$. Since $\world$ is a biworld, every biworld $\otherworld$ in $\possible{\agent}{\world}\cap\impossible{\agent}{\world}$ is incompleted, \ie there exist distinct \biworld{\smallerordinal} $\otherworld_1,\otherworld_2 \geqp\otherworld$. By the axiom of choice, we can define a function $f_A\colon \possible{\agent}{\world}\cap\impossible{\agent}{\world} \to \WW^\smallerordinal $ by sending each $\otherworld$ in $\possible{\agent}{\world}\cap\impossible{\agent}{\world}$  to an extension $f_A(\otherworld)\in\WW^\smallerordinal$. Let $S_A:=\{f_A(\otherworld)\mid\otherworld\in\possible{\agent}{\world}\cap\impossible{\agent}{\world}\}$. 
		Notice that, since biworlds in $\possible{\agent}{\world}\cap\impossible{\agent}{\world}$ are incompleted,  each \biworld{\smallerordinal'} $\otherworld\in\possible{\agent}{\world}\cap\impossible{\agent}{\world}$ has an extension in $S_A$ and an extension not in $S_A$.
		Then we take 
		\begin{align*}
			\world':=\Big(\objworld{\world}, &(\ext(\possible{\agent}{\world})\setminus S_A)_{\agent\in\A}, \\&((\WW^\smallerordinal\setminus \ext(\possible{\agent}{\world}))\cup S_A)_{\agent\in\A}\Big),
		\end{align*}
		where 
		$\ext(\possible{\agent}{\world}) := \{ u \in\WW^{\alpha} \mid \restr{u}{\alpha'}\in \possible{\agent}{\world} \}$.
		 It is easy to see that $\world'$ is an \biworld{\smallerordinal{+}1}, and by Lemma \ref{lem:empty_then_completed}, $\world'$ is completed. It remains to show that $\world'\geqp\world$. For each agent $\agent \in \A$, we need to prove two equalities:
		\begin{compactenum}
			\item $\possible{\agent}{\restr{\world'}{\smallerordinal}}=\possible{\agent}{\world}$. 
			
			Let $\otherworld\in\possible{\agent}{\restr{\world'}{\smallerordinal}}$. Then, by Defintion \ref{def:new:restriction}, there exists $\otherworld'\in\possible{\agent}{\world'}\subseteq\ext(\possible{\agent}{\world})$ such that $\otherworld'\geqp\otherworld$. Hence, $\otherworld=\restr{\otherworld'}{\smallerordinal'}\in\possible{\agent}{\world}$, as desired. 
			Finally, for $\possible{\agent}{\world}\subseteq\possible{\agent}{\restr{\world'}{\smallerordinal}}$, let $\otherworld\in\possible{\agent}{\world}$. By induction hypothesis, there exists an \smallerordinal-biworld $\otherworld'$ such that $\otherworld'\geqp\otherworld$. If  $\otherworld\notin \possible{\agent}{\world}\cap\impossible{\agent}{\world}$, then by uniqueness of restriction $\otherworld'\in\ext(\possible{\agent}{\world})\setminus S_A$. Hence $\otherworld=\restr{\otherworld'}{\smallerordinal}\in \possible{\agent}{\restr{\world'}{\smallerordinal}}$, as desired. If  $\otherworld\in \possible{\agent}{\world}\cap\impossible{\agent}{\world}$, then there exists $\otherworld''\in \ext(\possible{\agent}{\world})$, such that $\otherworld''\geqp\otherworld$ and $\otherworld''\neq f_A(\otherworld)$. Hence, $\otherworld=\restr{\otherworld''}{\smallerordinal'}\in\possible{\agent}{\restr{\world'}{\smallerordinal}}$, as desired. 
			
			\item $\impossible{\agent}{\restr{\world'}{\smallerordinal}}=\impossible{\agent}{\world}$. 
			
			Let $\otherworld\in\impossible{\agent}{\restr{\world'}{\smallerordinal}}$. Then, by Defintion \ref{def:new:restriction}, there exists $\otherworld'\in\impossible{\agent}{\world'}$ such that $\otherworld'\geqp\otherworld$. If $\otherworld'\in S_A$, then $\otherworld=\restr{\otherworld'}{\smallerordinal'}\in\impossible{\agent}{\world}$, as desired. Otherwise, $\otherworld'\in\WW^\smallerordinal\setminus \ext(\possible{\agent}{\world})$ and $\otherworld=\restr{\otherworld'}{\smallerordinal'}\notin \possible{\agent}{\world}$. Since restrictability holds at level \smallerordinal, $\otherworld=\restr{\otherworld'}{\smallerordinal'}$ is a biworld. This together with the facts that $\world$ is a biworld and that $\otherworld\notin \possible{\agent}{\world}$ implies that $\otherworld\in\impossible{\agent}{\world}$, as desired. Hence, $\impossible{\agent}{\restr{\world'}{\smallerordinal}}\subseteq\impossible{\agent}{\world}$. For the other inclusion, let $\otherworld\in \impossible{\agent}{\world}$. If $\otherworld\in\possible{\agent}{\world}$, then $\otherworld=\restr{f_A(\otherworld)}{\smallerordinal'}\in\impossible{\agent}{\restr{\world'}{\smallerordinal}}$, as desired. If $\otherworld\notin\possible{\agent}{\world}$, then by the induction hypothesis there exists $\otherworld'\in \WW^\smallerordinal\setminus\ext(\possible{\agent}{\world})$ such that $\otherworld'\geqp\otherworld$. Hence, $\otherworld=\restr{\otherworld'}{\smallerordinal'}\in\impossible{\agent}{\restr{\world'}{\smallerordinal}}$.
			
		\end{compactenum}
		
		\item $\smallerordinal$ is a limit ordinal.  Fix $A \in \A$. By induction hypothesis, for each $\otherworld\in \evalLim{\agent}{\world}\cap\evalLimBar{\agent}{\world}$ and for each $\evensmallerordinal<\smallerordinal$, there exists a completed \biworld{\evensmallerordinal{+}1}  $\otherworld_{\evensmallerordinal+1}^\ast\geqp(\otherworld)_\evensmallerordinal$. Using this notation, we define some sets of biworlds to construct the desired completed biworld $\world'$:
		\begin{align*}
			S_0^A:=&\evalLim{\agent}{\world}\setminus\evalLimBar{\agent}{\world}
			\\ S^A_1:=&\left\{\otherworld\in \evalLim{\agent}{\world}\cap\evalLimBar{\agent}{\world}\middle|\  \begin{array}{l}
				\exists \delta<\smallerordinal \textnormal{ s.t.\ for every }$\evensmallerordinal$
				\\ \textnormal{ s.t.\ $\delta < \evensmallerordinal < \smallerordinal$},\\
				\otherworld_{\evensmallerordinal+1}^\ast \in \impossible{\agent}{\world_{\evensmallerordinal+2}}
			\end{array}\right\}. 
		\end{align*} 
		Now define $w' := (\objworld{\world},(S^A_0 \cup S^A_1)_{A \in \A},(\WW^\smallerordinal \setminus (S^A_0 \cup S^A_1))_{A \in \A})$. 
		Clearly $w'$ is a completed $\smallerordinal+1$-biworld. In order to show that $w \leqp w'$, one needs to prove that for all $\delta < \smallerordinal$, $(w)_\delta \leqp w'$. 
		For this it is enough to show that for all successor ordinals $\delta+1 < \smallerordinal$, $(w)_{\delta+1} \leqp w'$. In more detail, we need to show that for every $\delta < \smallerordinal$ and every $\agent\in \A$, $\agent^{(\world)_{\delta+1}} =\{(\otherworld)_\delta\mid\otherworld\in S^\agent_0 \cup S^\agent_1\}$ 
		and $\impossible{\agent}{(\world)_{\delta+1}} =\{(\otherworld)_\delta \mid \otherworld\in \WW^\smallerordinal \setminus (S^\agent_0 \cup S^\agent_1)\}$. So fix $\delta< \smallerordinal$ and fix $\agent \in \A$. We prove the two required statements one after the other:
		
		\begin{compactenum}
			\item \label{A_set} $\agent^{(\world)_{\delta+1}} =\{(\otherworld)_\delta\mid\otherworld\in S^\agent_0 \cup S^\agent_1\}$.

			If $v \in S^A_0\cup S^A_1$, then $v \in \evalLim{\agent}{\world}$, so $(v)_\delta \in A^{(w)_{\delta+1}}$.  Additionally, we need to show that for any $u \in A^{(w)_{\delta+1}}$, there is an $\smallerordinal$-biworld $v \geqp u$ such that $v \in S^A_0 \cup S^A_1$. So fix $u \in A^{(w)_{\delta+1}}$. By Proposition \ref{prop:ext_to_limit}, there exists $\tilde{u}\in\evalLim{\agent}{\world}$ such that $\tilde{u}\geqp u$.
			 If $\tilde{u}\in S_0^A$ or $\tilde{u}\in S_1^A$, then $\tilde{u}$ is the desired extension of $u$. If $\tilde{u}\notin S_0^A\cup S_1^A$, then $\tilde{u}\in\evalLim{\agent}{\world}\cap\evalLimBar{\agent}{\world}$ and for all $\delta$ there exists $\evensmallerordinal$ such that $\delta<\evensmallerordinal<\smallerordinal$ and $\tilde{u}^\ast_{\evensmallerordinal+1}\notin\impossible{\agent}{\world_{\evensmallerordinal+2}}$. Then we can apply Proposition \ref{prop:ext_to_limit} to $\tilde{u}^\ast_{\evensmallerordinal+1}$ to obtain an \biworld{\smallerordinal} $\tilde{u}'\geqp \tilde{u}^\ast_{\evensmallerordinal+1}\geqp u$. Since $\tilde{u}^\ast_{\evensmallerordinal+1}\notin\impossible{\agent}{\world_{\evensmallerordinal+2}}$, $\tilde{u}'$ is in $S_0^A$, as desired.

			\item $\impossible{\agent}{(\world)_{\delta+1}} =\{(\otherworld)_\delta \mid \otherworld\in \WW^\smallerordinal \setminus (S^\agent_0 \cup S^\agent_1)\}$.
			
			By definition of $S^\agent_0$,  for any $v \in \WW^\smallerordinal \setminus (S^A_0 \cup S^A_1)$, we have $v \in \evalLimBar{\agent}{\world}$, \ie $(v)_\delta \in \bar A^{(w)_{\delta+1}}$.  Additionally, we need to show that for any $u \in \bar A^{(w)_{\delta+1}}$, there is an \biworld{\smallerordinal} $v \geqp u$ such that $v \in \WW^\smallerordinal \setminus (S^A_0 \cup S^A_1)$, \ie $v \notin S^A_0$ and $v \notin S^A_1$. So fix $u \in \bar A^{(w)_{\delta+1}}$. We follow a technique similar to the one adopted for case~{\ref{A_set}} above. By the precision order of limit biworlds, for each $\evensmallerordinal<\gamma<\smallerordinal$ 
			and for each $\otherworld\in\impossible{\agent}{\world_{\evensmallerordinal+1}}$ there exists $\otherworld'\in\impossible{\agent}{\world_{\gamma+1}}$ such that $\otherworld'\geqp\otherworld$. Hence, for each $\eta<\smallerordinal$, we can define a function $f_{\bar{A}}^\eta\colon W_{\bar{A}}^\eta\to \impossible{\agent}{\world_{\eta+1}}$, where
			$W_{\bar{A}}^\eta:=\bigcup_{\gamma<\eta}\impossible{\agent}{\world_{\gamma+1}}$, such that for each $\otherworld\in W_{\bar{A}}^\eta$, $f_{\bar{A}}^\eta(\otherworld)\in\impossible{\agent}{\world_{\eta+1}}$ and  $f_{\bar{A}}^\eta(\otherworld)\geqp\otherworld$. Since \smallerordinal is a countable limit ordinal, we can choose a strictly increasing sequence $(\rho_n)_{n<\omega}$ of ordinals which has limit \smallerordinal. W.l.o.g.\ we assume $ \rho_0>\delta$.
			Then, we define $\tilde{u}\in\WW^\smallerordinal$ as follows: 
			\begin{align}\label{going-up_construction2}
				\tilde{u}_\evensmallerordinal:=
				\begin{cases}
					\restr{u}{\evensmallerordinal} & \text{ if }\evensmallerordinal\leq\delta
					\\  \restr{f_{\bar{A}}^{\rho_0}(u)}{\evensmallerordinal} & \text{ if }\delta<\evensmallerordinal\leq \rho_0
					\\ \restr{f_{\bar{A}}^{\rho_{n+1}}(\tilde{u}_{\rho_n})}{\evensmallerordinal} & \text{ if }\rho_n<\evensmallerordinal\leq \rho_{n+1}
				\end{cases}.
			\end{align}
			Clearly, $\tilde{u}\in\evalLimBar{\agent}{\world}$ and $\tilde{u}\geqp u$. Hence, $\tilde{u}\notin S_0^A$. If $\tilde{u}\notin S_1^A$, then $\tilde{u}$ is the desired extension of $u$. 
			If $\tilde{u}\in S_1^A$, then there exists $\evensmallerordinal<\smallerordinal$ such that $\tilde{u}^\ast_{\evensmallerordinal+1}\in\impossible{\agent}{\world_{\evensmallerordinal+2}}$. Notice that, since $\tilde{u}^\ast_{\evensmallerordinal+1}$ is completed, we have $\tilde{u}^\ast_{\evensmallerordinal+1}\in\impossible{\agent}{\world_{\evensmallerordinal+2}}\setminus \possible{\agent}{\world_{\evensmallerordinal+2}}$.
			Then we can apply to $\tilde{u}^\ast_{\evensmallerordinal+1}$ a construction analogous to  \eqref{going-up_construction2}, to obtain an \biworld{\smallerordinal} $\tilde{u}'\geqp \tilde{u}^\ast_{\evensmallerordinal+1}\geqp u$. Clearly, $\tilde{u}'$ is in $\evalLimBar{\agent}{\world}\setminus\evalLim{\agent}{\world}$, hence not in $S_0^A\cup S_1^A$, as desired.
		\end{compactenum}
		
	\end{compactenum}
	So far, we have shown that for any $\smallerordinal<\ordinal$ and for each \biworld{\smallerordinal} \world there exists a completed \biworld{\smallerordinal{+}1} $\world'\geqp\world$. We want to show that for any $\smallerordinal<\ordinal$ and any \biworld{\smallerordinal} \world, there exists a completed $\ordinal$-biworld $\world'\geqp\world$. We prove this by induction on \ordinal:
	\begin{compactenum}
		\item $\ordinal=\ordinal'+1$ is a successor ordinal. In this case we include the initial step of the induction, namely $\ordinal'=\smallerordinal$. Let \world be an \smallerordinal-biworld. By induction hypothesis, there exists a $\ordinal'$-biworld $\world''\geqp\world$ (for the initial step we have $\world'':=\world$). By the first part of the proof, there exists a completed $\ordinal$-biworld $\world'\geqp\world''\geqp\world$.
		\item \ordinal is a limit ordinal. Let \world be an \biworld{\smallerordinal}. By induction hypothesis, for each ordered pair of ordinals $(\gamma, \delta)$ such that $\smallerordinal\leq \gamma<\delta<\ordinal$ there exists a function $f_\gamma^\delta \colon \WW^\gamma\to\WW^\delta$ such that, for each $\otherworld\in\WW^\gamma$, $f_\gamma^\delta(\otherworld)\geqp\otherworld$ and $f_\gamma^\delta(\otherworld)$ is completed. By the construction of the completed biworld $\world'\geqp\world$ in the first part of the proof of this Lemma, not only $f_\gamma^\delta(\otherworld)$ is completed, but also $\possible{\agent}{f_\gamma^\delta(\otherworld)}\cap\impossible{\agent}{f_\gamma^\delta(\otherworld)}=\emptyset$ for any agent $\agent\in\A$. Since \smallerordinal is a countable limit ordinal, we can choose a strictly increasing sequence of ordinals $(\rho_n)_{n<\omega}$ which has limit~\ordinal. W.l.o.g.\ we assume $\smallerordinal< \rho_0$.
		Then, we define $\world'\in\WW^\ordinal$ as follows: 
		\begin{align*}
			\world'_\evensmallerordinal:=
			\begin{cases}
				\restr{\world}{\evensmallerordinal} & \text{ if }\evensmallerordinal\leq\smallerordinal
				\\  \restr{f_{\smallerordinal}^{\rho_0}(\world)}{\evensmallerordinal} & \text{ if }\smallerordinal<\evensmallerordinal\leq \rho_0
				\\ \restr{f_{\rho_n}^{\rho_{n+1}}(\world'_{\rho_n})}{\evensmallerordinal} & \text{ if }\rho_n<\evensmallerordinal\leq \rho_{n+1}
			\end{cases}.
		\end{align*}
		Clearly, $\world'\geqp\world$. Since $\possible{\agent}{f_\smallerordinal^{\rho_0}(\otherworld)}\cap\impossible{\agent}{f_\smallerordinal^{\rho_0}(\otherworld)}=\emptyset$ for any agent $\agent\in\A$, we have that $\evalLim{\agent}{\world'}\cap\evalLimBar{\agent}{\world'}=\emptyset$ for all $\agent\in\A$. By Proposition \ref{prop:empty_uparrow_then_completed}, $\world'$ is completed, as desired.
	\end{compactenum}
\end{proof}
}

\extended{
\begin{lemma}\label{lem:cc}
	If completability holds at $\ordinal+2$, then \complCondHoldsAt[\mu +1].
\end{lemma}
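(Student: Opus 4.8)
The plan is to prove the two directions of $\mathrm{CC}_{\mu+1}$ separately. The direction ``if $A^w\cap\bar A^w=\emptyset$ for every agent $A$ then $w$ is completed'' needs no hypothesis: it is exactly Lemma~\ref{lem:empty_then_completed} instantiated at the successor ordinal $\mu+1$. So the real work is the converse, which I would prove by contraposition: assuming that $w$ is a $(\mu+1)$-biworld with $A^w\cap\bar A^w\neq\emptyset$ for some agent $A$, I want to exhibit two distinct $(\mu+2)$-biworlds that both restrict to $w$, which by Definition~\ref{def:new:incompleted} is precisely the statement that $w$ is incompleted.

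First I would invoke the sole hypothesis, completability at $\mu+2$ (Definition~\ref{def:completability_holds}), applied to $w$ itself (which has depth $\mu+1<\mu+2$), to obtain \emph{one} $(\mu+2)$-biworld $w^*\geqp w$. Next, fix $v\in A^w\cap\bar A^w$; since $w$ is a biworld, clause~(\ref{case:def:biworld_successor})(2) of Definition~\ref{def:new:biworld} forces $v$ to be incompleted, so by Definition~\ref{def:new:incompleted} there are two \emph{distinct} $(\mu+1)$-biworlds $z_1\neq z_2$ with $z_1,z_2\geqp v$, and hence $z_1|_\mu=z_2|_\mu=v$. I would then build a second $(\mu+2)$-biworld $w_1$ that copies $w^*$ on the objective world and on the sets of every agent $B\neq A$, and for agent $A$ sets
\[
A^{w_1}:=(A^{w^*}\setminus\{z_1,z_2\})\cup S,\qquad \bar A^{w_1}:=(\bar A^{w^*}\setminus\{z_1,z_2\})\cup \bar S ,
\]
where $(S,\bar S)$ is one of the two ``swaps'' $(\{z_1\},\{z_2\})$ or $(\{z_2\},\{z_1\})$. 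Either choice yields a legitimate $(\mu+2)$-biworld extending $w$: the union $A^{w_1}\cup\bar A^{w_1}$ stays equal to the set of all $(\mu+1)$-biworlds because $\{z_1,z_2\}$ is always reinserted; the restriction identities $\{u|_\mu\mid u\in A^{w_1}\}=A^w$ and $\{u|_\mu\mid u\in\bar A^{w_1}\}=\bar A^w$ hold because $z_1|_\mu=z_2|_\mu=v$ lies in \emph{both} $A^w$ and $\bar A^w$, so removing and re-adding the $z_i$ never loses $v$ from either image and never creates a new $\mu$-biworld in it; and $A^{w_1}\cap\bar A^{w_1}$ reduces to $(A^{w^*}\cap\bar A^{w^*})\setminus\{z_1,z_2\}$, which contains only incompleted biworlds since $w^*$ is a biworld. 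Finally, the two swaps produce different pairs of sets (they would coincide only if $\{z_1\}=\{z_2\}$), so at least one of them differs from $w^*$ on agent $A$'s sets; picking that one makes $w_1\neq w^*$. Then $w^*$ and $w_1$ are two distinct $(\mu+2)$-biworlds restricting to $w$, so $w$ is incompleted, which completes the contrapositive and hence the lemma.

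I expect the main obstacle to be the biworld requirement that $A^{w_1}\cap\bar A^{w_1}$ contain only incompleted biworlds: the $(\mu+1)$-extensions $z_1,z_2$ of $v$ need not themselves be incompleted, so one cannot simply place them into both $A^{w_1}$ and $\bar A^{w_1}$ (which would be the naive way to keep $v$ covered on both sides). The swap construction circumvents this by keeping each $z_i$ in exactly one of the two sets, so they never enter the intersection. What remains is then a routine but slightly lengthy set-theoretic verification of the three defining clauses of a $(\mu+2)$-biworld for $w_1$ together with the identity $w_1|_{\mu+1}=w$ (using that $w^*\geqp w$ for all the data copied from $w^*$); I would carry these out explicitly, but they are mechanical once the construction above is fixed.
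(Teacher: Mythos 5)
Your proposal is correct and follows essentially the same route as the paper: Lemma~\ref{lem:empty_then_completed} for one direction, and for the converse, completability at $\mu+2$ produces one extension of $w$, which is then perturbed at the two distinct $(\mu+1)$-extensions of an incompleted witness $v\in A^w\cap\bar A^w$ to produce a second, distinct extension. Your ``remove $z_1,z_2$ from both sides and re-insert one into each'' construction is a slightly cleaner, uniform version of the paper's four-way case analysis on where $v_1,v_2$ sit in the completed extension, and it even renders the completedness of $w^*$ unnecessary.
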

\begin{proof}
	Let \world be a \biworld{\ordinal{+}1}. By Lemma \ref{lem:empty_then_completed}, if for each $\agent\in\A$, $\possible{\agent}{\world}\cap\impossible{\agent}{\world}=\emptyset$, then \world is completed. 
	
	To prove the other direction needed for the completedness condition, assume there exist $\agent_\ast\in\A$ and  $\otherworld\in\possible{\agent}{\world}_\ast\cap\impossible{\agent}{\world}_\ast$. By definition of biworld, 
	\otherworld must be incompleted, \ie there exist $\otherworld_1, \otherworld_2 \in \WW^{\ordinal+1}$ such that $\otherworld_1, \otherworld_2 \geqp\otherworld$. We will show that we have two distinct \biworld{\ordinal{+}2}s $\world_1\neq\world_2$ such that $\world_1, \world_2\geqp\world$, \ie \world is not completed. By completability at $\ordinal+2$, there exists a completed \biworld{\ordinal{+}2} $\world_1\geqp\world$. Note that for $i \in \{1,2\}$, $v_i$ is in precisely one of the two sets $\possible{\agent}{\world_1}_\ast$ and $\impossible{\agent}{\world_1}_\ast$. Now we define $\world_2$ to be the \biworld{\ordinal{+}2} with $\objworld{\world_2}:=\objworld{\world_1}$, for each $\agent\in\A\setminus\{\agent_\ast\}$, $\possible{\agent}{\world_2}:=\possible{\agent}{\world_1}$ and $\impossible{\agent}{\world_2}:=\impossible{\agent}{\world_1}$, and the sets $\possible{\agent}{\world_2}_\ast$ and $\impossible{\agent}{\world_2}_\ast$ defined as follows:

	\begin{equation*}
	\possible{\agent}{\world_2}_\ast:=
	\begin{cases}
			\possible{\agent}{\world_1}_\ast\setminus\{\otherworld_2\}  & \text{ if } \otherworld_1, \otherworld_2\in \possible{\agent}{\world_1}_\ast 
		\\ 
		\possible{\agent}{\world_1}_\ast\cup\{\otherworld_2\} & \text{ if } \otherworld_1, \otherworld_2\in \impossible{\agent}{\world_1}_\ast 
		\\ \possible{\agent}{\world_1}_\ast\cup\{\otherworld_2\}\setminus\{\otherworld_1\} & \text{ if } \otherworld_1 \in \possible{\agent}{\world_1}_\ast  , \otherworld_2 \in \impossible{\agent}{\world_1}_\ast 
		\\ \possible{\agent}{\world_1}_\ast\cup\{\otherworld_1\}\setminus\{\otherworld_2\} & \text{ if } \otherworld_1 \in \impossible{\agent}{\world_1}_\ast, \otherworld_2 \in \possible{\agent}{\world_1}_\ast 
	\end{cases}
\end{equation*} 
	\begin{equation*}
 \impossible{\agent}{\world_2}_\ast:=
	\begin{cases}
		 \impossible{\agent}{\world_1}_\ast\cup\{\otherworld_2\}& \text{ if } \otherworld_1, \otherworld_2\in \possible{\agent}{\world_1}_\ast 
		\\  \impossible{\agent}{\world_1}_\ast\setminus\{\otherworld_2\}) & \text{ if } \otherworld_1, \otherworld_2\in \impossible{\agent}{\world_1}_\ast 
		\\  \impossible{\agent}{\world_1}_\ast\cup\{\otherworld_1\} \setminus\{\otherworld_2\} & \text{ if } \otherworld_1 \in \possible{\agent}{\world_1}_\ast, \otherworld_2 \in \impossible{\agent}{\world_1}_\ast 
		\\  \impossible{\agent}{\world_1}_\ast \cup\{\otherworld_2\} \setminus\{\otherworld_1\}) & \text{ if } \otherworld_1 \in \impossible{\agent}{\world_1}_\ast, \otherworld_2 \in \possible{\agent}{\world_1}_\ast 
	\end{cases}
\end{equation*} 

	It is easy to see that $\world_2\geqp\world$, as $\otherworld_1$ and $\otherworld_2$ are extensions of $\otherworld\in\possible{\agent}{\world}_\ast\cap\impossible{\agent}{\world}_\ast$. Since $\world_1\neq\world_2$, \world is incompleted, as required.
\end{proof}
}

\extended{
\begin{lemma}\label{lem:well-behaved}
	For every ordinal $\mu$, biworlds are well-behaved until $\mu$.
\end{lemma}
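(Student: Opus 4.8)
The plan is to prove the lemma by a single transfinite induction. Rather than inducting on the compound statement ``well-behaved until $\mu$'' directly, I would prove by strong transfinite induction on an ordinal $\alpha$ the conjunction of the three ``growing'' properties at $\alpha$: restrictability $\mathrm{R}_\alpha$, completability $\mathrm{C}_\alpha$, and monotonicity of completedness $\mathrm{MoC}_\alpha$. The completedness condition $\mathrm{CC}$ at successor ordinals is then obtained as a by-product from completability via Lemma~\ref{lem:cc}. The auxiliary lemmas of this section (Lemmas~\ref{lem:empty_then_completed}--\ref{lem:cc}) are tailored precisely so that the three properties at level $\alpha$ depend only on properties at levels $<\alpha$ (together with lower-level $\mathrm{CC}$), which is what makes this recursion well-founded.

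So fix $\alpha$ and assume $\mathrm{R}_\beta$, $\mathrm{C}_\beta$ and $\mathrm{MoC}_\beta$ hold for all $\beta<\alpha$. I would proceed in three steps, in this order. \textbf{(i)} $\mathrm{R}_\alpha$: since $\mathrm{MoC}_\beta$ and $\mathrm{C}_\beta$ hold for all $\beta<\alpha$ by hypothesis, Lemma~\ref{lem:restrict} gives restrictability at $\alpha$. \textbf{(ii)} $\mathrm{C}_\alpha$: combining the hypothesis ($\mathrm{R}_\beta$ for $\beta<\alpha$) with step (i) ($\mathrm{R}_\alpha$), restrictability holds at every $\beta\leq\alpha$, so Lemma~\ref{lem:completability} yields completability at $\alpha$. \textbf{(iii)} $\mathrm{MoC}_\alpha$: by Lemma~\ref{lem:monotonicity} it suffices to have $\mathrm{C}_\alpha$ (step (ii)) together with $\mathrm{CC}_{\gamma+1}$ for every successor ordinal $\gamma+1<\pl{\alpha}$. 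For any such $\gamma$, since $\pl{\alpha}$ is a limit ordinal we have $\gamma+2<\pl{\alpha}\leq\alpha$, so $\mathrm{C}_{\gamma+2}$ holds by the induction hypothesis, and Lemma~\ref{lem:cc} then produces $\mathrm{CC}_{\gamma+1}$. This closes the induction.

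To finish: the induction has established $\mathrm{R}_\alpha$, $\mathrm{C}_\alpha$ and $\mathrm{MoC}_\alpha$ for every ordinal $\alpha$, hence in particular for all $\alpha<\mu$; and for every successor ordinal $\gamma+1<\pl{\mu}$ the same computation $\gamma+2<\pl{\mu}\leq\mu$ together with $\mathrm{C}_{\gamma+2}$ and Lemma~\ref{lem:cc} gives $\mathrm{CC}_{\gamma+1}$. These are exactly the four requirements in the definition of ``well-behaved until $\mu$''.

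The only point that genuinely requires care — and where the argument could break — is the acyclicity of the dependency graph of Figure~\ref{fig:relations_lemmas}: one must check that no property at level $\alpha$ is invoked in order to establish another property at the same level $\alpha$ (or at a higher one). This holds because restrictability at $\alpha$ uses only levels $<\alpha$, completability at $\alpha$ uses restrictability up to and including $\alpha$ (fine, since $\mathrm{R}_\alpha$ is proved first and does not itself use $\mathrm{C}_\alpha$), and monotonicity of completedness at $\alpha$ uses completability at $\alpha$ and the completedness condition strictly below $\pl{\alpha}$ — and each such $\mathrm{CC}_{\gamma+1}$ is derived from $\mathrm{C}_{\gamma+2}$ with $\gamma+2<\pl{\alpha}\leq\alpha$, i.e.\ from a strictly lower level. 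All the heavy lifting (the combinatorial constructions, and the uses of the axiom of choice in the limit-ordinal cases) already lives inside Lemmas~\ref{lem:restrict}--\ref{lem:cc}; this final lemma is the bookkeeping that threads them together into one transfinite recursion.
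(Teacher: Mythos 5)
Your proposal is correct and follows essentially the same route as the paper: the same chain $\mathrm{R}_\alpha \Rightarrow \mathrm{C}_\alpha \Rightarrow \mathrm{MoC}_\alpha$ via Lemmas~\ref{lem:restrict}, \ref{lem:completability} and \ref{lem:monotonicity}, with $\mathrm{CC}_{\gamma+1}$ obtained from $\mathrm{C}_{\gamma+2}$ through Lemma~\ref{lem:cc}. The only difference is organizational: the paper inducts on the cumulative statement ``well-behaved until $\mu$'' and therefore splits into successor and limit cases (using $\pl{\mu}=\pl{\mu'}$ in the former), whereas you induct on the pointwise conjunction $\mathrm{R}_\alpha\wedge\mathrm{C}_\alpha\wedge\mathrm{MoC}_\alpha$, which handles both cases uniformly.
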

\begin{proof}
	We prove this by induction over $\mu$:
	\begin{compactitem}
		\item $\mu = 0$: Trivial.
		\item $\mu = \mu'+1$ is a successor ordinal. The induction hypothesis is that biworlds are well-behaved until $\mu'$, \ie that restrictability, completability and monotonicity of completedness hold at all ordinals smaller than $\mu'$ and the completedness condition holds at all successor ordinals smaller than $\pl{\mu'}$. Now Lemma~\ref{lem:restrict} implies that restrictability holds at $\mu'$. This together with Lemma~\ref{lem:completability} implies that completability holds at $\mu'$. It now follows from Lemma~\ref{lem:monotonicity} that monotonicity of completedness holds at $\mu'$. Note that since $\pl{\mu} = \pl{\mu'}$, the induction hypothesis implies that the completedness condition holds at all successor ordinals smaller than $\pl{\mu}$. Thus biworlds are well-behaved until $\mu$, as required.
		\item $\mu$ is a limit ordinal. The induction hypothesis is that for every $\alpha<\mu$, biworlds are well-behaved until $\alpha$, \ie that restrictability, completability and monotonicity of completedness hold at all ordinals smaller than $\alpha$ and the completedness condition holds at all successor ordinals smaller than $\pl{\alpha}$. Since this holds for all $\alpha<\mu$, it directly follows that restrictability, completability and monotonicity of completedness hold at all ordinals smaller than $\mu$. Note that $\pl{\mu}=\mu$. So in order to conclude that biworlds are well-behaved until $\mu$, we still need to show that the completedness condition holds at all successor ordinals smaller than $\mu$. Let $\beta +1 < \mu$. Now the induction hypothesis with $\alpha = \beta+3$ implies that completability holds at $\beta+2$. So by Lemma~\ref{lem:cc}, the completedness condition holds at $\beta + 1$, as required.
	\end{compactitem}
\end{proof}

Now the following theorem directly follows from Lemma~\ref{lem:well-behaved}.
}

\begin{theorem}\label{thm:four_properties}
	Let $\alpha \leq \ordinal'\leq\ordinal<\beta$ be ordinals, let $\world$ be a \biworld{\ordinal}, and $\world'$ be a \biworld{\ordinal'}. The following hold:
	\begin{compactenum}
		\item \label{case:restriction_is_biworld} The restriction $\restr{\world}{\smallerordinal}$ is an \biworld{\smallerordinal}.
		\item \label{case:ext_completed_is_completed} If $\world'$ is completed and $\world\geqp\world'$, then $\world$ is completed.
		\item \label{case:competed_ext} There exists a completed \biworld{\beta} $\otherworld$ such that $\otherworld\geqp\world$.
		\item \label{case:completed_biworld_char} If $\ordinal$ is a successor ordinal, then $\world$ is completed if and only if for each $\agent \in \A$, $\possible{\agent}{\world}\cap\impossible{\agent}{\world}=\emptyset$. 
	\end{compactenum}
\end{theorem}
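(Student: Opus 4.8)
The plan is to prove all four items \emph{simultaneously} by transfinite induction on $\ordinal$, since they are genuinely interdependent and resist separation: restrictability at level $\ordinal$ relies on completability and monotonicity of completedness below $\ordinal$; completability at $\ordinal$ relies on restrictability up to $\ordinal$; monotonicity of completedness at $\ordinal$ relies on completability at $\ordinal$ together with item~\ref{case:completed_biworld_char} at successor ordinals below $\ordinal$; and the cycle closes because restrictability at higher levels feeds on the monotonicity already established (compare Figure~\ref{fig:relations_lemmas}). I would therefore package everything into one composite invariant --- \emph{biworlds are well-behaved until $\ordinal$} --- meaning restrictability, completability and monotonicity of completedness hold at all ordinals $<\ordinal$ and item~\ref{case:completed_biworld_char} holds at all successor ordinals $<\pl{\ordinal}$ --- and prove by induction on $\ordinal$ that this invariant always holds (Lemma~\ref{lem:well-behaved}). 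The theorem then reads off: items~\ref{case:restriction_is_biworld}--\ref{case:competed_ext} are precisely restrictability, monotonicity of completedness, and completability (the last instantiated at $\beta$), and item~\ref{case:completed_biworld_char} is the successor-ordinal characterisation.

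Before running the induction I would isolate two self-contained facts. First (Lemma~\ref{lem:empty_then_completed}): a \biworld{\ordinal+1} $\world$ with $\possible{\agent}{\world}\cap\impossible{\agent}{\world}=\emptyset$ for every agent is completed --- any two distinct \biworld{\ordinal+2}s $\otherworld_1,\otherworld_2\geqp\world$ differ, for some agent $\agent$, on $\possible{\agent}{\otherworld_1}$ versus $\possible{\agent}{\otherworld_2}$ (or on the impossible analogue), and a witness of the disagreement restricts to a common element of $\possible{\agent}{\world}\cap\impossible{\agent}{\world}$, a contradiction. This gives the easy direction of item~\ref{case:completed_biworld_char} and seeds completability at successor stages. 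Second (Lemma~\ref{lem:compl}): once monotonicity of completedness holds below $\ordinal$, a completed biworld has \emph{at most one} extension at each level up to $\ordinal$, and any such extension is again completed --- an easy nested induction against Definition~\ref{def:new:incompleted}.

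The substance is then the four implication-lemmas realising the arrows of Figure~\ref{fig:relations_lemmas}. (i) If monotonicity of completedness and completability hold below $\ordinal$, then restrictability holds at $\ordinal$: a nested induction, first on $\ordinal$ and then on the restriction target $\smallerordinal$; the successor case verifies the two defining clauses of a biworld for $\restr{\world}{\smallerordinal}$, using Lemma~\ref{lem:compl} to exclude completed biworlds from $\possible{\agent}{\restr{\world}{\smallerordinal}}\cap\impossible{\agent}{\restr{\world}{\smallerordinal}}$ and using completability to show the union of those two sets exhausts $\WW^{\smallerordinal'}$, where $\smallerordinal=\smallerordinal'+1$. (ii) If restrictability holds up to $\ordinal$, then completability holds at $\ordinal$: first produce, for every $\smallerordinal<\ordinal$, a completed \biworld{\smallerordinal+1} above any given \biworld{\smallerordinal} (successor case via the axiom of choice, ``splitting'' the incompleted biworlds in $\possible{\agent}{\world}\cap\impossible{\agent}{\world}$ into two disjoint extension-classes so Lemma~\ref{lem:empty_then_completed} applies; limit case via Propositions~\ref{prop:ext_to_limit} and~\ref{prop:empty_uparrow_then_completed}), then stitch these one-step extensions along a cofinal $\omega$-sequence $(\rho_n)_{n<\omega}$ to climb to depth $\ordinal$. (iii) If completability holds at $\ordinal$ and item~\ref{case:completed_biworld_char} holds at successor ordinals below $\pl{\ordinal}$, then monotonicity of completedness holds at $\ordinal$: at successor steps take a completed extension of $\restr{\world}{\ordinal'}$ and invoke uniqueness to force it to equal $\world$; at limit steps push the hypothesis down to $\smallerordinal+1$, apply item~\ref{case:completed_biworld_char} there, and conclude with Proposition~\ref{prop:empty_then_completed_limit}. (iv) If completability holds at $\ordinal+2$, then item~\ref{case:completed_biworld_char} holds at $\ordinal+1$: for the remaining direction, given an incompleted $\otherworld\in\possible{\agent}{\world}\cap\impossible{\agent}{\world}$ with two distinct \biworld{\ordinal+1} extensions, take a completed \biworld{\ordinal+2} $\world_1\geqp\world$ and build a second $\world_2\geqp\world$ by ``flipping'' which side one of those extensions lies on, yielding two distinct \biworld{\ordinal+2} extensions of $\world$. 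The master induction then chains these: at a successor $\ordinal=\ordinal'+1$, (i)$\to$(ii)$\to$(iii) upgrades well-behavedness from $\ordinal'$ to $\ordinal$, and $\pl{\ordinal}=\pl{\ordinal'}$ means item~\ref{case:completed_biworld_char} needs nothing new; at a limit $\ordinal$, well-behavedness below $\ordinal$ gives the first three properties at all ordinals $<\ordinal$ outright, while (iv), applied with completability at $\beta+2$ (available since $\beta+3<\ordinal$), supplies item~\ref{case:completed_biworld_char} at every successor ordinal $\beta+1<\ordinal=\pl{\ordinal}$.

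The main obstacle is lemma (ii) at limit depth. Unlike at a successor, a \biworld{\limitordinal} carries no ready-made pair of possible/impossible sets, so these must be reconstructed through the sets $\evalLim{\agent}{\world}$ and $\evalLimBar{\agent}{\world}$ of Definition~\ref{def:new:arrow_sets}; and, as Example~\ref{ex:intersection_with_completed} shows, $\evalLim{\agent}{\world}\cap\evalLimBar{\agent}{\world}$ may contain \emph{completed} biworlds, so the naive ``make the intersection empty'' recipe that works at successors breaks down. One must instead decide, for every biworld in that intersection, which side to drop it from, all the while guaranteeing that the constructed object genuinely restricts back down to $\world$; keeping the cofinal sequence $(\rho_n)$, the choice functions, and the ``going-up'' constructions mutually consistent is where essentially all the technical weight of the proof concentrates.
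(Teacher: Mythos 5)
Your proposal follows essentially the same route as the paper: the same composite ``well-behaved until $\mu$'' invariant proved by transfinite induction, the same auxiliary facts (the empty-intersection criterion and the uniqueness-of-extensions-of-completed-biworlds lemma), the same four implication lemmas realising the arrows of Figure~\ref{fig:relations_lemmas}, and the same chaining at successor and limit stages (including the $\beta+3$ trick for recovering the completedness condition at limits). You have also correctly located the technical centre of gravity in the limit case of completability, where the paper likewise spends most of its effort on the $S_0^A$/$S_1^A$ splitting and the cofinal going-up constructions.
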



\ignore{
	\begin{proof}[Proof Sketch]
		The proof can be done for all the four statements simultaneously by transfinite induction over \ordinal. We denote by $\pl{\mu}$ the largest limit ordinal smaller or equal to an ordinal $ \mu$. We need some intermediate concepts to express the statements up to a certain depth:
		\begin{compactenum}
			\item ($\mathrm{R}_\ordinal$)  For every \biworld{\ordinal} $\world$, and every $\alpha \leq \mu$, $\restr{\world}{\alpha}$ is a biworld.
			\item ($\mathrm{MoC}_\ordinal$) For each \biworld{\ordinal} $\world$ and each $\alpha \leq \mu$, if $\world|_\alpha$ is completed,  then $\world$ is completed. 
			\item ($\mathrm{C}_\ordinal$) For every $\alpha<\mu$, and every \biworld{\smallerordinal} $u$, there exists a completed \biworld{\ordinal} $u'$ with $u'\geqp u$. 
			\item ($\mathrm{CC}_\ordinal$) For each \biworld{\ordinal} $\world$ with \ordinal a successor, $\world$ is completed if and only if for each agent $A$, $A^w\cap \bar{A}^w = \emptyset$. 
		\end{compactenum}
		 \cref{fig:relations_lemmas} provides a visual representation of some of the relations contained in the omitted lemmas. The successor ordinal case can be easily shown using the relations in \cref{fig:relations_lemmas}, and observing that  $pl(\mu)=pl(\mu+1)$. For the limit ordinal case, we need an additional implication not represented in \cref{fig:relations_lemmas} for the sake of simplicity, namely $\mathrm{C}_{\mu+2} \implies \mathrm{CC}_{\mu+1}$.
	\end{proof}
}

\cref{thm:four_properties} directly implies an important corollary:



\begin{corollary}\label{corollary:completed_char}
	A \biworld{\ordinal} $\world$ is completed if and only if there exists exactly one \biworld{\ordinal{+}1} $\otherworld$ such that $\otherworld\geqp\world$. In this case, also $\otherworld$ is completed.
\end{corollary}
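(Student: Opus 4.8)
The plan is to unfold the definition of \emph{completedness} and then read both directions directly off \cref{thm:four_properties}. Recall that by \cref{def:new:incompleted} a \biworld{\ordinal} $\world$ is incompleted exactly when there exist two distinct \biworld{\ordinal{+}1}s extending it; hence \emph{by definition} $\world$ is completed if and only if there is \emph{at most} one \biworld{\ordinal{+}1} $\otherworld$ with $\otherworld\geqp\world$. So the only real content beyond this near-tautology is (i) that a completed biworld admits \emph{at least} one such extension, turning ``at most one'' into ``exactly one'', and (ii) that the unique extension is itself completed.

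For the left-to-right direction I would assume $\world$ is completed and apply \cref{thm:four_properties}(\ref{case:competed_ext}) with $\beta=\ordinal+1$ (so the hypothesis $\ordinal<\beta$ holds): this produces a completed \biworld{\ordinal{+}1} $\otherworld\geqp\world$. Since completedness of $\world$ already bounds the number of depth-$(\ordinal{+}1)$ extensions by one, this $\otherworld$ is the unique extension, and it is completed, as wanted. For the converse, if there is exactly one \biworld{\ordinal{+}1} $\otherworld\geqp\world$, then in particular there are not two distinct ones, so $\world$ is not incompleted, i.e.\ $\world$ is completed; and \cref{thm:four_properties}(\ref{case:ext_completed_is_completed}), applied to the completed \biworld{\ordinal} $\world$ and its extension $\otherworld$, shows that $\otherworld$ is completed.

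I do not expect a genuine obstacle here: the corollary is essentially a repackaging of the \completability and \monotonicityCompl clauses of \cref{thm:four_properties} together with \cref{def:new:incompleted}. The only points requiring a moment's care are the instantiation $\beta=\ordinal+1$ in the completability clause, and the observation that the literal definition of ``completed'' only caps the number of successor-depth extensions from above, so that completability is precisely the ingredient needed to also guarantee their existence.
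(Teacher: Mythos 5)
Your proposal is correct and follows essentially the same route as the paper's own proof: both directions are read off \cref{def:new:incompleted}, with \cref{thm:four_properties}(\ref{case:competed_ext}) supplying the existence of an extension and \cref{thm:four_properties}(\ref{case:ext_completed_is_completed}) giving completedness of that extension. Your instantiation $\beta=\ordinal+1$ is the right one for the theorem as stated (the paper's proof writes $\alpha=\ordinal+1$, which is a slip in the choice of letter).
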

\extended{
\begin{proof}
	If a \biworld{\ordinal} $\world$ is completed, then it has at most one extension at $\ordinal+1$, by definition of completedness. By  \cref{case:competed_ext} of \cref{thm:four_properties} with $\alpha=\ordinal+1$, $\world$ has at least one extension, proving one direction. Conversely, if there is exactly one \biworld{\ordinal{+}1} $\otherworld$ such that $\otherworld\geqp\world$, then $\world$ cannot be incompleted, by definition. 
	Notice that $\otherworld$ has to be completed by \cref{case:ext_completed_is_completed} of \cref{thm:four_properties}.
\end{proof}
}

\section{The logic \cocael}\label{sec:twoValuations}
\label{sec:vals}

In this section we define the syntax and semantics of the logic $\cocael$ that allows to speak about knowledge, common knowledge and only knowing. More specifically, we first define a three-valued model semantics, where the value of a $\cocael$ formula is either true ($\ltrue$), false ($\lfalse$), or unknown ($\lunkn$). 
We consider two orders on these truth values: the \emph{precision order} given by $\lunkn\leqp\lfalse$ and $\lunkn\leqp\ltrue$, and the \emph{truth order} given by $\lfalse\leqt\lunkn\leqt\ltrue$.
We will write $v^{-1}$ for the \emph{inverse} of the truth value $v$, defined by $\lfalse^{-1}=\ltrue$, $\ltrue^{-1}=\lfalse$, and $\lunkn^{-1}=\lunkn$. We will show that our three-valued semantics is precision-monotonic, in the sense that more precise biworlds give more precise results.
Moreover, we will show that in a biworld of depth at least $\omega^2$, every formula evaluates to either true or false. This fact prompts us to define an alternative two-valued semantics. In more detail, since dealing with biworlds with a limit ordinal depth may be complicated and counter-intuitive, we will move our focus to the smallest (for the sake of simplicity) successor ordinal at which biworlds evaluate all formulas (and all sets of formulas) as true or false, namely $\omega^2+1$. In addition, we restrict to completed \biworld{\omega^2{+}1}s: by Item \ref{case:completed_biworld_char} in Theorem \ref{thm:four_properties}, they have the intuitive property that $A^w$ and $\bar A^w$ are disjoint, which corresponds to the intuition that these two sets represent the biworlds deemed possible and the biworlds deemed impossible by $A$. This motivates an alternative semantic characterization of our logic through a canonical Kripke structure consisting of completed $\omega^2{+}1$-biworlds\footnote{For any $\mu\geq \omega^2$, all the results would hold if we considered a canonical Kripke structure consisting of completed \biworlds{\ordinal}. Taking all completed \biworlds{\mu} for all ordinals $\mu$ would also work, but some uniqueness results would be lost, as we would have many worlds representing the same object.}, which we show to coincide with the three-valued semantics on the relevant biworlds. Furthermore, we show that $\cocael$ is semantically well-behaved. In Section~\ref{sec:relatedWork}, we will show that no previously proposed semantic approach leads to a logic that is semantically well-behaved in the way specified in this section.

To say that an agent $A$ only knows $\phi$ can be viewed as a conjunction of the statement that $A$ knows $\phi$, denoted $\knows{A} \phi$, and the statement that $A$ knows at most $\phi$, denoted $M_A \phi$. Given a set $G$ of agents, we write $E_G \phi$ for the statement that every agent in $G$ knows $\phi$, and we write $C_G \phi$ for the statement that $\phi$ is common knowledge within the set $G$ of agents. The language $\cocael$ extends propositional logic with these modal operators as follows:

\begin{definition}\label{ccaelsyntax}
	We define the language $\cocael$ by structural induction with the standard recursive rules of propositional logic, augmented with:
	\vspace{-2mm}
	\begin{align*}
		&\knows{A}(\psi) \in \cocael \text{ if } \psi\in\cocael \text{ and } A\in \Agents\\
		&M_A(\psi) \in \cocael \text{ if } \psi\in\cocael \text{ and } A\in \Agents\\
		&E_G(\psi) \in \cocael \text{ if } \psi\in\cocael \text{ and } G\subseteq \Agents\\
		&C_G(\psi) \in \cocael \text{ if } \psi\in\cocael \text{ and } G\subseteq \Agents
	\end{align*}
	\vspace{-5.5mm}\\
	We use $\onlyKnows{A}  \phi$ as syntactic sugar for $\knows{A} \phi \land M_A \phi$. 
\end{definition}

In the introduction, we have already discussed what an intuitive Kripke semantics for $\onlyKnows{A}  \phi$ would be. Adapting these ideas to the representation of $\onlyKnows{A}  \phi$ as $\knows{A} \phi \land M_A \phi$, we can easily see that the correct way to define the Kripke semantics for $M_A \phi$ is as follows:
\begin{compactitem}
 \item $\mc{K}, w\models M_A \phi$ if for every world $w'\in W$ such that $\mc{K},w'\models \phi$, we have $(w,w')\in R_A$.  
\end{compactitem}

In order to explain why this is a good semantic characterization of ``knowing at most $\phi$'', we will sketch a proof that the only way in which $M_A \phi$ and $\knows{A} \psi$ can both be true is when $\phi$ entails $\psi$: Assume $M_A\phi$ and $\knows{A}\psi$ are true in a world $w$. We want to show that $\phi$ entails $\psi$, \ie that $\psi$ is true in every world $w'$ in which $\phi$ is true. Assume $w'$ is a world in which $\phi$ is true. But the assumption that $M_A\phi$ is true in $w$ together with the above definition of the Kripke semantics for $M_A\phi$ implies that $(w,w')\in R_A$. This together with the assumption that $\knows{A}\psi$ is true in $w$ implies that $\psi$ is true in $w'$, as required. 

In preparation for the upcoming discussion of a three-valued semantics for $\cocael$, note that if we write $(M_A \phi)^{\mc{K},w} = \Tr$ and $(M_A \phi)^{\mc{K},w} = \Fa$ for $\mc{K}, w\models M_A \phi$ and $\mc{K}, w \not\models M_A \phi$ respectively, 
the above characterization of the semantics of $M_A \phi$ is equivalent to the following:
\vspace{-2mm}
\begin{align*}
	(M_A \phi)^{\mc{K},w} = \glb_{{\leq_t}} \{(\phi^{\mc{K},w'})^{-1} \mid w'\notin R_A^w\}
\end{align*}
\vspace{-5mm}


Let us now turn to the three-valued valuation of formulas of $\cocael$. All parts of this definition 
are precisely what one would expect when applying a Kleene-style three-valued semantic approach to logics with a common knowledge operator, taking into account the above rewording of the Kripke semantics of $M_A \phi$.
\begin{definition}\label{ccaelsemsuc}
	Given a formula $\phi\in \cocael$ and a \biworld{\ordinal} $w$, we define the three-valued valuation function $\phi^w$ by induction on $\ordinal$ and the structure of $\phi$:
	\vspace{-1mm}
	\begin{equation*}
		\begin{split}
			P^w &=\Tr  \text{ if }  P\in \objworld{w} \text{ and $P^w = \Fa$ otherwise}\\
			(\phi \land \psi)^w &=glb_{{\leq_t}} (\phi^{w},\psi^{w})\\
			(\neg \phi)^w &=(\phi^{w})^{-1}\\
			(\knows{A}\phi)^{w}&=
			\begin{cases}
				\Un & \text{if $\ordinal=0$}\\
				\glb_{{\leq_t}} \{\phi^{w'} \mid w'\in A^w\} & 	\text{if }\ordinal=\ordinal'+1 \\
				\lub_{\leq_p}\{(\knows{A}\varphi)^{(w)_{\ordinal'}} \mid \ordinal'<\ordinal \} & \begin{split}
					\text{if }&\ordinal\textnormal{ is a limit}\extended{\\& \textnormal{ordinal}}
				\end{split}
			\end{cases} \\
			(M_A\phi)^{w}&=
			\begin{cases}
				\Un & \text{if $\ordinal=0$}\\
				\glb_{{\leq_t}} \{(\phi^{w'})^{-1} \mid w'\in \bar A^w\} & 
					\text{if } \ordinal=\ordinal'+1  \\
				\lub_{\leq_p}\{(M_A\varphi)^{(w)_{\ordinal'}} \mid \ordinal'<\ordinal \}  &\begin{split}
					\text{if }&\ordinal \textnormal{ is a limit}\extended{\\& \textnormal{ordinal}}
				\end{split} 
			\end{cases}\\
			(E_G \phi)^w&=       \glb_{\leq_t}\{(\knows{A}\phi)^{w} \mid A\in G\}\\  
			(C_G \phi)^w&=        \glb_{\leq_t}\{(E_G^k\phi)^{w} \mid k\geq 1\}
		\end{split}
	\end{equation*}%
%
%
%
%
%
\noindent where we define $E_G^k\f$ inductively as $E_G^0\f=\f$, $E_G^{k+1}\f=E_G(E_G^k\f)$. 	\label{ccaelsemlim}
We say a \biworld{\ordinal}  \emph{$w$ satisfies a formula $\phi$} (notation: $w\models\phi$) if $\phi^w=\Tr$.  
A \biworld{\ordinal} $w$ satisfies, or is a model of, a theory if it satisfies all formulas in that theory.
We say a \biworld{\ordinal} $w$ \emph{resolves} a formula $\phi$ if $\phi^w\neq \Un$.
\end{definition}

\ignore{
	\bart{
		My problem is: to evaluate this formula, we only look at the objective interpretations that are nested at most two levels deep... Where: in order to obtain those levels we sometimes need to make a step to smaller succesor ordinals. But in essence, we never look deeper. Seems so strange that it would work... 
		I'll try to work out a different example, a bit simpler. Namely, I'll try to show that the following formula is satisfiable:
		\[(M_A\ltrue) \land \knows{A}\knows{A}p\]
		\ie 
		I only have trivial konwledge (first conjunct) and yet I know that I know $p$ (non-trivial knowledge). 
		Intuitively, this should be unsatisfiable. If I can construct a 
		All the $0$-biworlds are $p:= \{p\}$ and $\bar p := \emptyset$.
		
		All $1$-wolds are  \begin{align*}
			p_1&=  (p, \emptyset, \{p,\bar p\}),\\
			p_2&=  (p, \{p\}, \{p,\bar p\}), \\
			p_3&=  (p, \{p\}, \{\bar p\}),\\ 
			p_4&=  (p, \{\bar p\}, \{p,\bar p\}),\\
			p_5&=  (p, \{\bar p\}, \{p\}),\\ 
			p_6&=  (p, \{p,\bar p\}, \{p,\bar p\}),\\
			p_7&=  (p, \{p,\bar p\}, \{\bar p\}),\\
			p_8&=  (p, \{p,\bar p\}, \{p\}),\\
			p_9&=  (p, \{p,\bar p\}, \emptyset),\\
			\\
			\bar p_1&=  (\bar p, \emptyset, \{p,\bar p\}),\\
			\bar p_2&=  (\bar p, \{p\}, \{p,\bar p\}), \\
			\bar p_3&=  (\bar p, \{p\}, \{\bar p\}),\\
			\bar p_4&=  (\bar p, \{\bar p\}, \{p,\bar p\}),\\
			\bar p_5&=  (\bar p, \{\bar p\}, \{p\}),\\
			\bar p_6&=  (\bar p, \{p,\bar p\}, \{p,\bar p\}),\\
			\bar p_7&=  (\bar p, \{p,\bar p\}, \{\bar p\}),\\
			\bar p_8&=  (\bar p, \{p,\bar p\}, \{p\}),\\
			\bar p_9&=  (\bar p, \{p,\bar p\}, \emptyset),
		\end{align*}
		
		When is $M_A\lfalse$ true in a $2$-biworld $w$? Observe that in any world $w'$ $\lfalse^{w'}=\lfalse$ and hence $(\lfalse^{w'})^{-1}=\ltrue$.
		Hence for any world $w$, if $\bar A^w \neg \emptyset$, then $M_A\lfalse = \ltrue$. This is suspicious! 
	}
}

The following proposition asserts that the three-valued valuation is $\leqp$-monotonic:

\begin{proposition}\label{prop:respower}
	For every pair $w, w'$ of  biworlds such that $w\leq_p w'$ and every formula $\f\in\cocael$, we have $\f^{w}\leqp\f^{w'}$.
\end{proposition}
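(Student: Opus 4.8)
The plan is a transfinite induction on the depth $\ordinal$ of the less precise biworld $w$ — recall that $w\leqp w'$ means exactly $w=\restr{w'}{\ordinal}$ with $\ordinal=\depth(w)\leq\depth(w')=:\ordinal'$ — and, for each fixed $\ordinal$, a nested induction on $\phi$ along the well-founded order on formulas underlying \cref{ccaelsemsuc} (so that $\psi_1,\psi_2$ precede $\psi_1\land\psi_2$, $\psi$ precedes $\neg\psi$, each $\knows{A}\psi$ with $A\in G$ precedes $E_G\psi$, and each $E_G^k\psi$ precedes $C_G\psi$). First I would record two routine facts. \textbf{(1)} Restriction preserves the objective interpretation, $\objworld{w}=\objworld{w'}$, and when $\ordinal=\smallerordinal+1$, \cref{def:new:restriction} together with \cref{prop:restriction} give $\possible{A}{w}=\{\restr{v'}{\smallerordinal}\mid v'\in\possible{A}{w'}\}$ and $\impossible{A}{w}=\{\restr{v'}{\smallerordinal}\mid v'\in\impossible{A}{w'}\}$; in particular $v'\mapsto\restr{v'}{\smallerordinal}$ maps $\possible{A}{w'}$ onto $\possible{A}{w}$ and $\impossible{A}{w'}$ onto $\impossible{A}{w}$, so a $\glb_{\leq_t}$ over $\possible{A}{w}$ equals the corresponding $\glb_{\leq_t}$ over $\possible{A}{w'}$ after precomposing with this map. \textbf{(2)} The operations $x\mapsto x^{-1}$, $\glb_{\leq_t}$ over an arbitrary family, and $\lub_{\leq_p}$ are all $\leqp$-monotone, i.e.\ turn componentwise $\leqp$ into $\leqp$; this is a small case check on the three-element truth orders and is the engine of the argument.

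The easy cases follow immediately. For an atom $P$, $P^w=\Tr$ iff $P\in\objworld{w}=\objworld{w'}$, so $P^w=P^{w'}$. For $\neg\psi$ and $\psi_1\land\psi_2$, apply the structural induction hypothesis to the immediate subformulas (evaluated in the \emph{same} pair $w,w'$) together with the $\leqp$-monotonicity of $(\cdot)^{-1}$ and of $\glb_{\leq_t}$. For $E_G\phi$ we have $(E_G\phi)^w=\glb_{\leq_t}\{(\knows{A}\phi)^w\mid A\in G\}$ and likewise at $w'$; since each $\knows{A}\phi$ precedes $E_G\phi$ in our order, the induction hypothesis gives $(\knows{A}\phi)^w\leqp(\knows{A}\phi)^{w'}$ for every $A\in G$, and $\glb_{\leq_t}$-monotonicity finishes. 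The case $C_G\phi$ is identical, using $(C_G\phi)^w=\glb_{\leq_t}\{(E_G^k\phi)^w\mid k\geq 1\}$. Finally, if $\ordinal=0$ then $(\knows{A}\phi)^w=(M_A\phi)^w=\Un$, which is $\leqp$-below every truth value, and if $w=w'$ there is nothing to show; since $w\neq w'$ forces $\ordinal<\ordinal'$, from here on I assume $0<\ordinal<\ordinal'$.

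The crux is the case $\knows{A}\phi$ (the case $M_A\phi$ being symmetric, with $\impossible{A}{\cdot}$ in place of $\possible{A}{\cdot}$ and $(\phi^{\cdot})^{-1}$ in place of $\phi^{\cdot}$), which I would split according to the shapes of $\ordinal$ and $\ordinal'$. \emph{(a) $\ordinal=\smallerordinal+1$ and $\ordinal'$ a successor.} Then $(\knows{A}\phi)^w=\glb_{\leq_t}\{\phi^v\mid v\in\possible{A}{w}\}$, which by the surjectivity in fact (1) equals $\glb_{\leq_t}\{\phi^{\restr{v'}{\smallerordinal}}\mid v'\in\possible{A}{w'}\}$; each $\restr{v'}{\smallerordinal}$ is a biworld (\cref{thm:four_properties}) of depth $\smallerordinal<\ordinal$ with $\restr{v'}{\smallerordinal}\leqp v'$, so the outer induction hypothesis gives $\phi^{\restr{v'}{\smallerordinal}}\leqp\phi^{v'}$, and $\glb_{\leq_t}$-monotonicity yields $(\knows{A}\phi)^w\leqp\glb_{\leq_t}\{\phi^{v'}\mid v'\in\possible{A}{w'}\}=(\knows{A}\phi)^{w'}$. \emph{(b) $\ordinal$ a successor and $\ordinal'$ a limit ordinal.} Then $w=\restr{w'}{\ordinal}=(w')_{\ordinal}$, so $(\knows{A}\phi)^w$ is literally one member of the family whose $\lub_{\leq_p}$ is $(\knows{A}\phi)^{w'}$, hence $(\knows{A}\phi)^w\leqp(\knows{A}\phi)^{w'}$. \emph{(c) $\ordinal$ a limit ordinal.} Then $(\knows{A}\phi)^w=\lub_{\leq_p}\{(\knows{A}\phi)^{(w)_{\smallerordinal}}\mid\smallerordinal<\ordinal\}$, and by \cref{prop:restriction} each $(w)_{\smallerordinal}=\restr{w'}{\smallerordinal}$ is a biworld of depth $\smallerordinal<\ordinal$ below $w'$; the outer induction hypothesis gives $(\knows{A}\phi)^{(w)_{\smallerordinal}}\leqp(\knows{A}\phi)^{w'}$ for every $\smallerordinal<\ordinal$, so $(\knows{A}\phi)^{w'}$ is a $\leqp$-upper bound of that family and therefore dominates its $\lub_{\leq_p}$.

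The main obstacle is organizational, not conceptual: the valuation is a double recursion (on biworld depth and on formula), $w$ and $w'$ may live at different depths, and one must stay disciplined about how restriction transports the sets $\possible{A}{\cdot},\impossible{A}{\cdot}$ and about which of $w,w'$ carries the limit depth in the modal clauses — this is exactly what separates case (b) from case (c). One further point the final write-up should not gloss over is that the $\lub_{\leq_p}$'s in \cref{ccaelsemsuc} are well-defined, i.e.\ the relevant families are $\leqp$-chains; if this has not already been established, it folds into the same induction, since the outer hypothesis also yields $(\knows{A}\phi)^{(w)_{\smallerordinal}}\leqp(\knows{A}\phi)^{(w)_{\evensmallerordinal}}$ for $\smallerordinal\leq\evensmallerordinal<\ordinal$. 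Beyond that, the whole proof reduces to the elementary $\leqp$-monotonicity of $\glb_{\leq_t}$, $\lub_{\leq_p}$, and $(\cdot)^{-1}$, propagated through the defining clauses.
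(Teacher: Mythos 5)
Your proof is correct and follows essentially the same route as the paper's: both arguments are a double induction on formula structure and biworld depth, both transport the sets $\possible{A}{\cdot}$ and $\impossible{A}{\cdot}$ through restriction in the successor case, and both reduce everything to the $\leqp$-monotonicity of $\glb_{\leq_t}$, $\lub_{\leq_p}$ and $(\cdot)^{-1}$. The only differences are organizational — you make the depth of the \emph{less} precise world the outer induction and handle the limit cases directly via upper bounds of the $\lub_{\leq_p}$ family, whereas the paper inducts on the depth of the \emph{more} precise world inside the $\knows{A}$ case and uses that the lub is attained at some restriction — and your explicit remark that the well-definedness of the $\lub_{\leq_p}$ clauses must be folded into the same induction is a point the paper's proof leaves implicit.
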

\extended{
\begin{proof}
	We prove this proposition by induction over $\phi$. We distinguish the following cases:
	
	\begin{compactitem}
		\item $\phi=P$. Since $\objworld{\world'}=\restr{\world'}{0}=\restr{\restr{\world}{d(\world)}}{0}=\restr{\world}{0}=\objworld{\world}$, we have $\phi^\world=\phi^{\objworld{\world}}=\phi^{\objworld{\world'}}=\phi^{\world'}$.
		\item $\phi=\psi_1 \land \psi_2$. Then 
		\begin{align*}
			\phi^w  = \glb_{\leq_t}(\psi_1^\world,\psi_2^\world) \leqp \glb_{\leq_t}(\psi_1^{\world'},\psi_2^{\world'}) = \phi^{\world'},
		\end{align*}
		where the central inequality comes from the induction hypothesis and $\glb_{\leq_t}$ being $\leqp$-monotone.
		\item $\phi=\neg \psi$. Then $\phi^w  = (\psi^w)^{-1} \leqp (\psi^{\world'})^{-1} = \phi^{\world'}$, where the central inequality comes from the induction hypothesis and $(\cdot)^{-1}$ being $\leqp$-monotone.
		\item $\phi=\knows{A} \psi$. We prove this case by an induction on $d(\world')$:
        \begin{compactenum}
            \item $d(\world')=d(\world)$. Then $\world=\world'$.
            \item $d(\world')$ is a successor ordinal. Here we distinguish three cases: 
            \begin{compactenum}
                \item $d(\world) = 0$. Then $\phi^\world = \Un \leqp \phi^{\world'}$.
                \item \label{case:succ} $d(\world)$ is a successor ordinal. By the definition of restriction, every biworld in $\possible{\agent}{\world}$ has an extension in $\possible{\agent}{\world'}$ and the restriction of every biworld in $\possible{\agent}{\world'}$ is in $\possible{\agent}{\world}$. Hence, by the induction hypothesis on $\phi$, and $\glb_{\leq_t}$ being $\leqp$-monotone, we have $\phi^\world=\glb_{\leq_t}\{\psi^\otherworld\mid\otherworld\in\possible{\agent}{\world}\}\leqp \glb_{\leq_t}\{\psi^\otherworld\mid\otherworld\in\possible{\agent}{\world'}\}=\phi^{\world'}$.
                \item $d(\world)$ is a limit ordinal. Then by the limit ordinal case of the definition of $(\knows{A} \phi)^w$, there is a successor ordinal $\mu' < d(\world)$ such that $(\knows{A} \psi)^{w|_{\mu'}} = (\knows{A} \psi)^w$. Then by case~\ref{case:succ}, $\phi^\world \leqp \phi^{\world'}$.
            \end{compactenum}
            \item $d(\world')$ is a limit ordinal. Then by the limit ordinal case of the definition of $(\knows{A} \phi)^w$, there is an ordinal $\mu' < d(\world')$ such that $(\knows{A} \psi)^{w'|_{\mu'}} = (\knows{A} \psi)^{w'}$. By the induction hypothesis, $\phi^\world \leqp (\knows{A} \psi)^{w'|_{\mu'}} = \phi^{w'}$.
        \end{compactenum}
		\item $\phi=M_A \psi$. The proof is analogous to the one for the case  $\phi=\knows{A} \psi$. 
		\item $\phi=E_G \psi$. Then 
		\begin{align*}
			\phi^w &= (E_G \psi)^w = \glb_{\leq_t}\{(\knows{A}\psi)^{w} \mid A\in G\} \\&\leqp \glb_{\leq_t}\{(\knows{A}\psi)^{\world'} \mid A\in G\} = (E_G \psi)^{\world'} = \phi^{\world'},
		\end{align*}
		where the central inequality comes from the induction hypothesis and $\glb_{\leq_t}$ being $\leqp$-monotone.
		\item $\phi=C_G \psi$. Then 
		\begin{align*}
			\phi^\world &= (C_G \psi)^w = \glb_{\leq_t}\{(E_G^k\psi)^{w} \mid k\in\mathbb{N}\} \\&\leqp \glb_{\leq_t}\{(E_G^k\psi)^{\world'} \mid k\in\mathbb{N}\} = (C_G \psi)^{\world'} = \phi^{\world'},
		\end{align*}
		where the central inequality comes from the induction hypothesis and $\glb_{\leq_t}$ being $\leqp$-monotone.
	\end{compactitem}
\end{proof}
}

The notion of the \emph{modal depth} of a formula allows us to specify conditions for a formula to be resolved by a biworld:

\begin{definition}[Modal depth]\label{def:MD}
	The \emph{modal depth} $\MD(\phi)$ of a formula $\phi\in\cocael$ is defined by inductively as follows:
	\begin{compactitem}
		\item $\MD(P)=0$ for every propositional atom $P$
		\item $\MD(\neg \phi) = \MD(\phi)$ 
		\item $\MD(\phi  \land \psi) = 
		max(\MD(\phi),\MD(\psi))$
		\item $\MD(\knows{A}\phi)= \MD(\phi)+1$
		\item $\MD(M_A\phi)=\MD(\phi)+1$ 
		
		\item $\MD(E_G\phi)=\MD(\phi)+1$
		
		\item $\MD(C_G\phi)=\MD(\phi)+\omega$, which is the smallest limit ordinal greater than $\MD(\phi)$.
	\end{compactitem}
\end{definition}

Note that the modal depth of any formula in $\cocael$ is less than $\omega^2$.
Every formula of a given modal depth is resolved at any biworld of at least this depth:

\begin{theorem}\label{thm:resolvingmodaldepth}
	If $w$ is a \biworld{\ordinal} and $\phi \in \cocael$ is a formula such that $\MD(\phi) \leq \ordinal$, then $w$ resolves $\phi$.
\end{theorem}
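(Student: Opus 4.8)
I would prove this by transfinite induction on $\ordinal$, with a nested induction on the complexity of $\phi$. The one genuine subtlety is that the semantics of $E_G\psi$ is given in terms of the formulas $\knows{A}\psi$ and that of $C_G\psi$ in terms of the formulas $E_G^k\psi$, none of which are subformulas of the original, so a plain structural induction on $\phi$ does not close. My fix is to run the nested induction on the lexicographic measure $(\MD(\phi),c(\phi))$, where $c(\phi)\in\mathbb{N}$ is an auxiliary rank with $c(P)=0$, $c(\neg\psi)=c(\psi)+1$, $c(\psi_1\wedge\psi_2)=\max(c(\psi_1),c(\psi_2))+1$, $c(\knows{A}\psi)=c(M_A\psi)=0$, and $c(E_G\psi)=c(C_G\psi)=1$ (any equivalent device works). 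The two points that make this go through are that $\knows{A}\psi$ has the \emph{same} modal depth as $E_G\psi$ but strictly smaller $c$, and that each $E_G^k\psi$ has \emph{strictly smaller} modal depth than $C_G\psi$, since $\MD(E_G^k\psi)=\MD(\psi)+k<\MD(\psi)+\omega=\MD(C_G\psi)$. Thus the statement I would actually prove by induction is: for every ordinal $\ordinal$, every \biworld{\ordinal} $w$, and every $\phi\in\cocael$ with $\MD(\phi)\leq\ordinal$, one has $\phi^w\in\{\ltrue,\lfalse\}$.

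For the base case $\ordinal=0$, the hypothesis $\MD(\phi)\leq 0$ forces $\MD(\phi)=0$, so $\phi$ is purely propositional (every modal operator strictly raises the modal depth by Definition~\ref{def:MD}); then $\phi^w$ is computed from $\objworld{w}$ using only the clauses for atoms, $\wedge$ and $\neg$, each of which maps classical values to classical values, so $\phi^w\neq\lunkn$. The same reasoning, now invoking the nested induction hypothesis on the strictly smaller components, disposes of $\phi=\neg\psi$ and $\phi=\psi_1\wedge\psi_2$ in the inductive step, using that $\glb_{\leq_t}$ and $(\cdot)^{-1}$ preserve classicality.

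In the inductive step I would fix a \biworld{\ordinal} $w$ and handle the modal cases. For $\phi=\knows{A}\psi$ (the case $M_A\psi$ being symmetric via $\impossible{A}{w}$) we have $\MD(\psi)+1=\MD(\phi)\leq\ordinal$. If $\ordinal=\ordinal'+1$ is a successor, then every $w'\in\possible{A}{w}$ is a \biworld{\ordinal'} by Definition~\ref{def:new:biworld} and $\MD(\psi)\leq\ordinal'$, so the outer induction hypothesis gives $\psi^{w'}\in\{\ltrue,\lfalse\}$ for each such $w'$; hence $(\knows{A}\psi)^w=\glb_{\leq_t}\{\psi^{w'}\mid w'\in \possible{A}{w}\}$ is classical (the empty case giving $\ltrue$). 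If $\ordinal$ is a limit ordinal, note that $\MD(\phi)=\MD(\psi)+1$ is a successor ordinal and hence lies strictly below $\ordinal$; pick $\ordinal'$ with $\MD(\phi)\leq\ordinal'<\ordinal$, so $(\knows{A}\psi)^{(w)_{\ordinal'}}$ is classical by the outer induction hypothesis. By Proposition~\ref{prop:respower} the family $\{(\knows{A}\psi)^{(w)_{\alpha}}\mid\alpha<\ordinal\}$ is a $\leq_p$-chain, and a $\leq_p$-chain containing a $\leq_p$-maximal (i.e.\ classical) element has all its members equal to $\lunkn$ or to that element, so its $\leq_p$-supremum is classical; thus $(\knows{A}\psi)^w$ is classical. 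For $\phi=E_G\psi$ we have $(E_G\psi)^w=\glb_{\leq_t}\{(\knows{A}\psi)^w\mid A\in G\}$, and each $\knows{A}\psi$ has the same modal depth as $\phi$ but strictly smaller $c$, so the nested induction hypothesis gives $(\knows{A}\psi)^w\in\{\ltrue,\lfalse\}$ and hence classicality of $(E_G\psi)^w$. For $\phi=C_G\psi$ we have $(C_G\psi)^w=\glb_{\leq_t}\{(E_G^k\psi)^w\mid k\geq 1\}$, and each $E_G^k\psi$ has modal depth $\MD(\psi)+k<\MD(\psi)+\omega=\MD(\phi)\leq\ordinal$, hence strictly smaller measure, so the nested induction hypothesis gives $(E_G^k\psi)^w\in\{\ltrue,\lfalse\}$ for every $k$, and therefore $(C_G\psi)^w$ is classical.

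The hard part is really just getting the induction architecture right: the nested measure must be arranged so that $E_G$ and $C_G$ cannot cause circularity, and one must observe that for limit $\ordinal$ the formula $\knows{A}\psi$ is already resolved at some strictly smaller restriction $(w)_{\ordinal'}$, precisely because $\MD(\knows{A}\psi)$ is a successor ordinal and hence strictly below the limit $\ordinal$; after that, precision-monotonicity (Proposition~\ref{prop:respower}) closes the supremum step. The remaining bookkeeping --- that $\glb_{\leq_t}$ and inversion preserve classicality, and the conventions for empty glb's --- is routine.
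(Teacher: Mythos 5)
Your proof is correct and follows essentially the same route as the paper's: a simultaneous induction over the ordinal $\ordinal$ and over formula complexity, with the successor/limit split for $\knows{A}$ and $M_A$ and Proposition~\ref{prop:respower} closing the limit case. The only difference is bookkeeping: where you use the lexicographic measure $(\MD(\phi),c(\phi))$, the paper instead declares $\knows{A}\psi$ a subformula of $E_G\psi$ and $E_G^k\psi$ a subformula of $C_G\psi$ for the purposes of the induction --- your explicit measure arguably makes the well-foundedness of that device more transparent.
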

\extended{
\begin{proof}
    We prove this proposition by a simultaneous induction over $\mu$ and the complexity of $\phi$. The induction hypothesis is that if either $\alpha < \mu$ and $\psi$ is any formula or $\cocael$ or $\alpha = \mu$ and $\psi$ is a subformula of $\phi$, then any $\alpha$-biworld resolves $\psi$. For the purpose of this induction (and some similar induction proofs later), we consider $\knows{A} \psi$ to be a subformula of $E_G \psi$ whenever $A \in G$, and we consider $E_G^k \psi$ to be a subformula of $C_G \psi$ for any $k \geq 1$. It is enough to show that this induction hypothesis together with the assumptions that $w$ is a \biworld{\ordinal} and $\MD(\phi) \leq \ordinal$ implies that $w$ resolves $\phi$. For this, we distinguish the following cases:
	\begin{compactitem}
		\item $\phi=P$. Clearly, $\phi^\world\neq \Un$.
		\item $\phi=\psi_1 \land \psi_2$.  By \cref{def:MD}, we have $\MD(\psi_1),\MD(\psi_2)\leq \ordinal$. 
		Hence, by induction hypothesis, $(\psi_1 \land \psi_2)^\world= \glb_{\leq_t}(\psi_1^\world, \psi_2^\world)\neq \Un$. 
		\item $\phi=\neg \psi$. By \cref{def:MD}, we have $\MD(\psi)=\MD(\phi)\leq \ordinal$. By induction hypothesis, $\psi^\world\neq \Un$, so $(\neg\psi)^\world=(\psi^\world)^{-1}\neq \Un$. 
		\item $\phi=\knows{A} \psi$. Then $\MD(\phi) \neq 0$, so $\mu \neq 0$. We distinguish two cases:
		\begin{compactenum}
		 \item $\mu$ is a successor ordinal. By the induction hypothesis, any world in $A^w$ resolves $\psi$. This together with the successor ordinal case of the definition of $(\knows{A} \phi)^w$ implies that $(\knows{A}\psi)^\world\neq \Un$, as required.
		 \item $\mu$ is a limit ordinal. Since $\MD(\phi)$ is a successor ordinal, $\MD(\phi) < \mu$, so by induction hypothesis, $w|_{\MD(\phi)}$ resolves $\phi$. Then by Proposition~\ref{prop:respower}, $w$ resolves $\phi$, as required.
		\end{compactenum}
		\item $\phi=M_A \psi$. The proof is analogous to the one for the case $\phi=\knows{A} \psi$.
		\item $\phi=E_G \psi$. By the induction hypothesis, $(\knows{A}\psi)^\world\neq \Un$ for any $A \in G$. Hence, $\phi^\world\neq \Un$, as desired. 
		\item $\phi=C_G \psi$. By \cref{def:MD}, we have $\MD(E_G^k\psi)=\MD(\psi)+k\leq \MD(\psi)+\omega=\MD(\phi)\leq \ordinal$. By induction hypothesis, $(E_G^k\psi)^\world\neq \Un$ for all $k\in\natnrs$. Hence, $\phi^\world\neq \Un$, as desired. 
	\end{compactitem}
\end{proof}
}



As explained at the beginning of this section, 
by \cref{case:completed_biworld_char} of Theorem~\ref{thm:four_properties}, a \biworld{\mu{+}1} $\world$ precisely captures the knowledge of every agent $A$ iff $w$ is completed. 
Hence, Theorem~\ref{thm:resolvingmodaldepth} combined with the fact that the modal depth of any formula in $\cocael$ is less than $\omega^2$ motivates focusing on completed \biworlds{\omega^2{+}1}, as $\omega^2+1$ is the first successor ordinal greater than the modal depth of all formulas.


\begin{definition} 
	A \emph{world} is a completed \biworld{\omega^2{+}1}.
\end{definition}

\begin{definition}
	The $\omega^2{+}1$\emph{-completed Kripke structure} $\K := (U,(R_A)_{A\in\A})$ is the Kripke structure whose underlying world set $U$ is the set of all worlds, and whose accessibility relations $R_A$ are given by  
	\vspace{-1mm}
	\[ R_A  = \{ (w,w') \in U^2  \mid  w'|_{\omega^2} \in A^w  \}.\] 
	\vspace{-5mm}\\
	Instead of $(w,w') \in R_A$, we sometimes write $w R_A w'$.
\end{definition}

Interpreting formulas in this canonical Kripke structure $\K$ in the standard way (with the above specified semantics for $M_A \phi$) amounts to a two-valued valuation of $\cocael$.

\extended{Given a world $w$, we write $R_A^w$ for the set of all worlds $R_A$-accessible from $w$. }

\ignore{
The following proposition is important for seeing that the above definition of the accessibility relation is indeed sensible:

\begin{proposition}\label{prop:reduction_to_worlds}
	Let $\phi\in \cocael$, $w\in U^{\omega^2}$, and $\agent \in \A$. The formula $\phi$ is true in every world in $\evalLim{\agent}{\world}$ if and only if it is true in every biworld in $\evalLim{\agent}{\world}$.
\end{proposition}

\marcos{I found it important to put the proposition here, but for this it was necessary to separate the proposition from its proof in the technical report. I thought hard about this decision and think that it's the best we can do here.}

\extended{Before we can prove this proposition, we first need to introduce some notation and prove a lemma.}



Given a world $w$, we write $R_A^w$ for the set of all worlds accessible from $w$. In other words,
$$R_A^w = \evalLim{\agent}{\world} \cap U^{\omega^2}$$
Note that by post-completedness of $\world$ it follows that
$$R_A^w =\{v \mid v\in U^{\omega^2}\textnormal{ and $\exists\ordinal < \omega^2$ s.t.\ $(v)_\ordinal \notin \bar A^{(w)_{\ordinal+1}}$} \}.$$
We will often make use of this alternative characterization of $R_A^w$.

In order to establish the relationship beteween the two-valued and the three-valued valuation, we first need to prove the following lemma that for $w' >_p w$ establishes links between $R_A^{w'}$ on the one hand and $A^w$ and $\bar A^w$ on the other hand:

\begin{lemma}
	\label{lem:shiftA}
	Suppose $\ordinal$ is a successor ordinal less than $\omega^2$, $w$ is a \biworld{\ordinal}, and $w'$ is a world such that $w' \geq_p w$. Then 
	\begin{compactenum}[(a)]
		\item \label{case:ext_in_limit_world} for every $\ordinal{-}1$-biworld $v \in A^w$, there is a world $v' \in R_A^{w'}$ such that $v' \geq_p v$;
		\item for every world $v' \in R_A^{w'}$, there is a $\ordinal{-}1$-biworld $v \in A^w$ such that $v' \geq_p v$;
		\item for every $\ordinal{-}1$-biworld $v \in \bar A^w$, there is a world $v' \notin R_A^{w'}$ such that $v' \geq_p v$;
		\item for every world $v' \notin R_A^{w'}$, there is a $\ordinal{-}1$-biworld $v \in \bar A^w$ such that $v' \geq_p v$.
	\end{compactenum}
\end{lemma}
\extended{
	\begin{proof}
		\begin{compactenum}[(a)]
			\item Suppose $v \in A^w$. We distinguish two cases:
			\begin{compactenum}[1.]
				\item $\otherworld \in \bar A^w$. In this case, $v \in A^w \cap \bar A^w$, so by \cref{case:completed_biworld_char} of \cref{thm:four_properties}, $w$ is not completed. Since $w'$ is post-completed, there is a $\ordinal' < \omega^2$ such that $(w')_{\ordinal'}$ is completed. By  \cref{case:ext_completed_is_completed} of \cref{thm:four_properties}, we can choose $\ordinal'$ to be a successor ordinal. Then $A^{(w')_{\ordinal'}} \cap \bar A^{(w')_{\ordinal'}} = \emptyset$. Since $(w')_\ordinal = w$ is not completed, \cref{case:ext_completed_is_completed} of \cref{thm:four_properties} implies that $\ordinal' > \ordinal$. So $(w')_{\ordinal'} \geq_p w$, so by the definition of the precision relation, $A^{\restr{(w')_{\ordinal'}}{\ordinal}} = A^w$. 
				Since $v \in A^w$, there exists a $v^* \geqp v$ such that $v^* \in A^{(w')_{\ordinal'}}$. Since $A^{(w')_{\ordinal'}} \cap \bar A^{(w')_{\ordinal'}} = \emptyset$, we get that $v^* \notin \bar A^{(w')_{\ordinal'}}$. By Corollary \ref{corollary:post-completed_extension}, there is a world $v' \geqp v^*$. Then $v' \geqp v$ and $(v')_{\ordinal'} = v^* \notin \bar A^{(w')_{\ordinal'}}$. So by the alternative characterization of $R_A^{w'}$, it follows that $v' \in R_A^{w'}$, as required.
				\item $v \notin \bar A^w$. By Corollary \ref{corollary:post-completed_extension}, there is a world $v'$ such that $v' \geq_p v$. Since $v = (v')_{\ordinal-1}$, it follows by the alternative characterization of $R_A^{w'}$ that $v' \in R_A^{w'}$.
			\end{compactenum}
			\item Let $v' \in R_A^{w'}$. Then, there is some $\ordinal' < \omega^2$ such that $(v')_{\ordinal'} \notin \bar A^{(w')_{\ordinal'+1}}$. Define $v := (v')_{\ordinal-1}$. Clearly $v' \geqp v$. So it is now enough to prove that $v \in A^w$. For this we distinguish two cases:
			\begin{compactenum}[1.]
				\item $\ordinal-1 \leq \ordinal'$. By the definition of biworlds, $(v')_{\ordinal'} \notin \bar A^{(w')_{\ordinal'+1}}$ implies that $(v')_{\ordinal'} \in A^{(w')_{\ordinal'+1}}$. By definition of restriction, this implies that $v = (v')_{\ordinal-1} \in A^{(w')_{\ordinal}} = A^w$, as required.
				\item $\ordinal-1 > \ordinal'$. By definition of restriction,  $(v')_{\ordinal'} \notin \bar A^{(w')_{\ordinal'+1}}$ implies that $v = (v')_{\ordinal-1} \notin \bar A^{(w')_\ordinal} = \bar A^w$. By the definition of biworlds, this implies that $v \in A^w$, as required.
			\end{compactenum}
			\item Suppose $v \in \bar A^w$. We distinguish two cases:
			\begin{compactenum}[1.]
				\item $v \in A^w$. In this case, $v \in A^w \cap \bar A^w$, so by \cref{case:completed_biworld_char} of \cref{thm:four_properties}, $v$ is not completed and $w$ is not completed. Since $w'$ is post-completed, there is a $\ordinal' < \omega^2$ such that $(w')_{\ordinal'}$ is completed. By  \cref{case:ext_completed_is_completed} of \cref{thm:four_properties}, we can choose $\ordinal'$ to be a successor ordinal. Then $A^{(w')_{\ordinal'}} \cap \bar A^{(w')_{\ordinal'}} = \emptyset$. Since $(w')_\ordinal = w$ is not completed, \cref{case:ext_completed_is_completed} of \cref{thm:four_properties} implies that $\ordinal' > \ordinal$. So $(w')_{\ordinal'} \geq_p w$, so by the definition of the precision relation, $\bar A^{\restr{(w')_{\ordinal'}}{\ordinal}} = \bar A^w$. 
				Since $v \in \bar A^w$, there exists a $v^* \geqp v$ such that $v^* \in \bar A^{(w')_{\ordinal'}}$. Since $A^{(w')_{\ordinal'}} \cap \bar A^{(w')_{\ordinal'}} = \emptyset$, we get that $v^* \notin A^{(w')_{\ordinal'}}$. By Corollary \ref{corollary:post-completed_extension}, there is a world $v' \geqp v^*$. Then $v' \geqp v$ and $(v')_{\ordinal'} = v^* \notin A^{(w')_{\ordinal'}}$. So by the definition of $R_A^{w'}$, it follows that $v' \notin R_A^{w'}$, as required.
				\item $v \notin A^w$. By Corollary \ref{corollary:post-completed_extension}, there is a world $v'$ such that $v' \geq_p v$. Since $v = (v')_{\ordinal-1}$, it follows by the definition of $R_A^{w'}$ that $v' \notin R_A^{w'}$.
			\end{compactenum}
			\item Let $v' \notin R_A^{w'}$ be post-completed. Then, there is some $\ordinal' < \omega^2$ such that $(v')_{\ordinal'} \notin A^{(w')_{\ordinal'+1}}$. Define $v := (v')_{\ordinal-1}$. Clearly $v' \geqp v$. So it is now enough to prove that $v \in \bar A^w$. For this we distinguish two cases:
			\begin{compactenum}[1.]
				\item $\ordinal-1 \leq \ordinal'$. By the definition of biworlds, $(v')_{\ordinal'} \notin A^{(w')_{\ordinal'+1}}$ implies that $(v')_{\ordinal'} \in \bar A^{(w')_{\ordinal'+1}}$. By definition of restriction, this implies that $v = (v')_{\ordinal-1} \in \bar A^{(w')_{\ordinal}} = \bar A^w$, as required.
				\item $\ordinal-1 > \ordinal$. By definition of restriction,  $(v')_{\ordinal'} \notin A^{(w')_{\ordinal'+1}}$ implies that $v = (v')_{\ordinal-1} \notin A^{(w')_\ordinal} = \bar A^w$. By the definition of biworlds, this implies that $v \in \bar A^w$, as required.
			\end{compactenum}
		\end{compactenum}
	\end{proof}
}

\extended{
Now we are ready to prove Proposition~\ref{prop:reduction_to_worlds}:
\renewcommand*{\proofname}{Proof of Proposition~\ref{prop:reduction_to_worlds}}
	\begin{proof}
		Suppose $\phi^{\world'}=\Tr$ for all worlds $\world'\in\evalLim{\agent}{\world}\cap U^{\omega^2}$, and let $\otherworld\in \evalLim{\agent}{\world}$. By Theorem \ref{thm:resolvingmodaldepth}, $\restr{\otherworld}{\MD(\phi)+1}$ resolves $\phi$. Moreover, by \cref{def:new:arrow_sets}, $\restr{\otherworld}{\MD(\phi)+1}\in\possible{\agent}{\world_{\MD(\phi)+2}}$. By  \cref{case:ext_in_limit_world} of \cref{lem:shiftA}, there exists a world $\otherworld'\in \evalLim{\agent}{\world}\cap U^{\omega^2}$ such that $\otherworld'\geqp\restr{\otherworld}{\MD(\phi)+1}$. By Proposition \ref{prop:respower} and our assumption on worlds in $\evalLim{\agent}{\world}\cap U^{\omega^2}$, we have $\phi^{\restr{\otherworld}{\MD(\phi)+1}}\leqp \phi^{\otherworld'}=\Tr$. Since $\restr{\otherworld}{\MD(\phi)+1}$ resolves $\phi$, we have $ \phi^{\restr{\otherworld}{\MD(\phi)+1}}=\Tr$. By Proposition \ref{prop:respower}, $\phi^{\restr{\otherworld}{\MD(\phi)+1}}\leqp \phi^\otherworld$. Hence, we have $\phi^\otherworld=\Tr$ as desired. The other direction is trivial.
	\end{proof}
\renewcommand*{\proofname}{Proof}
}
}

\extended{
The following lemma is of central importance, as it establishes a relationship between the two-valued and the three-valued valuation and can be used to prove multiple important theorems.
%

\begin{lemma}
	\label{lem:valrel}
	If $\phi \in \cocael$, $w$ is an $\MD(\phi)$-biworld and ${w' \geq_p w}$ is a world, then $\phi^w = \phi^{\K,w'}$.
\end{lemma}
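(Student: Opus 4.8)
The plan is to split the argument into a short reduction followed by a structural induction on $\phi$. For the reduction, note that since $w$ has depth exactly $\MD(\phi)$, \cref{thm:resolvingmodaldepth} tells us $w$ resolves $\phi$, so $\phi^w\in\{\Tr,\Fa\}$; and since $w\leqp w'$, \cref{prop:respower} gives $\phi^w\leqp\phi^{w'}$, whence $\leqp$-maximality of $\phi^w$ forces $\phi^w=\phi^{w'}$. These same two facts yield a handy auxiliary principle $(\ast)$: for any $\psi$ and any biworlds $u\leqp u'$ with $\MD(\psi)\leq d(u)$, we have $\psi^u=\psi^{u'}$. After this reduction it remains to prove, by structural induction on $\psi$, that $\psi^{u'}=\psi^{\K,u'}$ for every world $u'$. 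As in the proof of \cref{thm:resolvingmodaldepth}, I would treat $\knows{A}\chi$ as a subformula of $E_G\chi$ for $A\in G$, and $E_G^k\chi$ as a subformula of $C_G\chi$ for $k\geq 1$, so that the induction is well-founded.

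For the easy cases of this induction: $\psi=P$ is immediate, since both valuations simply read off $\objworld{u'}$; $\psi=\neg\chi$ and $\psi=\chi_1\wedge\chi_2$ follow from the induction hypothesis together with the observation that $(\cdot)^{-1}$ and $\glb_{\leqt}$, restricted to $\{\Tr,\Fa\}$, are exactly classical negation and conjunction. For $\psi=E_G\chi$, the three-valued clause $\glb_{\leqt}\{(\knows{A}\chi)^{u'}\mid A\in G\}$ reduces, via the induction hypothesis applied to each $\knows{A}\chi$, to the classical conjunction of the $(\knows{A}\chi)^{\K,u'}$, which is $(E_G\chi)^{\K,u'}$. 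For $\psi=C_G\chi$, I would invoke the standard Kripke-level fact that $C_G\chi$ holds at $u'$ iff $E_G^k\chi$ holds at $u'$ for all $k\geq1$; since $\MD(E_G^k\chi)<\omega^2<d(u')$, each $E_G^k\chi$ is resolved at $u'$ by \cref{thm:resolvingmodaldepth}, so applying the induction hypothesis to each $E_G^k\chi$ and using that $\glb_{\leqt}$ of two-valued arguments is classical conjunction closes this case.

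The crux is the case $\psi=\knows{A}\chi$ (and its mirror image $\psi=M_A\chi$). Since $u'$ has depth $\omega^2+1$, a successor ordinal, $(\knows{A}\chi)^{u'}=\glb_{\leqt}\{\chi^{v}\mid v\in\possible{A}{u'}\}$, where $\possible{A}{u'}$ is a set of \biworlds{\omega^2}. Because $u'$ is completed, \cref{case:completed_biworld_char} of \cref{thm:four_properties} together with \cref{def:new:biworld} shows that $\possible{A}{u'}$ and $\impossible{A}{u'}$ partition $\WW^{\omega^2}$, so, using restrictability (\cref{case:restriction_is_biworld} of \cref{thm:four_properties}), for any world $v'$ one has $v'\in R_A^{u'}$ iff $\restr{v'}{\omega^2}\in\possible{A}{u'}$. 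I would then show $(\knows{A}\chi)^{u'}=\Tr$ iff $(\knows{A}\chi)^{\K,u'}=\Tr$, which suffices since both values are two-valued. For ``$\Rightarrow$'': if $\chi^{v}=\Tr$ for all $v\in\possible{A}{u'}$ and $v'\in R_A^{u'}$, then $\restr{v'}{\omega^2}\in\possible{A}{u'}$, so $\chi^{\restr{v'}{\omega^2}}=\Tr$; since $\MD(\chi)<\omega^2$, $(\ast)$ gives $\chi^{v'}=\Tr$, hence $\chi^{\K,v'}=\Tr$ by the induction hypothesis. For ``$\Leftarrow$'': given $v\in\possible{A}{u'}$, completability (\cref{case:competed_ext} of \cref{thm:four_properties}) provides a world $v'\geqp v$; then $\restr{v'}{\omega^2}=v\in\possible{A}{u'}$, so $v'\in R_A^{u'}$ and thus $\chi^{\K,v'}=\Tr$, and $(\ast)$ (again using $\MD(\chi)<\omega^2=d(v)$) gives $\chi^{v}=\Tr$.

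I expect the main obstacle to be exactly this last case: correctly bridging the mismatch between $\possible{A}{u'}$, which consists of $\omega^2$-biworlds, and $R_A^{u'}$, which consists of full worlds. The bridge rests on three ingredients that must be combined with care — completability (every $\omega^2$-biworld in $\possible{A}{u'}$ extends to a world), the fact that the modal depth of any $\cocael$ formula is below $\omega^2$ (so every subformula is already resolved at depth $\omega^2$), and precision-monotonicity (so a subformula's value is preserved when passing from an $\omega^2$-biworld to any world extending it). Everything else is routine bookkeeping with the definitions.
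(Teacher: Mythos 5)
Your proposal is correct and follows essentially the same route as the paper's proof: reduce to $\phi^{w'}=\phi^{\K,w'}$ via resolution (Theorem~\ref{thm:resolvingmodaldepth}) plus precision-monotonicity (Proposition~\ref{prop:respower}), then a structural induction whose only delicate case is $\knows{A}/M_A$, bridged between $\possible{A}{u'}$ and $R_A^{u'}$ by completability (Theorem~\ref{thm:four_properties}) and precision-monotonicity. If anything, your write-up is more explicit than the paper's at the two points it glosses over (why $\phi^w=\phi^{w'}$ rather than merely $\phi^w\leqp\phi^{w'}$, and the two-directional correspondence between $\omega^2$-biworlds in $\possible{A}{u'}$ and worlds in $R_A^{u'}$).
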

	\begin{proof}
		By Proposition~\ref{prop:respower}, $\phi^w = \phi^{w'}$. So all that is left to be shown is that $\phi^{w'} = \phi^{\K,w'}$. We prove this proposition by induction over the complexity of $\phi$. So suppose as induction hypothesis that the lemma holds for all subformulas $\psi$ of $\phi$. In order to prove that the lemma holds for $\phi$, we distinguish the following cases:
		\begin{compactitem}
			\item $\phi$ is $P$. Then $\phi^{w'} = \phi^{\objworld{w'}} = \phi^{\K,w'}$.
			\item $\phi$ is $\psi_1 \land \psi_2$. Then 
			\begin{align*}
				\phi^{w'} &= (\psi_1 \land \psi_2)^{w'}\\
				&= \glb_{\leq_t}(\psi_1^{w'},\psi_2^{w'})\\
				&= \glb_{\leq_t}(\psi_1^{\K,w'},\psi_2^{\K,w'}) \textnormal{ by induction hypothesis}\\
				&= (\psi_1 \land \psi_2)^{\K,w'}\\
				&= \phi^{\K,w'}.
			\end{align*}
			\item $\phi$ is $\neg \psi$. Then $\phi^{w'} = (\neg \psi)^{w'} = (\psi^{w'})^{-1} = (\psi^{\K,w'})^{-1} = (\neg \psi)^{\K,w'} = \phi^{\K,w'}$.
			\item $\phi$ is $\knows{A} \psi$. Then $\MD(\phi)$ is a successor ordinal, so $\phi^{w'} = (\knows{A} \psi)^{w'} = \glb_{\leq_t}\{\psi^{w^*} \mid w^* \in A^{w'}\}$. By \cref{case:competed_ext} of Theorem~\ref{thm:four_properties}, every $\omega^2$-biworld $w^*$ can be extended to a completed $\omega^2{+}1$-biworld $w^+$, and by Proposition~\ref{prop:respower}, $\psi^{w^*} = \psi^{w^+}$. Thus $\phi^{w'} = \glb_{\leq_t}\{\psi^{w^+} \mid w^+ \in R_A^{w'} \}$, which by the induction hypothesis equals $(\knows{A} \psi)^{\K,w'}$, as required.
			\item The proof is analogous to the one for the case $\phi = \knows{A} \psi$.
			\item $\phi$ is $E_G \psi$. Then 
			\begin{equation*}
				\begin{split}
					\phi^w &= (E_G \psi)^w \\
					&= \glb_{\leq_t}\{(\knows{A}\psi)^{w} \mid A\in G\} \\
					&= \glb_{\leq_t}\{(\knows{A}\psi)^{\K,w'} \mid A\in G\}\;\;\begin{split}
						&\textnormal{by induction}\\&\textnormal{hpothesis}
					\end{split}\\
					&= (E_G \psi)^{\K,w'} \\
					&= \phi^{\K,w'}.
				\end{split}
			\end{equation*}
			\item $\phi$ is $C_G \psi$. Then 
			\begin{equation*}
				\begin{split}
					\phi^w &= (C_G \psi)^w \\
					&= \glb_{\leq_t}\{(E_G^k\psi)^{w} \mid k\in\mathbb{N}\} \\
					&= \glb_{\leq_t}\{(E_G^k\psi)^{\K,w'} \mid k\in\mathbb{N}\} \;\;\begin{split}
						&\textnormal{by induction}\\&\textnormal{hpothesis}
					\end{split}\\
					&= (C_G \psi)^{\K,w'} \\
					&= \phi^{\K,w'}.
				\end{split}
			\end{equation*}
		\end{compactitem}
	\end{proof}

Combining Lemma \ref{lem:valrel} with Proposition \ref{prop:respower}, we directly get the following theorem, which}
\conf{The following theorem} 
tells us that the two-valued and three-valued valuations fully coincide on worlds\extended{, and thus expresses the relationship between the two-valued and the three-valued valuation in a simpler way than Lemma \ref{lem:valrel} does}:


\begin{theorem}
	\label{thm:valcoinc}
	If $\phi \in \cocael$ and $w\in U$, then $\phi^w = \phi^{\K,w}$.
\end{theorem}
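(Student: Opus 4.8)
The plan is to derive \cref{thm:valcoinc} as a short corollary of Lemma~\ref{lem:valrel}, combined with the precision-monotonicity of the three-valued semantics (Proposition~\ref{prop:respower}) and the resolving property (Theorem~\ref{thm:resolvingmodaldepth}). Fix $\phi \in \cocael$ and a world $w \in U$, that is, a completed $\omega^2{+}1$-biworld, and write $\alpha := \MD(\phi)$; recall that $\alpha < \omega^2$, hence $\alpha \leq \omega^2{+}1$. The first step is to pass from $w$ to its restriction $\restr{w}{\alpha}$: by \cref{case:restriction_is_biworld} of \cref{thm:four_properties} (restrictability) $\restr{w}{\alpha}$ is an $\alpha$-biworld, and by \cref{prop:restriction} we have $\restr{w}{\alpha} \leqp w$.

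Next I would invoke Lemma~\ref{lem:valrel} with $\restr{w}{\alpha}$ playing the role of the ``$\MD(\phi)$-biworld'' in that statement and $w$ itself playing the role of the world $w' \geqp \restr{w}{\alpha}$. This is legitimate precisely because $w \in U$ is a world and $\restr{w}{\alpha}$ is an $\MD(\phi)$-biworld below it in the precision order. The lemma then yields $\phi^{\restr{w}{\alpha}} = \phi^{\K,w}$.

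It remains to identify $\phi^{\restr{w}{\alpha}}$ with $\phi^w$. By Theorem~\ref{thm:resolvingmodaldepth}, any $\MD(\phi)$-biworld resolves $\phi$, so $\phi^{\restr{w}{\alpha}} \neq \Un$; and by Proposition~\ref{prop:respower}, since $\restr{w}{\alpha} \leqp w$, we have $\phi^{\restr{w}{\alpha}} \leqp \phi^w$. In the precision order on the truth values, the only value strictly below $\ltrue$ or below $\lfalse$ is $\lunkn$, so these two facts together force $\phi^{\restr{w}{\alpha}} = \phi^w$. Chaining the equalities gives $\phi^w = \phi^{\restr{w}{\alpha}} = \phi^{\K,w}$, which is the claim.

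I expect no serious obstacle here: all the substantive work is already contained in Lemma~\ref{lem:valrel} (the structural induction relating the Kleene-style valuation on biworlds to the classical Kripke valuation on $\K$, where the $\knows{A}$ and $M_A$ cases rely on completability, Theorem~\ref{thm:four_properties}). The only points requiring mild care are checking that the hypotheses of Lemma~\ref{lem:valrel} are indeed met by $\restr{w}{\alpha}$ and $w$, and observing that resolvedness upgrades the inequality $\phi^{\restr{w}{\alpha}} \leqp \phi^w$ to an equality rather than leaving a gap.
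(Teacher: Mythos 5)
Your proposal is correct and follows exactly the paper's route: the paper derives this theorem by "combining Lemma~\ref{lem:valrel} with Proposition~\ref{prop:respower}," i.e., applying the lemma to the restriction $\restr{w}{\MD(\phi)}$ of the world $w$ and then lifting the value back up to $w$ by precision-monotonicity. Your explicit use of Theorem~\ref{thm:resolvingmodaldepth} to upgrade $\phi^{\restr{w}{\MD(\phi)}} \leqp \phi^w$ to an equality is a welcome bit of extra care that the paper's one-line justification leaves implicit.
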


\extended{Finally, Lemma~\ref{lem:valrel} can also be used to prove the following theorem, which}
\conf{The next theorem}
is of central importance to show that our semantics generally avoids a problem that some previous accounts of only knowing and common knowledge had, namely the problem that $\onlyKnows{A}  \neg C_G p$ is not satisfiable in those accounts, even though it should be (see Section~\ref{sec:relatedWork} for a discussion of this problem in other accounts). The following theorem shows that no such problems can arise in the $\omega^2{+}1$-completed Kripke structure $\K$:

\begin{theorem}\label{thm:O_satisfiable}
	Let $\agent$ be an agent. For every formula $\phi \in \cocael$, there is a world $w$ such that $(\onlyKnows{A}  \phi)^w = (\onlyKnows{A}  \phi)^{\K,w} = \Tr$. Moreover, if $\world_1$ and $\world_2$ are two such worlds, then ${\agent}^{\world_1}={\agent}^{\world_2}$ and $\bar{\agent}^{\world_1}=\bar{\agent}^{\world_2}$.
\end{theorem}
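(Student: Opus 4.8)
The plan is to construct an explicit witnessing world and then show that the ``moreover'' clause is forced by the constraints defining a world. The key enabling observation is that, by the remark following \cref{def:MD}, $\MD(\phi) < \omega^2$ for every $\phi \in \cocael$, so \cref{thm:resolvingmodaldepth} guarantees that every $\omega^2$-biworld resolves $\phi$. Hence the set $\WW^{\omega^2}$ of all $\omega^2$-biworlds is the disjoint union of $P := \{v \in \WW^{\omega^2} \mid \phi^v = \Tr\}$ and $N := \{v \in \WW^{\omega^2} \mid \phi^v = \Fa\}$.

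For the existence part, I would build a world $w$ by choosing $\objworld{w}$ to be an arbitrary interpretation over $\Sigma$, putting $A^w := P$ and $\bar A^w := N$, and, for each agent $B \neq A$, putting $B^w := \WW^{\omega^2}$ and $\bar B^w := \emptyset$. For every agent the union of its two sets is $\WW^{\omega^2}$ and the intersection is empty, so $w$ is a $(\omega^2{+}1)$-biworld by \cref{def:new:biworld}, and by \cref{lem:empty_then_completed} it is completed, hence a world. Since $\omega^2{+}1$ is a successor ordinal, the relevant clauses of \cref{ccaelsemsuc} give $(\knows{A}\phi)^w = \glb_{\leq_t}\{\phi^{w'} \mid w' \in A^w\} = \Tr$, because $\phi^{w'} = \Tr$ for all $w' \in P$, and $(M_A\phi)^w = \glb_{\leq_t}\{(\phi^{w'})^{-1} \mid w' \in \bar A^w\} = \Tr$, because $\phi^{w'} = \Fa$ and hence $(\phi^{w'})^{-1} = \Tr$ for all $w' \in N$. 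Thus $(\onlyKnows{A}\phi)^w = \glb_{\leq_t}(\Tr, \Tr) = \Tr$, and $(\onlyKnows{A}\phi)^{\K,w} = (\onlyKnows{A}\phi)^w = \Tr$ by \cref{thm:valcoinc}.

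For the uniqueness part, let $w_1$ be any world with $(\onlyKnows{A}\phi)^{w_1} = \Tr$. Since $\glb_{\leq_t}$ of two truth values is $\Tr$ only if both are $\Tr$, we get $(\knows{A}\phi)^{w_1} = (M_A\phi)^{w_1} = \Tr$; applying the successor-ordinal clauses once more yields $\phi^{w'} = \Tr$ for every $w' \in A^{w_1}$ and $\phi^{w'} = \Fa$ for every $w' \in \bar A^{w_1}$, that is, $A^{w_1} \subseteq P$ and $\bar A^{w_1} \subseteq N$. Because $w_1$ is a completed $(\omega^2{+}1)$-biworld, \cref{case:completed_biworld_char} of \cref{thm:four_properties} gives $A^{w_1} \cap \bar A^{w_1} = \emptyset$, while \cref{def:new:biworld} gives $A^{w_1} \cup \bar A^{w_1} = \WW^{\omega^2} = P \cup N$; together with $P \cap N = \emptyset$ this forces $A^{w_1} = P$ and $\bar A^{w_1} = N$. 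The same argument applied to $w_2$ gives $A^{w_2} = P$ and $\bar A^{w_2} = N$, whence $A^{w_1} = A^{w_2}$ and $\bar A^{w_1} = \bar A^{w_2}$.

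I do not expect a serious obstacle here: the heavy lifting is done by \cref{thm:resolvingmodaldepth} combined with $\MD(\phi) < \omega^2$, which is precisely what makes $P$ and $N$ cover all of $\WW^{\omega^2}$ and hence allows a biworld with exactly these two arrow-sets for agent $A$ to exist at all. The only point needing a little care is the degenerate case where $P$ or $N$ is empty, where one invokes that the $\glb_{\leq_t}$ over the empty set equals $\Tr$; the rest is bookkeeping about which earlier result justifies each step.
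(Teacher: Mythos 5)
Your proof is correct and follows essentially the same route as the paper's: both construct the witnessing world by taking $A^w$ to be exactly the $\omega^2$-biworlds satisfying $\phi$ and $\bar A^w$ its complement (the paper happens to use this pair for every agent and to phrase the sets via the Kripke valuation, which coincides with your three-valued version by Theorem~\ref{thm:valcoinc}), and both derive uniqueness from $A^{w_i}\subseteq P$, $\bar A^{w_i}\subseteq N$ together with $A^{w_i}\cup\bar A^{w_i}=\WW^{\omega^2}=P\cup N$ and $P\cap N=\emptyset$. No gaps.
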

\extended{
	\begin{proof}
		By Theorem \ref{thm:valcoinc}, $(\onlyKnows{A}  \phi)^w = (\onlyKnows{A}  \phi)^{\K,w}$ for any world $w$. Hence it is enough to show that there is a world $w$ such that $(\onlyKnows{A}  \phi)^{\K,w} = \Tr$. 
		
		Let $v$ be an arbitrary $0$-biworld. Define
		\begin{align*}
			W &:= \{ u|_{\omega^2} \mid u \in U \textnormal{ and } \phi^{\K,u} = \Tr \},\\
			\overline{W} &:= \WW^{\omega^2} \setminus W,\\
			v' &:= (v,(W)_{A \in \A},(\overline{W})_{A \in \A}).
		\end{align*}
		Clearly $v'$ is a world. It follows directly from the definition of $v'$ 
		that $(\knows{A} \phi)^{\K,v'} = \Tr$ and that $(M_A \phi)^{\K,v'} = \Tr$, \ie that $(\onlyKnows{A}  \phi)^{\K,v'} = \Tr$. 
		
		For the second statement, let $\world_1,\world_2$ be worlds such that $(\onlyKnows{A} \phi)^{\world_1}=(\onlyKnows{A} \phi)^{\world2}=\Tr$. By the definition of the only knowing operator, we  have $(\knows{A}\phi)^{\world_i}=(M_A\phi)^{\world_i}=\Tr$ for $i\in\{1,2\}$. This means that $\phi^\otherworld=\Tr$ for all $\otherworld\in {\agent}^{\world_1}\cup{\agent}^{\world_2}$, and  $\phi^\otherworld=\Fa$ for all $\otherworld\in \bar{\agent}^{\world_1}\cup\bar{\agent}^{\world_2}$. Since ${\agent}^{u}\cup\bar{\agent}^{u}=\WW^{\omega^2}$ for any world $u$, we must have 
		\begin{align*}
			{\agent}^{\world_1}&={\agent}^{\world_2}=\{v\in\WW^{\omega^2}\mid \phi^v=\Tr\}
			\\ \bar{\agent}^{\world_1}&=\bar{\agent}^{\world_2}=\{v\in\WW^{\omega^2}\mid \phi^v=\Fa\}.
		\end{align*}
	\end{proof}
}

We will now show that the two-valued valuation of $\cocael$ (and thus by Theorem~\ref{thm:valcoinc} also the three-valued valuation, when restricted to suitable biworlds) gives rise to a sensible entailment relation between formulas of $\cocael$. We define this relation as follows:

\begin{definition}
	Let $\phi\in \cocael$ be a formula, and $\Gamma\subseteq\cocael$ be a set of formulas. We write $\Gamma\models \phi$ if $\phi^{\K, \world}=\Tr$ for every world $\world$  such that $\psi^{\K,\world}=\Tr$ for all $\psi\in\Gamma$.
\end{definition}

Note that this is definition of the entailment relation does not coincide with the standard way of defining the entailment with respect to a Kripke semantics, as in our case we fix a canonical Kripke structure rather than quantifying over all Kripke structures.
The fact that this entailment relation behaves in a sensible way is captured by the properties listed in the following theorem:

\begin{theorem}\label{thm:properties}
	Let $\phi, \psi \in \cocael$ two formulas, $\Gamma, \Gamma'\subset \cocael$ two sets of formulas, $\agent \in \A$ an agent, and $G\subseteq\A$ a non-empty set of agents. Then, the following properties 
	hold:
	\begin{compactenum}
		\item \label{Prop} (Prop) For each propositional tautology $\phi$, we have $\models \phi$.
		\item \label{MP} (MP) $\phi, \phi \Rightarrow \psi \models \psi$.
		\item  (Mono) If $\Gamma \models \phi$, then $\Gamma, \psi \models \phi$. 
		\item \label{Cut} (Cut) If $\Gamma \models \phi $ and $\Gamma', \phi \models \psi$, then $\Gamma, \Gamma' \models\psi$.
		\item \label{K} (K) $\models (\knows{A}(\phi\Rightarrow\psi)\land \knows{A}\phi)\Rightarrow \knows{A}\psi$.
		\item \label{Nec} (Nec) If $\models\phi$, then $\models \knows{A}\phi$.
		\item \label{M} (M) If $\phi\not\models\psi$, then $ M_A\phi \models \neg \knows{A}\psi$.
		\item (O) \label{O} $\onlyKnows{A} \phi\not\models \false$. 
		\item \label{fixed_point_axiom} (Fixed point axiom) $\models C_G \phi \iff E_G(\phi\wedge C_G\phi)$.
		\item \label{Induction_rule} (Induction rule) If $ \phi\models E_G(\phi\wedge\psi)$, then $ \phi\models C_G\psi$.
	\end{compactenum}
\end{theorem}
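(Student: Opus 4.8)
The plan is to push everything through to the canonical Kripke structure $\K$ and argue there with the ordinary two-valued modal semantics. By \cref{thm:valcoinc} we have $\phi^w = \phi^{\K,w}$ for every world $w$ and every $\phi\in\cocael$, and since the modal depth of any $\cocael$-formula is below $\omega^2$, \cref{thm:resolvingmodaldepth} guarantees that every subformula occurring in the argument (in particular every $E_G^k\phi$) is resolved to $\Tr$ or $\Fa$ in every world. So all ten properties reduce to statements about the standard Kripke semantics of $\knows{A}$, $M_A$, $E_G$ and $C_G$ on $\K$. The four structural properties are then immediate: on a world the connectives act classically (as $\glb_{\leq_t}$ and inversion restricted to $\{\Tr,\Fa\}$), so every propositional tautology evaluates to $\Tr$ in every world, giving (Prop); and since $\models$ is defined as truth-preservation over all worlds, (MP), (Mono) and (Cut) are routine set-theoretic manipulations of the classes of satisfying worlds.

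For the modal properties I would reason pointwise in an arbitrary world $w$. For (K): if $(\knows{A}(\phi\Rightarrow\psi))^{\K,w}=(\knows{A}\phi)^{\K,w}=\Tr$ then $\phi\Rightarrow\psi$ and $\phi$ are $\Tr$ at every $w'\in R_A^w$, hence so is $\psi$, so $(\knows{A}\psi)^{\K,w}=\Tr$. For (Nec): if $\models\phi$ then $\phi$ is $\Tr$ in every world, and since $R_A^w\subseteq U$ the value $\glb_{\leq_t}\{\phi^{\K,w'}\mid w'\in R_A^w\}$ is $\Tr$, so $\knows{A}\phi$ is $\Tr$ at $w$. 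For (M): assume $\phi\not\models\psi$ and $(M_A\phi)^{\K,w}=\Tr$; the Kripke semantics of $M_A$ says every world in which $\phi$ is true lies in $R_A^w$; since $\phi\not\models\psi$ there is a world $w^\ast$ with $\phi^{\K,w^\ast}=\Tr$ and (as worlds resolve $\psi$) $\psi^{\K,w^\ast}=\Fa$; then $w^\ast\in R_A^w$ witnesses $(\knows{A}\psi)^{\K,w}=\Fa$, i.e. $(\neg\knows{A}\psi)^{\K,w}=\Tr$. Property (O) is exactly \cref{thm:O_satisfiable}, which produces a world $w$ with $(\onlyKnows{A}\phi)^{\K,w}=\Tr$, so $\onlyKnows{A}\phi\not\models\false$.

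For the common-knowledge properties I would first unfold $(C_G\phi)^{\K,w}=\glb_{\leq_t}\{(E_G^k\phi)^{\K,w}\mid k\geq 1\}$ together with $E_G^{k}\phi=E_G(E_G^{k-1}\phi)$ to obtain the familiar characterization: $(C_G\phi)^{\K,w}=\Tr$ iff $\phi$ is $\Tr$ in every world reachable from $w$ along a non-empty path all of whose edges lie in $\bigcup_{A\in G}R_A$. For the fixed point axiom, expanding $E_G(\phi\wedge C_G\phi)$ through this characterization shows it holds at $w$ iff for every $A\in G$ and every $w'\in R_A^w$ both $\phi$ and $C_G\phi$ hold at $w'$, which says precisely that $\phi$ holds along every non-empty $G$-path out of $w$; hence it is equivalent to $C_G\phi$ at $w$. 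For the induction rule, suppose $\phi\models E_G(\phi\wedge\psi)$ and let $w$ be a world with $\phi^{\K,w}=\Tr$; for any non-empty $G$-path $w=w_0\,R_{A_1}\,w_1\cdots R_{A_k}\,w_k$ with each $A_i\in G$, an easy induction on $i$ using the hypothesis gives that $\phi$ and $\psi$ are $\Tr$ at each $w_i$ with $i\geq 1$; in particular $\psi$ is $\Tr$ at $w_k$, so $(C_G\psi)^{\K,w}=\Tr$, i.e. $\phi\models C_G\psi$.

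Most of the real work has already been carried out in the preceding sections, so the remaining obstacles are largely bookkeeping: checking, via \cref{thm:resolvingmodaldepth} and the $\omega^2$-bound on modal depth, that $C_G$-formulas are genuinely two-valued in worlds; observing that $R_A^w$ always consists of worlds so that (Nec) and the $E_G$/$C_G$ arguments go through; and carefully justifying the path characterization of $C_G$ from the $\glb$-over-$E_G^k$ definition. The one genuinely substantive ingredient is the satisfiability result \cref{thm:O_satisfiable}, which is what makes (O) true and (M) non-vacuous; everything else is a direct consequence of \cref{thm:valcoinc} and routine Kripke-style reasoning.
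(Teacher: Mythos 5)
Your proposal is correct and follows essentially the same route as the paper's proof: work in the two-valued valuation on the canonical Kripke structure $\K$, handle (Prop)--(Cut), (K), (Nec) by routine pointwise Kripke reasoning, prove (M) via a witness world $w^\ast$ with $\phi^{\K,w^\ast}=\Tr$ and $\psi^{\K,w^\ast}=\Fa$ that must lie in $R_A^{w}$, obtain (O) directly from Theorem~\ref{thm:O_satisfiable}, and prove the induction rule by induction over $k$-reachability along $\bigcup_{A\in G}R_A$. The only cosmetic difference is that the paper verifies the fixed-point axiom by a direct chain of equivalences on the $\glb$-definitions rather than by first isolating the non-empty-$G$-path characterization of $C_G$, but the two arguments are interchangeable.
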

\extended{
	\begin{proof}
		\begin{compactenum}
			\item Let $\phi$ be any propositional tautology with atoms $P_1, \ldots, P_k$, and $\xi_1, \ldots, \xi_k$ be formulas in $\cocael$. For every $i\in[1,k]$, we can substitute $P_i$ with $\xi_i$ in $\phi$ to obtain $\phi_\xi$. Let $w$ be a world. Since for every formula $\psi\in\cocael$ the evaluation $\psi^{\K, w}$ cannot be undefined, for every $i\in[1,k]$, we can assign to $P_i$ the truth value given by $\xi_i^{\K,w}$. Since $\phi$ is a propositional tautology, it must be true with the above truth assignments. Hence, $\phi_\xi^{\K,w}=\Tr$ by the definition of the evaluation in the Kripke structure.
			\item Let $\world \in U^{\omega^2}$ such that $\phi^{\K,\world}=\Tr$ and $(\phi \Rightarrow \psi)^{\K,\world}=\Tr$. Notice that $(\phi \Rightarrow \psi)^{\K,\world}=\left((\phi\land\neg\psi)^{\K,\world}\right)^{-1}$. Hence, $\glb_{\leq_t}(\phi^{\K,\world}, (\neg\psi)^{\K,\world})=\Fa$. Since $\phi^{\K,\world}=\Tr$, we have $(\psi^{\K,\world})^{-1}=(\neg\psi)^{\K,\world}=\Fa$, \ie $\psi^{\K,\world} = \Tr$, as desired.
			\item Clear by the definition of logical consequence relation.
			\item Let $\world\in U$ such that $\xi^{\K,\world}=\Tr$ for all $\xi\in \Gamma\cup\Gamma'$. Since $\Gamma \models \phi $, we have $\phi^{\K,\world}=\Tr$. Then, $\psi^{\K,\world}=\Tr$ because $\Gamma', \phi \models \psi$.
			\item Let $\world \in U$. We want to show that $\left((\knows{A}(\phi\Rightarrow\psi)\land \knows{A}\phi)\Rightarrow \knows{A}\psi\right)^{\K,\world}=\Tr$. Notice that
			\begin{equation*}
				\begin{split}
					V:&=((\knows{A}(\phi\Rightarrow\psi)\land \knows{A}\phi)\Rightarrow \knows{A}\psi)^{\K,\world} \\
					&= \left(\neg((\knows{A}(\neg(\phi\land\neg\psi))\land \knows{A}\phi)\land \neg \knows{A}\psi)\right)^{\K,\world}\\
					&=\left(\glb_{\leq_t}\left(
					S, ((\knows{A}\psi)^{\K,\world})^{-1}
					\right)\right)^{-1},
				\end{split}
			\end{equation*}
			where $S:=\glb_{\leq_t}((\knows{A}(\neg(\phi\land\neg\psi)))^{\K,\world}, (\knows{A}\phi)^{\K,\world})$.
			If $(\knows{A}\psi)^{\K,\world}=\Tr$, then $V=\Tr$, as desired. Now, we assume $(\knows{A}\psi)^{\K,\world}=\Fa$. Observe that $V$ reduces to just $V=\left(\glb_{\leq_t}((\knows{A}(\neg(\phi\land\neg\psi)))^{\K,\world}, (\knows{A}\phi)^{\K,\world})\right)^{-1}$. If $(\knows{A}\phi)^{\K,\world}=\Fa$, then again $V=\Tr$, as desired. So, we suppose $(\knows{A}\phi)^{\K,\world}=\Tr$, \ie $\phi^{\K,\otherworld}=\Tr$ for all $\otherworld\in R_A^\world$. Hence, we have that
			\begin{align*}
				(\knows{A}(\neg(\phi &\land\neg\psi)))^{\K,\world}=  \{(\neg(\phi\land\neg\psi))^{\K,\otherworld}\mid \otherworld\in R_A^{\world}\}
				\\&= \{(\glb_{\leq_t}(\phi^{\K,\otherworld}, (\psi^{\K,\otherworld})^{-1}))^{-1}\mid \otherworld\in R_A^{\world}\}
				\\&= \{((\psi^{\K,\otherworld})^{-1})^{-1}\mid \otherworld\in R_A^{\world}\}
				\\&=\{\psi^{\K,\otherworld}\mid \otherworld\in R_A^{\world}\}=(\knows{A}\psi)^{\K,\world}=\Fa,
			\end{align*}
			which implies $V=\Tr$, as desired.
			\item Let $\world\in U$. By definition, $(\knows{A}\phi)^{\K,\world}=glb_{{\leq_t}} \{\phi^{\K,w'}|w'\in R_A^\world\} $. Since $\phi^{\K,\world'}=\Tr$ for any $\world'\in U$ by hypothesis, we have $(\knows{A}\phi)^{\K,\world}=\Tr$, as desired.
			\item  Since $\phi\not\models\psi$, there exists $\world^\ast\in U$ such that $\phi^{\K,\world^\ast}=\Tr$ and $\psi^{\K, \world^\ast}=\Fa$. Let $ \world\in U$ be such that  $(M_A\phi)^{\K,\world}=\Tr$. Then, $R_A^w \supseteq \{w \in U \mid \phi^{\K, \world} = \Tr\}$, and in particular $\world^\ast\in R_A^\world$. Hence, $(\knows{A}\psi)^{\K,\world}=\glb_{\leq_t}\left\{\psi^{\K,\world'}\mid \world'\in R_A^\world\right\}=\Fa$.
			\item By Theorem \ref{thm:O_satisfiable}, there exists a world $\world$ such that $(\onlyKnows{A} \phi)^{\K,\world}=\Tr$. Since $\false^{\K,\world}=\Fa$, we have $\onlyKnows{A} \phi\not\models \false$.
			\item Let $\world \in U$. By Definition \ref{ccaelsemsuc}, we have the following double implications: 
			\begin{align*}
				(E_G(&\phi\wedge C_G\phi))^{\K,\world}=\Tr 
				\\ \iff& \forall \agent\in G, \; \left( \knows{A}(\phi\wedge C_G\phi )\right)^{\K,\world} =\Tr
				\\  \iff & \forall \agent \in G, \; \forall \world'\in\evalLim{\agent}{\world}, \; (\phi\wedge C_G\phi)^{\K,\world'}=\Tr
				\\ \iff& \forall A\in G, \; \forall \world'\in \evalLim{\agent}{\world}, \; \phi^{\K,\world'}=\Tr 
				\\ &\text{and } (C_G\phi)^{\K,\world'}=\Tr
				\\  \iff & \forall A\in G,\; \forall \world'\in \evalLim{\agent}{\world}, \; \forall k\geq 1,\; (\knows{A}\phi)^{\K,\world'}=\Tr 
				\\ & \text{and }  (E_G^k\phi)^{\K,\world'}=\Tr
				\\  \iff & \forall A\in G, \; \forall k\geq 2,\; (E_G\phi)^{\K,\world}=\Tr 
				\\ & \text{and }  (\knows{A}(E_G^{k-1}\phi))^{\K,\world}=\Tr
				\\  \iff &  \forall k\geq 2,\; (E_G\phi)^{\K,\world}=\Tr, \text{ and }  (E_G^{k-1}\phi)^{\K,\world}=\Tr
				\\ \iff & (C_G\phi)^{\K,\world}=\Tr.
			\end{align*}
			\item Let $\world\in U$ such that $\phi^{\K,\world}=\Tr$. We say that a world $\otherworld\in U^{\omega}$ is \emph{$1$-reachable from} \world if $\otherworld\in R^\world_A$ for some $A\in G$. For $k>1$, we say that a world $\otherworld\in U^{\omega}$ is \emph{$k{+}1$-reachable from} \world if \otherworld is 1-reachable from a world $\otherworld' $ that is $k$-reachable from \world. 
			
			We show by induction on $k$ that for all $k\in\mathbb{N}_{\geq 1}$, for all $k$-reachable worlds \otherworld from \world, $(\psi\land\varphi)^{\K,\otherworld}=\Tr$. In particular, this proves that $E^k_G(\psi)^{\K,\world}=\Tr$ for all $k>0$, \ie $(C_G\psi)^{\K,\world}=\Tr$.
			\begin{compactitem}
				\item $k=1$.  Let \otherworld be a world 1-reachable from \world. By hypothesis, $\phi^{\K,\world}=\Tr$ and  $ \varphi\models E_G(\varphi\wedge\psi)$, thus $(E_G(\varphi\wedge\psi))^{\K,\world}=\Tr$. By definition of $E_G$ and of 1-reachability, we must have $(\varphi\wedge\psi)^{\K,\otherworld}=\Tr$, as desired.
				\item $k=k'+1$. Let \otherworld be a world k-reachable from \world. By definition of $k'$-reachability, there exists a world $\otherworld'$ $k'$-reachable from \world such that \otherworld is 1-reachable from $\otherworld'$. By induction hypothesis, $(\varphi\wedge\psi)^{\K,\otherworld'}=\Tr$, and $\varphi^{\K,\otherworld'}=\Tr$ in particular. Thus, by hypothesis, we get $(E_G(\varphi\wedge\psi))^{\K,\otherworld'}=\Tr$. By the definition of the valuation of the operator $E_G$, it follows that $(\varphi\wedge\psi)^{\K,\otherworld}=\Tr$, as desired.
			\end{compactitem}
		\end{compactenum}
	\end{proof}
}

Properties \ref{Prop}, \ref{K}, \ref{Nec}, \ref{M}, and \ref{O} of \cref{thm:properties} ensure that the semantics properly captures the intended meaning of the only-knowing operator. In more detail, properties \ref{Prop}, \ref{K}, \ref{Nec}, and \ref{M}  imply that for every $\phi$ and $\psi$, either $\onlyKnows{A} \phi \limplies \knows{A}\psi$ or $\onlyKnows{A} \phi \limplies \lnot \knows{A}\psi$. 
More specifically, the former holds when $\psi$ is entailbed by $\phi$, the latter otherwise. This means that $\onlyKnows{A} \varphi$ completely determines the agent's knowledge. Property \ref{O} states that for any formula $\phi\in\cocael$, there is a world in which $\onlyKnows{A} \phi$ holds, i.e.\ it is 
possible that an agent knows $\phi$ and knows nothing beyond $\phi$.

The following example discusses the construction of a world $w$ that shows that $\onlyKnows{A} \neg C_G p$ is satisfiable (in line with item~8 of Theorem~\ref{thm:properties}), something that previous attempts at combining only knowing and common knowledge failed at, even though it should intuitively be the case.

\begin{example}\label{ex:cgp_satisfied}
	Consider the setting of \cref{ex:compl_notpostcompl_omegabiw}. We want to construct a world \world that satisfies $\onlyKnows{A} \neg C_G p$, where $G=\{A\}$. By the definition of the only knowing operator, we must have
	\vspace{-4.5mm}
	\begin{align*}
		{\agent}^{\world}=&\{\world'\in\WW^{\omega^2}\mid (C_Gp)^{\world'}=\Fa\}
		\\ \bar{\agent}^{\world}=&\{\world'\in\WW^{\omega^2}\mid (C_Gp)^{\world'}=\Tr\}.
	\end{align*}
	\vspace{-6mm}\\
	We first have to find the above sets.
	
	Notice that, by \cref{thm:resolvingmodaldepth}, we have $\possible{\agent}{\world}\cup\impossible{\agent}{\world}=\WW^{\omega^2}$, as supposed. By \cref{prop:respower}, \cref{thm:resolvingmodaldepth}, and the fact that any \biworld{\omega^2} is the extension of some \biworld{\omega}, we can reduce to finding the set of \biworlds{\omega} satisfying $C_Gp$. It is not hard to see that $v$ from \cref{ex:compl_notpostcompl_omegabiw} satisfies  $C_Gp$. Moreover, since $v$ is completed, by Corollary~\ref{corollary:completed_char}, $v$ has a unique extension $v'$ to depth $\omega^2$, and $v'$. In an analogous fashion, we can build another \biworld{\omega^2} $u'$ satisfying $C_Gp$ by considering the unique extension of a completed \biworld{\omega} $u:=(u_\alpha)_{\alpha<\omega}$ defined as
	\vspace{-1.5mm}
		\begin{equation*}
		u_\alpha:=\begin{cases}
			\{\emptyset\} & \text{if }\alpha=0
			\\ (u_0, \{v_{\alpha'}\}, \WW^{\alpha'}) & \text{if }\alpha=\alpha'+1
		\end{cases}
	\end{equation*}
	\vspace{-3mm}\\
	Intuitively, both $v'$ and $u'$ are worlds in which $p$ is common knowledge (for the only agent \agent), but $p$ is true in the objective world of $v'$ and false in the one of $u'$. In particular, in $u'$ the agent \agent is not truthful. 
	
	We claim\footnote{The claim can be proven by induction, and it can already be seen to hold true by writing down all the eighteen \biworlds{1}, and reasoning on the conditions that make $C_Gp$ satisfied. We omit this reflection for the sake of conciseness.} $v$ and $u$ are the only \biworlds{\omega} satisfying $C_Gp$. Hence, $A^w$ must be $\WW^{\omega^2}\setminus\{v',u'\}$ and $\bar A^w$ must be $\{v', u'\}$,
and we define $w:=(\{p\},\WW^{\omega^2}\setminus\{v',u'\},\{v', u'\})$.

Notice that by \cref{thm:O_satisfiable}, $w$ is unique up to change of objective world. In other words, the world defined as $w$ but with $\{\emptyset\}$ as objective world $w_0$ satisfies $\onlyKnows{A} \neg C_Gp$ too.
\end{example}

\ignore{
\section{Truthfulness and Introspection}\label{sec:TIworlds}
In this section we present a modification of the semantics of $\cocael$ that is based on the presupposition that agents are truthful and (both positively and negatively) introspective. Treating this as a presupposition of the logic amounts to treating it as common knowledge common knowledge that all agents satisfy these properties. In a similar way one could also incorporate only some of these properties into the logic, e.g.\ only positive and negative introspection for a formalization of belief rather than knowledge.

In this section we focus on the two-valued valuation and therefore work only with the elements of the Kripke structure $\K$, \ie the worlds. We start with a definition of truthfulness as well as positive and negative introspection:
\begin{definition}
    Let $w$ be a world. 
    \begin{compactitem}
     \item $w$ is called \emph{truthful} if for all $\agent\in\A$, $\world\in\evalLim{\agent}{\world}$;
     \item $w$ is called \emph{positively introspective} if for each $\agent\in\A$, for each $\world'\in\evalLim{\agent}{\world}$, $\evalLim{\agent}{\world'}\subseteq \evalLim{\agent}{\world}$;
     \item $w$ is called \emph{negatively introspective} if for each $\agent\in\A$, for each $\world'\in\evalLim{\agent}{\world}$, $ \evalLim{\agent}{\world}\subseteq \evalLim{\agent}{\world'}$.
    \end{compactitem}
\end{definition}

The following theorem establishes that the definitions of these three properties imply the expected characterization of these properties in terms of $\cocael$ formulas:

\begin{theorem}
	Let \world be a world, $\agent \in\A$ an agent, and $\phi \in \cocael$ a formula. Then the following statements hold:
	\begin{compactenum}
		\item If \world is truthful, then $\world$ satisfies  $\knows{A}\phi \Rightarrow \phi $.
		\item If \world is positively introspective, then $\world$ satisfies $\knows{A}\phi \Rightarrow \knows{A}\knows{A}\phi$.
		\item If \world is negatively introspective, then $\world$ satisfies $\neg \knows{A}\phi \Rightarrow \knows{A}\neg \knows{A}\phi$.
	\end{compactenum}
\end{theorem}
\extended{
\begin{proof}
		\begin{compactenum}
			\item Suppose \world is truthful. If $(\knows{A} \phi)^{\K,w}=\Tr$, then $\phi^{\K,\world'}=\Tr$ for all $\world'\in\evalLim{\agent}{\world}$. Since $w$ is truthful, $\world\in\evalLim{\agent}{\world}$. Hence, $\phi^{\K,\world}=\Tr$. 
			\item Suppose \world is positively introspective. If $ (\knows{A}\phi)^{\K,\world}=\Tr$, then for each $\world'\in\evalLim{\agent}{\world}$, $\phi^{\K,\world'}$. Let $\world'  \in\evalLim{\agent}{\world}$ and $\world'' \in \evalLim{\agent}{\world'}$. By positively introspection of \world,  $\world''\in\evalLim{\agent}{\world}$. Hence, $\phi^{\K,\world''}$. Since $\world'$ and $\world''$ are chosen arbitrarily, we get $(\knows{A} \knows{A} \phi)^{\K, \world}=\Tr$.
			\item Suppose \world is negatively introspective. If $(\neg \knows{A}\phi)^{\K,\world}=\Tr$, then there exists $\world'\in\evalLim{\agent}{\world}$, such that $\phi^{\K,\world'}=\Fa$. By negative introspection of \world, for each  $\world''\in\evalLim{\agent}{\world}$, $\world'\in\evalLim{\agent}{\world}\subseteq\evalLim{\agent}{\world''}$. Hence,  $(\knows{A} \neg \knows{A} \phi)^{\K,\world}$, as desired.
		\end{compactenum}
\end{proof}
}

The properties listed in \cref{thm:properties} hold true in the Kripke structure $\K$, whereas the truthfulness, positive introspection, and negative introspection of every agent are not always true for a world of $\K$. We would like to restrict the Kripke structure $\K$ to a substructure $\K_\mathit{TI}$ which contains only worlds in which all agents are truthfull, positively introspective, and negatively introspective.

\begin{definition}[Truthful-introspective world] 
	A world is called \emph{truthful-introspective} if it is truthful, positively introspective, and negatively introspective.
\end{definition}

\begin{definition}[Recursively truthful-introspective world] \label{def:recur_TI_worlds}
	A world $w$ is called \emph{recursively truthful-introspective} if $w$ is truthful-introspective and all the worlds that are reachable from $w$ through the union $\bigcup_{A\in\A} R_A$ of all accessibility relations are truthful-introspective.
\end{definition}

We now define $\KripkeStructureTI:=(U^{\mathit{TI}}, (R^\mathit{TI}_A)_{A\in\A})$ to be the Kripke substructure of $\K$, where the underlying world set $U^\mathit{TI}$ is the set of all recursively truthful-introspective worlds, and the accessibility relations are just the ones coming from  $\K_\mathit{TI}$ being a substructure, \ie for all $\agent\in\A$, $R^\mathit{TI}_A:=R_A\cap (U^\mathit{TI}\times U^\mathit{TI})$. 	

Now we can define a modified entailment relation that takes into account truthfulness and introspection:

\begin{definition}
	For $\phi\in \cocael$ and $\Gamma\subseteq\cocael$, we write $\Gamma \entailsti \phi$ if $\phi^{\KripkeStructureTI, \world}=\Tr$ for every world $\world$  such that $\psi^{\KripkeStructureTI,\world}=\Tr$ for all $\psi\in\Gamma$.
\end{definition}

Note that incorporating truthfulness and introspection into the logic in this way modifies the meaning of the modalities $M_A$ and $\onlyKnows{A} $, because knowing at most $\phi$ means that any $\psi$ that one knows must be entailed by $\phi$ and the entailment relation is different now that truthfulness and introspection are hard-coded into the logic: $\psi$ may be entailed by $\phi$ together with truthfulness and introspection even though it was not entailed by $\phi$ itself in the original semantics of $\cocael$ without truthfulness and introspection. In other words, when $\phi \entailsti \psi$ but $\phi \not \models \psi$, then $M_A \phi \not \entailsti \neg \knows{A} \phi$ even though $M_A \phi \models \neg \knows{A} \phi$.

\marcos{If space is an issue, we could remove the following two propositions and the paragraphs preceding them from the paper.}

The following proposition ensures that for a world $w$ in $\KripkeStructureTI$, the worlds that are $R^\mathit{TI}_A$-accessible from $w$ are precisely the same as the worlds that are $R_A$-accessible from~$w$:

\begin{proposition}\label{prop:good_worlds}
	Let $\world\in U$, $\agent\in\A$, and $\world'\in\evalLim{\agent}{\world}\cap U$. If $\world$ is recursively truthful-introspective, then so is $\world'$.
\end{proposition}
\extended{
\begin{proof}
	Since $\world'$ is reachable by $\world$ in $\K$, every world reachable by $\world'$ is also reachable by $\world$. Hence, every world reachable by $\world'$ in $\K$ is truthful-introspective, as desired.
\end{proof}
}

As in standard Kripke semantics for epistemic logic, truthfulness together with positive and negative introspection means that the accessibility relations are equivalence relations:


\begin{proposition}
	The accessibility relations of $\K_\mathit{TI}$ are equivalence relations.
\end{proposition}
\extended{
\begin{proof}
	Let $\world, u, v\in U^\mathit{TI}$, and $\agent\in\A$.
	\begin{compactitem}
		\item \underline{Reflexivity}: by truthfulness of $\world$,  $w\in\evalLim{\agent}{\world}$, \ie  $(\world, \world)\in R^\mathit{TI}_\agent$.
		\item \underline{Simmetry}: suppose $(\world,\otherworld)\in R^\mathit{TI}_\agent$. By negative introspection and truthfulness of $\world$, $\world\in\evalLim{\agent}{\world}\subseteq \evalLim{\agent}{v}$. Hence, $(v,\world)\in R^\mathit{TI}_\agent$. 
		\item \underline{Transitivity}: suppose $(u, v), (v,w)\in R^\mathit{TI}_\agent$. This means $v\in\evalLim{\agent}{u}$ and $w\in\evalLim{\agent}{v}$. By positive introspection of $u$, $\evalLim{\agent}{v}\subseteq \evalLim{\agent}{u}$. Hence, $w\in\evalLim{\agent}{u}$, \ie $(u,w)\in R^\mathit{TI}_\agent$, as desired.
	\end{compactitem}
\end{proof}
}


The following theorem establishes that the properties that we established for $\models$ in Theorem~\ref{thm:properties} also hold for $\entailsti$:

\marcos{I'm pretty sure that (O) also holds for $\entailsti$, so it should be included for completeness. However, as far as I can see, proving (O) for $\entailsti$ requires defining TI-biworlds (in which truthfulness and introspection are built in from the start) and reproving a lot of the results about biworlds, the three-valued valuation and its relation to the two-valued valuation for this semantics with truthfulness and introspection built in from the start. Do you see any possibilities for shortcuts? If not, how should we proceed about this issue?}

\begin{theorem}
		Let $\phi, \psi \in \cocael$ be two formulas, $\Gamma, \Gamma'\subset \cocael$ two sets of formulas, $\agent \in \A$ an agent, and $G\subseteq\A$ a non-empty set of agents. Then, the following properties hold: 
	\begin{compactenum}
		\item \label{PropTI} (Prop) For each propositional tautology $\phi$, we have $\entailsti \phi$.
		\item \label{MPTI} (MP) $\phi, \phi \Rightarrow \psi \entailsti \psi$.
		\item  (Mono) If $\Gamma \entailsti \phi$, then $\Gamma, \psi \entailsti \phi$. 
		\item \label{CutTI} (Cut) If $\Gamma \entailsti \phi $ and $\Gamma', \phi \entailsti \psi$, then $\Gamma, \Gamma' \entailsti\psi$.
		\item \label{KTI} (K) $\entailsti (\knows{A}(\phi\Rightarrow\psi)\land \knows{A}\phi)\Rightarrow \knows{A}\psi$.
		\item \label{NecTI} (Nec) If $\entailsti\phi$, then $\entailsti \knows{A}\phi$.
		\item \label{MTI} (M) If $\phi\not\entailsti\psi$, then $ M_A\phi \entailsti \neg \knows{A}\psi$.
		\item \label{fixed_point_axiomTI} (Fixed point axiom) $\entailsti C_G \phi \iff E_G(\phi\wedge C_G\phi)$.
		\item \label{Induction_ruleTI} (Induction rule) If $ \phi\entailsti E_G(\phi\wedge\psi)$, then $ \phi\entailsti C_G\psi$.
	\end{compactenum}
\end{theorem}
\extended{
	\begin{proof}
		\begin{compactenum}
			\item Clear by \cref{Prop} of \cref{thm:properties} and the fact that $\KripkeStructureTI$ is a substructure of $\K$.
			\item Analogous to the proof of \cref{MP} of \cref{thm:properties}.
			\item Clear by the definition of logical consequence relation.
			\item Analogous to the proof of \cref{MP} of \cref{thm:properties}.
			\item Clear by \cref{K} of \cref{thm:properties} and the fact that $\K_\mathit{TI}$ is a substructure of $\K$.
			\item Let $\world\in U^\mathit{TI}$. By definition of $\KripkeStructureTI$, $(\knows{A}\phi)^{\K_\mathit{TI},\world}=(\knows{A}\phi)^{\K,\world}=glb_{{\leq_t}} \{\phi^{\K,w'}|w'\in R_A^\world\} $. Since $\phi^{\K_\mathit{TI},\world'}=\Tr$ for any $\world'\in U^\mathit{TI}$ by hypothesis, and $R_A^\world\subseteq U^\mathit{TI}$, we have $(\knows{A}\phi)^{\K_\mathit{TI},\world}=\Tr$, as desired.
			\item  Analogous to the proof of \cref{M} of \cref{thm:properties}. 
			\item Clear by \cref{fixed_point_axiom} of \cref{thm:properties} and the fact that $\K_\mathit{TI}$ is a substructure of $\K$.
			\item By \cref{prop:good_worlds}, the proof is analogous to the one of \cref{Induction_rule} in \cref{thm:properties}.
		\end{compactenum}
	\end{proof}
}

}

\section{Truthfulness and Introspection} \label{sec:TIworlds}
In the previous three sections we have defined and described the construction of a structure of worlds that is rich enough to allow to formally define the semantics of only knowing and common knowledge in a way that matches basic intuitions about these logical modalities in a precisely specified way. A major challenge in designing this construction was to ensure that we have enough worlds to describe all logically possible epistemic states. For this reason, we decided to keep the construction as general as possible, \ie not to unnecessarily limit the set of worlds.

However, there are certain properties of the knowledge modality that are often taken for granted in epistemic logic and that require limiting the set of worlds. In particular, the following properties are often assumed to hold:

\begin{compactitem}
		\item Truthfulness: $\knows{A}\phi \Rightarrow \phi$ is satisfied in every world for every agent $A$.
		\item Positive introspection: $\knows{A}\phi \Rightarrow \knows{A} \knows{A}\phi$ is satisfied in every world for every agent $A$.
		\item Negative introspection: $\neg \knows{A}\phi \Rightarrow \knows{A}\neg \knows{A}\phi$ is satisfied in every world for every agent $A$.
\end{compactitem}

While all three of these properties are commonly assumed in epistemic logic, there are specific issues about ensuring the first or the third in a logic with a modality $\onlyKnows{A}  \phi$ for only knowing or a modality $M_A \phi$ for knowing at most.

In the case of truthfulness, there is an issue concerning the formula $\onlyKnows{A}  (\knows{A} p \lor \knows{A} q)$. 
Given the definition of the modality $\onlyKnows{A}$, this entails $\knows{A} (\knows{A} p \lor \knows{A} q)$, which by truthfulness entails $\knows{A} p \lor \knows{A} q$, so either $\knows{A} p$ or $\knows{A} q$ has to be true. But since $A$ only knows $\knows{A} p \lor \knows{A} q$ and since $\knows{A} p \lor \knows{A} q$ entails neither $p$ nor $q$, neither $p$ nor $q$ can be known, a contradiction. Thus $\onlyKnows{A}  (\knows{A} p \lor \knows{A} q)$ cannot be satisfiable in a logic with truthfulness, which means that principle~(O) from Theorem~\ref{thm:properties} cannot hold in such a logic (assuming principles (Prop), (Cut) and (M) do hold).

In the case of negative introspection, a more severe problem arises. Suppose $M_A q$ is true in some world $w$. Since $q$ does not entail $p$, by (M) this should entail that $\neg \knows{A} p$ is true in $w$, so by negative introspection, $\knows{A} \neg \knows{A} p$ is true in $w$. But since $q$ does not entail $\neg \knows{A} p$, principle (M) also allows us to conclude that $\neg \knows{A} \neg \knows{A} p$ is true in $w$, a contradiction. Thus $M_A q$ is not satisfiable, and similarly no formula of the form $M_A \phi$ or $\onlyKnows{A}  \phi$ is satisfiable for any satisfiable $\phi$. It should be stressed that this problem is not caused by our semantic approach to only knowing, but is a direct consequence of basic properties that only knowing has been assumed to satisfy also in other papers.
\begin{remark} \label{remark:wrong}
	A reader familiar with the literature on only knowing, of which we give an overview in Section~\ref{sec:relatedWork}, may wonder why this problem was not identified in previous papers, e.g.\ \citet{ai/BelleL15}, who define a semantics for only knowing in a logic with negative introspection. What the authors did not realize is that by enforcing negative introspection in their logic, they actually made all statements of the form $\onlyKnows{A}  \phi$ unsatisfiable (for any satisfiable formula $\phi$). They wrongly claim on page~5 that $O_i (p \land C p)$ is satisfiable, but the alleged proof is wrong. If one defines $V^1  =\{w \mid p,q \in w, w \in \mathcal{W}\}$, $V^{k+1} = \{(w,V^k,\dots,V^k) \mid w \in V^1\}$, $f'(i,k) = V^k$ and $w' = \{p\}$, then one can easily see that $f' \notin f_i^{w'}$ and $f',w' \models p \land C p$, contradicting their claim that $f,w \models O_i (p \land Cp)$.
\end{remark}
 In order to avoid this problem, one would need to make use of autoepistemic logic \mycite{AEL}, or some multi-agent version thereof \cite{ijcai/Lakemeyer93,wocfai/PermpoontanalarpJ95,VlaeminckVBD/KR2012,ijcai/HertumCBD16} in the definition of the semantics of $M_A \phi$: Intuitively, $M_A \phi$ should be true in a world in which all worlds that satisfy all formulas entailed by the autoepistemic theory $\{\phi\}$ are accessible. 
 This would give rise to a logic in which a variant of principle~(M) with a negated autoepistemic entailment in the place of the negated entailment is satisfied. We leave it to future work to develop the details of such a theory and investigate whether it behaves as intended.

Given that it is somewhat problematic to incorporate truthfulness and negative introspection in a logic with only knowing, whereas no similar problems arise for positive introspection, we describe how the semantic framework from the previous sections can be used to define a logic of only knowing and common knowledge in which positive introspection is ensured.

\begin{definition}
 A world $w$ is called \emph{positively introspective (PI)} if for each $\agent\in\A$ and for any worlds 
   $w', w''$, $w'' R_A w' R_A w$, implies $w'' R_A w$. A world $w$ is called \emph{recursively PI} if $w$ is positively introspective and all the worlds that are reachable from $w$ through the union $\bigcup_{A\in\A} R_A$ of all accessibility relations are positively introspective.
\end{definition}

We now define $\K_{\mathit{PI}}:=(U^{\mathit{PI}}, (R^\mathit{PI}_A)_{A\in\A})$ to be the Kripke substructure of $\K$, where the underlying world set $U^\mathit{PI}$ is the set of all recursively PI worlds, and the accessibility relations are just the ones coming from  $\K_\mathit{PI}$ being a substructure, \ie for all $\agent\in\A$, $R^\mathit{PI}_A:=R_A\cap (U^\mathit{PI}\times U^\mathit{PI})$. 	

Now, we can define a modified entailment relation that takes into account positive introspection:

\begin{definition}
	For $\phi\in \cocael$ and $\Gamma\subseteq\cocael$, we write $\Gamma \entailsti \phi$ if $\phi^{\K_{\mathit{PI}}, \world}=\Tr$ for every world $\world$  such that $\psi^{\K_{\mathit{PI}},\world}=\Tr$ for all $\psi\in\Gamma$.
\end{definition}

Note that incorporating 
positive introspection into the logic in this way modifies the meaning of the modalities $M_A$ and $\onlyKnows{A} $, because knowing at most $\phi$ means that any $\psi$ that one knows must be entailed by $\phi$ and the entailment relation is different now that truthfulness and introspection are hard-coded into the logic: $\psi$ may be entailed by $\phi$ together with truthfulness and introspection even though it was not entailed by $\phi$ itself in the original semantics of $\cocael$ without truthfulness and introspection. In other words, when $\phi \entailsti \psi$ but $\phi \not \models \psi$, then $M_A \phi \not \entailsti \neg \knows{A} \phi$ even though $M_A \phi \models \neg \knows{A} \phi$.

The following theorem establishes that positive introspection does indeed hold in this logic and that the properties that we established for $\models$ in Theorem~\ref{thm:properties} also hold for $\entailsti$. 

\begin{theorem}
\label{thm:PI}
		Let $\phi, \psi \in \cocael$ be two formulas, $\Gamma, \Gamma'\subset \cocael$ two sets of formulas, $\agent \in \A$ an agent, and $G\subseteq\A$ a non-empty set of agents. Then, the following properties hold: 
	\begin{compactenum}
        \item[1-10.] All properties mentioned in Theorem~\ref{thm:properties} with $\models$ replaced by $\entailsti$.
        \item[11.] (PI) $\entailsti \knows{A}\phi \Rightarrow \knows{A} \knows{A}\phi$ \bart{THIS SHOULD BE LAST. To make numbering the same}
	\end{compactenum}
\end{theorem}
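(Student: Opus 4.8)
The plan is to split the statement into three parts: properties 1--7, 9 and 10 will be obtained by transporting the proofs behind \cref{thm:properties} verbatim to the substructure $\K_{\mathit{PI}}$; property 11, positive introspection, will follow from transitivity of the accessibility relations of $\K_{\mathit{PI}}$; and property 8 (namely (O)) is the only genuinely new ingredient, requiring a rerun of the biworld construction with positive introspection built in from the start.

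The first step I would carry out is a substructure lemma: if $w$ is recursively PI and $(w,w')\in R_A$, then $w'$ is recursively PI, because every world reachable from $w'$ is also reachable from $w$ and hence PI. Consequently, for every $w\in U^{\mathit{PI}}$ the set of $R_A^{\mathit{PI}}$-successors of $w$ is exactly the set of $R_A$-successors of $w$, so the $\knows{A}$-, $E_G$- and $C_G$-clauses of the valuation behave identically over $\K$ and over $\K_{\mathit{PI}}$ at recursively PI worlds; only the $M_A$-clause, which ranges over the non-accessible worlds, can differ, since $\K$ has strictly more worlds outside $U^{\mathit{PI}}$. Using this, the relations $R_A^{\mathit{PI}}$ are transitive: given $w\,R_A^{\mathit{PI}}\,v$ and $v\,R_A^{\mathit{PI}}\,u$, the world $u$ is reachable from the recursively PI world $w$, hence is positively introspective, and invoking its defining property applied to the pair of worlds $w$ and $v$ gives $(w,u)\in R_A$, so $(w,u)\in R_A^{\mathit{PI}}$. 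Property (PI) is then the standard argument: if $(\knows{A}\phi)^{\K_{\mathit{PI}},w}=\Tr$ then $\phi$ holds at every $R_A^{\mathit{PI}}$-successor of $w$, and transitivity propagates this to every $R_A^{\mathit{PI}}$-successor of every $R_A^{\mathit{PI}}$-successor of $w$, i.e.\ $(\knows{A}\knows{A}\phi)^{\K_{\mathit{PI}},w}=\Tr$.

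For (Prop), (MP), (Mono), (Cut), (K), (Nec), (Fixed point axiom) and (Induction rule) I would reread the corresponding parts of the proof of \cref{thm:properties} and observe that each is \emph{local}: it manipulates only truth values at a world and at its accessible worlds, never the ambient universe, so it remains valid after replacing $\K$, $\models$, $U$ and $R_A^w$ by $\K_{\mathit{PI}}$, $\entailsti$, $U^{\mathit{PI}}$ and $R_A^{\mathit{PI},w}$, the substructure lemma guaranteeing that the reachability steps in (Nec) and (Induction rule) never leave $U^{\mathit{PI}}$. The only case needing explicit care is (M): from $\phi\not\entailsti\psi$ one extracts a recursively PI world $w^*$ with $\phi^{\K_{\mathit{PI}},w^*}=\Tr$ and $\psi^{\K_{\mathit{PI}},w^*}=\Fa$; unwinding the $\K_{\mathit{PI}}$-semantics of $M_A$, $(M_A\phi)^{\K_{\mathit{PI}},w}=\Tr$ forces $R_A^w$ to contain every recursively PI world satisfying $\phi$, in particular $w^*$, whence $(\knows{A}\psi)^{\K_{\mathit{PI}},w}=\Fa$ and $(\neg\knows{A}\psi)^{\K_{\mathit{PI}},w}=\Tr$.

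The hard part is (O), since the world produced by \cref{thm:O_satisfiable} need not be positively introspective, so that proof does not transfer. Here I would redo \cref{sec:definitions}--\cref{sec:vals} for the restricted class of \emph{PI-biworlds}, i.e.\ (pre)biworlds in which, at each successor level, $\possible{\agent}{\world}$ is required to be closed, in the appropriate sense, along the induced accessibility relation; then re-establish the analogues of \cref{thm:four_properties}, \cref{thm:resolvingmodaldepth} and \cref{thm:valcoinc} for this class, and finally rerun the proof of \cref{thm:O_satisfiable}, taking $\possible{\agent}{w}$ to consist of the $\omega^2$-restrictions of the completed PI-worlds satisfying $\phi$; this yields a recursively PI world $w$ with $(\onlyKnows{A}\phi)^{\K_{\mathit{PI}},w}=\Tr$, positive introspection holding by construction. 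I expect the main obstacle to be re-proving \emph{completability} in this setting --- that every PI-biworld extends to a \emph{completed} PI-biworld --- because the completing extension built in \cref{lem:completability} must be re-examined to verify that it respects the PI closure condition, and it is conceivable that this condition has to be phrased with some foresight at limit levels (via the $\evalLim{\agent}{\world}$ sets) for the transfinite induction to go through.
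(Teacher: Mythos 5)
For items 1--7, 9, 10 and for (PI) your route coincides with the paper's: the paper's proof simply states that all properties apart from (PI) and (O) are established analogously to \cref{thm:properties}, and its argument for (PI) is exactly yours (it applies the defining clause of a positively introspective world directly rather than first isolating transitivity of $R_A^{\mathit{PI}}$, but the content is identical). Your substructure lemma --- successors of a recursively PI world are again recursively PI, so the $\knows{A}$-, $E_G$- and $C_G$-clauses are unaffected while the $M_A$-clause changes meaning --- is the same observation the paper makes in the remark following the definition of $\entailsti$.

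The divergence, and the gap, is in property 8, i.e.\ (O). The paper does \emph{not} rebuild the biworld machinery with introspection baked in; it constructs the witnessing world inside the existing framework as $w=(v,(W)_{A\in\A},(\overline{W})_{A\in\A})$ with
\[ W=\{\,u|_{\omega^2}\mid u\in U^{\mathit{PI}}\text{ and }(\phi\land\knows{A}\phi)^{\K,u}=\Tr\,\},\qquad \overline{W}=\WW^{\omega^2}\setminus W, \]
and then checks that this $w$ is positively introspective: if $w'$ is accessible from $w$, then $(\knows{A}\phi)^{\K,w'}=\Tr$, so every $w''$ accessible from $w'$ satisfies $\phi$, and positive introspection of $w'$ (which lies in $U^{\mathit{PI}}$) yields $(\knows{A}\knows{A}\phi)^{\K,w'}=\Tr$, hence $(\knows{A}\phi)^{\K,w''}=\Tr$, hence $w''|_{\omega^2}\in W=A^w$. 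The extra conjunct $\knows{A}\phi$ is the idea your sketch is missing: if $A^w$ consists of the restrictions of the completed PI-worlds satisfying $\phi$ alone, as you propose, the resulting $w$ is \emph{not} positively introspective --- a PI-world satisfying $\phi$ can access a world falsifying $\phi$, which would then have to be accessible from $w$ but is not --- so your claim that positive introspection ``holds by construction'' is unjustified, and the same defect would resurface inside your proposed PI-biworld framework, since that choice of $A^w$ violates any reasonable closure condition you would impose there. The wholesale re-proof of \cref{thm:four_properties}, \cref{thm:resolvingmodaldepth} and \cref{thm:valcoinc} for PI-biworlds is therefore unnecessary, and as sketched it would not deliver the result anyway. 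One caveat applies to either route: shrinking $A^w$ to the $(\phi\land\knows{A}\phi)$-worlds makes $\knows{A}\phi$ and introspection easy but makes the $M_A\phi$ conjunct of $\onlyKnows{A}\phi$ delicate, since a world of $U^{\mathit{PI}}$ may satisfy $\phi$ without satisfying $\knows{A}\phi$ and is then inaccessible from $w$; a fully rigorous proof of (O) for $\entailsti$ must address this point explicitly, and neither your sketch nor the paper's one-line assertion that $(M_A\phi)^{\K_{\mathit{PI}},w}=\Tr$ ``follows directly'' does so.
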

\extended{
	\begin{proof}
	 All properties apart from (PI) and (O) can be established in a way analogous to the proof of Theorem~\ref{thm:properties}. 
	 \begin{compactitem}
	  \item (PI) Suppose \world is positively introspective and suppose $ (\knows{A}\phi)^{\K_{\mathit{PI}},\world}=\Tr$. Suppose $w'' R_A w' R_A w$. Since \world is positively introspective, it follows that $w'' R_A w$, \ie that 
	  $ \phi^{\K_{\mathit{PI}},\world''}=\Tr$. Since $w''$ was arbitrary, we can conclude that $(\knows{A} \phi)^{\K_{\mathit{PI}},\world'}=\Tr$. Since $w'$ was arbitrary, this in turn implies that $(\knows{A} \knows{A} \phi)^{\K_{\mathit{PI}},\world}=\Tr$, as required.
	  \item (O') For establishing this result, we need a result similar to the first statement of Theorem~\ref{thm:O_satisfiable}: For every agent $\agent$ and every formula $\phi \in \cocael$, 
	  there is a world $w \in U^\mathit{PI}$ such that $(\onlyKnows{A}  \phi)^{\K_{\mathit{PI}},w} = \Tr$.
	  
	  Let $v$ be an arbitrary $0$-biworld.
		\begin{align*}
			W &:= \{ u|_{\omega^2} \mid u \in U^\mathit{PI} \textnormal{ and } (\phi \land \knows{A} \phi)^{\K,u} = \Tr \},\\
			\overline{W} &:= \WW^{\omega^2} \setminus W,\\
			w &:= (v,(W)_{A \in \A},(\overline{W})_{A \in \A}).
		\end{align*}
		Clearly $w$ is a world. In order to show that $w \in U^\mathit{PI}$, we need to show that $w$ is recursively PI. From the definition of $w$ it is easy to see that for this, it is enough to show that $w$ is positively introspective.
		
		So suppose $A \in \A$ and $w'' R_A w' R_A w$. We need to show that $w'' R_A w$. 
		The fact that $w' R_A w$ means that $w'|_{\omega^2} \in W$, \ie $(\knows{A} \phi)^{\K_{\mathit{PI}},w'} = \Tr$. Since $w'' R_A w'$, it follows that $\phi^{\K_{\mathit{PI}},w''} = \Tr$. Since $w'$ is positively introspective, the fact that $(\knows{A} \phi)^{\K_{\mathit{PI}},w'} = \Tr$ implies that $(\knows{A} \knows{A} \phi)^{\K_{\mathit{PI}},w'} = \Tr$, \ie that $(\knows{A} \phi)^{\K_{\mathit{PI}},w''} = \Tr$. So $(\phi \land \knows{A} \phi)^{\K_{\mathit{PI}},w''} = \Tr$, \ie $w''|_{\omega^2} \in W = A^w$, \ie $w'' R_A w$, as required.
		
		It follows directly from the definition of $w$ 
		that $(\knows{A} \phi)^{\K_{\mathit{PI}},v'} = \Tr$ and that $(M_A \phi)^{\K_{\mathit{PI}},v'} = \Tr$, \ie that $(\onlyKnows{A}  \phi)^{\K_{\mathit{PI}},v'} = \Tr$.
	 \end{compactitem}

	\end{proof}
}


Given that we could develop a theory with positive introspection, one may wonder what happens if one tries to similarly add negative introspection and/or truthfulness. Due to the problems with negative introspection described above, naively adding negative introspection in this way will yield a logic in which $M_A \phi$ is not satisfiable for any $\phi$. Adding truthfulness, on the other hand, does not cause such problems. Indeed, truthfulness and positive introspection can meaningfully be added together in a way similar to how positive introspection was added in the above definitions, yielding an entailment relation $\models_\mathit{TPI}$. Apart from (O), the properties of Theorem~\ref{thm:PI} will still hold. We conjecture that the following weakening of (O') holds in this context:

(O') If $\phi$ is a formula not involving a modality with subscript $A$, then $\onlyKnows{A}  \phi \not \models_\mathit{TPI} \bot$.

The proof of this conjecture is left to future work.

\section{Related Work}\label{sec:relatedWork}

In this paper, we have studied the interplay of common knowledge and only knowing.
 The former concept is quite well-known and has been extensively studied \cite{FaginHMV95,book/MeyerH95} since its first mentions in the philosophical \cite{book/Lewis69}, and the mathematical literature \cite{as/Aumann76}. 
 The latter concept is younger and has been studied intensively more recently. 
 
 \citet{ai/Levesque90} was among the first to introduce the notion of only knowing\footnote{There are several closely-related notions, like ignorance  \cite{nato/HalpernM85,aaai/Konolige82}, minimal knowledge \cite{sLogica/HoekT02}, and total knowledge \cite{aaai/Pratt-Hartmann00}.} by presenting a single-agent logic of belief extended with a novel operator $O$ expressing that the agent's beliefs are exactly the ones implied by the knowledge base and nothing more.
  He intended his logic of only knowing to capture certain types of non-monotonic reasoning patterns, like autoepistemic logic (AEL)~\cite{mo85}. 
 In the 1990s and early 2000s,  single-agent only knowing was successfully studied and implemented \cite{ai/HalpernL95,ai/Rosati00,LevesqueL00}, and \citet{aaai/LakemeyerL05} further revealed its potential by showing that it is also possible to capture default logic (DL) \cite{ai/Reiter80} and a variant of AEL proposed by \citet{Konolige88}. 
 
\marcos{For journal version, we need to reconsider the following paragraph, which seems unplausible after the recent discovery of a big problem in combining negative introspection with only knowing:

 Only knowing does not present particular problems in its formalization if the agent is assumed to be positive and negative introspective. Analogously, assuming mutual introspection for the multi-agent case would allow to avoid any major issue. However, while this assumption about introspection is sensible for single-agent only knowing, it is rather unintuitive for most applications of multi-agent only knowing. Unfortunately, removing the assumption of mutual introspection brings  to the table a new set of issues.}
 
\citet{aaai/Halpern93} and  \citet{ijcai/Lakemeyer93} were among the first to extend Levesque's only knowing to a multi-agent setting. In a joint publication, they (\citeyear{logcom/HalpernL01}) improved upon their independent works with an axiom system satisfying all the desired properties for only knowing. Nevertheless, the proposed axiomatization forces to include directly in the language a validity operator and the resulting semantics is not ``as natural as we might like'', according to the authors. 
 In the early 2000s, first \citet{aiml/Waaler04} alone and then together with Solhaug \cite{tark/WaalerS05} tried another route to generalize Levesque's axioms, without encoding the notion of validity into the language itself. 
 Yet, once again these models and the logic itself feel complex and unnatural.
 
 An in-depth analysis of the issues of all the previously mentioned works on multi-agent only knowing is provided by Belle and Lakemeyer ~(\citeyear{corr/BelleL10,ai/BelleL15}), together with a natural way to avoid such problems. 
 Belle and Lakemeyer proposed to use models limited to a finite depth $k$, \ie models at which beliefs can be nested at most $k$ times, by introducing the concept of $k$-structure to represent an agent's epistemic state. In this way, they successfully extend Levesque's logic, by keeping the original idea of Levesque's worlds and generalizing its features at the same time.
 However, they do not take into account common knowledge. 
 
 \citet{ijcai/AucherB15} proposed a novel formulation comprising both common knowledge and only knowing at level $k$. To achieve this result, they make use of the concept of so-called $k{+}1$-canonical formulas  \cite{jphil/Moss07}, written as conjunctions of a $0$-canonical formula and $k$-canonical formulas nested in knowledge and common knowledge operators. Even though the proposed pointed epistemic models can be fully characterized up to modal depth $k$ by a $k$-canonical formula, such representation at depth $k$ might feel slightly unnatural, as it characterizes the knowledge of an agent only up to level $k$, but it also determines the common knowledge of a group of agents, which is infinitary in nature. 
 When disregarding the conjunction of the common knowledge operators in such formulas, one can see a correspondence between our \biworld{k}s and the proposed $k$-canonical formulas. 
 However, there is an important limitation to this work. 
 Namely, while our $\onlyKnows{\agent}$ operator fully captures the knowledge and ignorance of an agent $\agent$ at any depth, the $\onlyKnows{\agent}^n$ operator proposed by \citet{ijcai/AucherB15} expresses only knowing for an agent $\agent$ just up to level $n$ in the sense that an expression of the form $\onlyKnows{\agent}^1 \phi$ should be read as ``If I disregard all my knowledge deeper than level one, I only know $\phi$''. 
 Formally, as Aucher and Belle pointed out, for any two modal depths $n>m$, if agent $\agent$ only knows $\phi$ at depth $n$, then $\agent$ only knows $\phi$ at depth $m$, \ie  $\onlyKnows{\agent}^n \to \onlyKnows{\agent}^m$, but the reverse implication might not hold.  
 
The limitation of the only knowing operator to a finite depth is overcome in the same year by  \citet{ijcai/BelleL15}.  
They devised an alternative approach to bring common knowledge and only knowing together, the semantic structures of which are close to our $\omega$-biworlds (except that there is no representation of the set $\evalLimBar{A}{\omega}$). 
However, as explained in Remark~\ref{remark:wrong}, due to hard-coding negative introspection in their logic, all non-trivial formulas of the form $O_A\phi$ are unsatisifiable, which goes against what we expressed as Property \ref{O} in \cref{thm:properties}. Hence, such a semantics does not properly capture the intended meaning of the only-knowing operator.


\citet{phd/VanHertum16} observed that in \cite{ijcai/BelleL15}, the formula $\onlyKnows{A}\neg Cp$ cannot be satisfied, and this holds even if negative introspection is dropped (to avoid the problem mentioned in Remark~\ref{remark:wrong}). \citet{phd/VanHertum16} attempted to overcome this problem by proposing four different semantics, but he does not entirely fullfill his purpose. In more detail, two of the proposed semantics (Section 6.3 of \cite{phd/VanHertum16}) do not allow for arbitrary nesting of the only knowing operator and thus they do not cover the whole $\cocael$ language. A third semantics (Section 6.2.2 of \cite{phd/VanHertum16}) seems to solve the problem regarding the satisfiability of formulas like $\onlyKnows{A} \neg Cp$, but it is not precision-monotonic.
This is rather problematic: in a given $\mu$-world in such semantics, $O_A \phi$ might be true, but if we give more precise information (extending it to depth $\mu+1$), the formula might become false. Hence, one could only define what an agent only knows ``up to a certain depth'', as in \cite{ijcai/AucherB15}. \marcos{I feel like this problem is a problem inside of Pieter's theoretical framework, but it should be possible to state a problem caused by it that is accessible to people who do not know about the details of Pieter's approach like the precision ordering between worlds.}
Finally, the semantics presented in Section 6.2.1 makes use of $\lambda$-canonical Kripke structures, for any fixed limit ordinal $\lambda$, to define a two-valued valuation for $\cocael$.  
In particular, if $\lambda=\omega$, then the proposed semantics corresponds to the semantics of \citet{ijcai/BelleL15} with some minor adjustments. 
 The hope of the author was that by choosing a large enough limit ordinal $\lambda$, the satisfiability issue would be solved. Unfortunately, there exists formulas, such as $\onlyKnows{A} \neg \onlyKnows{A} \bot$ formalizing the statement ``all \agent knows is that their knowledge is consistent'', that are not satisfiable in any $\lambda$-canonical Kripke structure. Even though the formula just mentioned may seem like a corner case, its satisfiability guarantees something quite desirable, anmely that our semantic structures (the biworlds) are rich enough to allow the agents to identify precisely those worlds in which they have consistent knowledg
 As shown in Theorem \ref{thm:O_satisfiable}, such satisfiability problems do not occur with our definition of worlds. 
 
 \section{Conclusion}\label{sec:conclusion}
 
 We defined a multi-agent epistemic logic with common knowledge and only knowing operators, which successfully encodes these notions under the same framework. 
 
First, we introduced the novel concept of \biworld{\ordinal} for countable ordinals \ordinal, which approximates not only the worlds that an agent deems possible, but also those deemed impossible. This duality proved to be fundamental to successfully  deal with the only knowing operator in a multi-agent setting.
Moreover, we have shown that the proposed new definitions are indeed sensible, as they satisfy the properties one would expect (\cref{thm:four_properties}).
 
Second, we defined the language $\cocael$, extending propositional logic with the modal operators $\knows{A}$, $M_A$, $E_G$, and $C_G$, and a three-valued model semantics for it. Furthermore, we defined a canonical Kripke structure over completed \biworlds{\omega^2{+}1}, and the two-valued semantics obtained from it is shown to coincide with the model semantics. 
This allowed us to prove several desirable properties (\cref{thm:properties}) the resulting logic satisfies. In particular, we showed that for any formula $\varphi$, there is a unique state-of-mind of a given agent in which they only know $\varphi$.

Finally, we have considered how our framework can be extended to satisfy properties like truthfulness, positive introspection and negative introspection. For positive introspection we have shown some positive results, whereas for truthfulness and negative introspection, we have identified certain problems that arise when combining them with only knowing. We have motivated the need for further research related to these problems.

Another line of future work that we envisage is to generalize the construction of biworlds so that it becomes applicable to other areas of research. More concretely, this amounts to the construction of a set-theoretic universe in which there exists a universal set, similarly as in topological set theory. If one takes this alternative set theory as one's metatheory, the incorrect definition from the introduction of this paper could very easily be turned into a correct definition.

\conf{
  \newpage
 \todo{THIS COMMENT SHOULD BE AT THE START OF PAGE  10}
}
\section*{Acknowledgements}
We are grateful to Marc Denecker and Pieter Van Hertum for the fruitful discussions and feedback on earlier versions of this work. 
This work was partially supported by  Fonds Wetenschappelijk Onderzoek -- Vlaanderen (project G0B2221N) and by the Flemish Government under the ``Onderzoeksprogramma Artificiële Intelligentie (AI) Vlaanderen''

\bibliographystyle{kr}
\bibliography{idp-latex/krrlib}

\end{document}